\def\BibTeX{{\rm B\kern-.05em{\sc i\kern-.025em b}\kern-.08em
    T\kern-.1667em\lower.7ex\hbox{E}\kern-.125emX}}
\newtheorem{theorem}{Theorem}
\newtheorem{definition}{Definition}
\newtheorem{lemma}{Lemma}
\newtheorem{remark}{Remark}
\newcounter{numcount}
\newcommand{\cnt}{\Roman{numcount}\;\stepcounter{numcount}}
\begin{document}

\newcommand{\caseone} { { \nearrow } { \hspace{-3.8mm} \searrow } {\hspace{-3.85mm}  \raisebox{4.7pt}{{$\rightharpoonup$}}} {\hspace{-3.85mm}  \raisebox{-4.7pt}{{$\rightharpoondown$}}} }

\newcommand{\casetwo} { { \searrow } {\hspace{-3.85mm}  \raisebox{4.7pt}{{$\rightarrow$}}} {\hspace{-3.85mm}  \raisebox{-4.7pt}{{$\rightharpoondown$}}} }

\newcommand{\casethree} { { \nearrow } {\hspace{-3.85mm}  \raisebox{4.7pt}{{$\rightarrow$}}} {\hspace{-3.85mm}  \raisebox{-4.7pt}{{$\rightharpoondown$}}} }

\newcommand{\casefour} { { \raisebox{4.7pt}{{$\rightarrow$}}} {\hspace{-3.85mm}  \raisebox{-4.7pt}{{$\rightarrow$}}} }

\newcommand{\casefive} { { \raisebox{4.7pt}{{$\rightarrow$}}} }

\newcommand{\casesix} { { \searrow } {\hspace{-3.85mm}  \raisebox{4.7pt}{{$\rightarrow$}}}  }

\newcommand{\caseseven} { { \nearrow } {\hspace{-3.85mm}  \raisebox{4.7pt}{{$\rightharpoonup$}}} }

\newcommand{\caseeight} { { \nearrow } { \hspace{-3.8mm} \searrow } {\hspace{-3.85mm}  \raisebox{4.7pt}{{$\rightharpoonup$}}} }

\newcommand{\casefifteen} { { \nearrow } {\hspace{-3.85mm} \searrow } }


\title{Capacity Results for Binary Fading Interference\\ Channels with Delayed CSIT}

\author{Alireza~Vahid,
        Mohammad~Ali~Maddah-Ali,
        and~Amir~Salman~Avestimehr
        \thanks{Alireza Vahid is with the School of Electrical and Computer Engineering, Cornell University, Ithaca, NY, USA. Email: {\sffamily av292@cornell.edu}.}
        \thanks{A. Salman Avestimehr is with the Electrical Engineering Department, University of Southern California, Los Angeles, CA, USA. Email: {\sffamily avestimehr@ee.usc.edu}.}
\thanks{Mohammad~Ali~Maddah-Ali is with Bell Labs, Alcatel-Lucent, Holmdel, NJ, USA. Email: {\sffamily mohammadali.maddah-ali@alcatel-lucent.com}.}
\thanks{The work of A. S. Avestimehr and A. Vahid is in part supported by NSF Grants CAREER-0953117, CCF-1161720, NETS-1161904, AFOSR Young Investigator Program Award, and ONR award N000141310094.}
\thanks{Preliminary parts of this work were presented at the 2011 Allerton Conference on Communication, Control, and Computing~\cite{BFICAllerton}, and the 2012 International Symposium on Information Theory (ISIT)~\cite{BFICISIT2012}.}
\thanks{Copyright (c) 2014 IEEE. Personal use of this material is permitted.  However, permission to use this material for any other purposes must be obtained from the IEEE by sending a request to pubs-permissions@ieee.org.}
}

\maketitle


\begin{abstract}
To study the effect of lack of up-to-date channel state information at the transmitters (CSIT), we consider two-user binary fading interference channels with Delayed-CSIT. We characterize the capacity region for such channels under homogeneous assumption where channel gains have identical and independent distributions across time and space, eliminating the possibility of exploiting time/space correlation. We introduce and discuss several novel coding opportunities created by outdated CSIT that can enlarge the achievable rate region. The capacity-achieving scheme relies on accurate combination, concatenation, and merging of these opportunities, depending on the channel statistics. The outer-bounds are based on an extremal inequality we develop for a binary broadcast channel with Delayed-CSIT. We further extend the results and characterize the capacity region when output feedback links are available from the receivers to the transmitters in addition to the delayed knowledge of the channel state information. We also discuss the extension of our results to the non-homogeneous setting.
\end{abstract}


\section{Introduction}
\label{sec:introduction}

The history of studying the effect of feedback channel in communication systems traces back to Shannon~\cite{Shannon}, and ever since, there have been extensive efforts to discover new techniques that exploit feedback channels in order to benefit wireless networks. In today's wireless networks, one of the main objectives in utilizing feedback channels is to provide the transmitters with the knowledge of the channel state information (CSI). In slow-fading networks, this task could have been carried on with negligible overhead. However, as wireless networks started growing in size, as mobility became an inseparable part of networks, and as fast-fading networks started playing a more important role, the availability of up-to-date channel state information at the transmitters (CSIT) has become a challenging task to accomplish. Specifically, in fast-fading scenarios, the coherence time of the channel is smaller than the delay of the feedback channel, and thus, providing the transmitters with up-to-date channel state information is practically infeasible.

As a result, there has been a recent growing interest in studying the effect of lack of up-to-date channel state information at the transmitters in wireless networks. In particular, in the context of multiple-input single-output (MISO)  broadcast channels (BC), it was recently  shown that even completely stale CSIT (a.k.a. Delayed-CSIT) can still be very useful and can change the scale of the capacity, measured by the degrees of freedom (DoF)~\cite{Maddah-Tse-Allerton}. A key idea behind the scheme proposed in~\cite{Maddah-Tse-Allerton} is that instead of predicting future channel state information, transmitters should focus on the side-information provided in the past signaling stages via the feedback channel, and try to create signals that are of \emph{common interest} of multiple receivers. Hence, we can increase spectral efficiency by retransmission of such signals of common interest. These ideas were later extended to derive constant-gap approximation of the capacity region of the MISO BCs with Delayed-CSIT~\cite{MISOBCAllerton,MISOBCISIT}.

There have also been several recent works in the literature on wireless networks with distributed transmitters and Delayed-CSIT. This includes the study of the DoF region of multi-antenna two-user Gaussian IC and X channel~\cite{GhasemiX1,Vaze_DCSIT_MIMO_BC}, $k$-user Gaussian IC and X channel~\cite{Jafar_Retrospective,abdoli2011degrees}, and multi-antenna two-user Gaussian IC with Delayed-CSIT and Shannon feedback~\cite{tandon2012degrees,vaze2011degrees}. In particular, the DoF region of multi-antenna two-user Gaussian IC has been characterized in~\cite{vaze2012degrees}, and it has been shown that the $k$-user Gaussian IC and X channels can still achieve more than one DoF with Delayed-CSIT~\cite{Jafar_Retrospective,Abdoli2011IC-X-Arxiv} (for $k>2$). 

A major challenge that arises in interference channels with Delayed-CSIT is that in such networks the transmitter has no longer access to all transmit signals in the network. In fact, each transmitter has \emph{only} access to its own interference contribution. Therefore, unlike broadcast channels in which the task of creating signals of common interest could be simply done at a single transmitter that has access to \emph{all} messages, exploiting Delayed-CSIT becomes much more challenging. This issue has become a major challenge both in deriving achievablility strategies and tight outer-bounds. In order to shed light on fundamental limits of communications with Delayed-CSIT in interference channels, in this paper we focus on a binary fading model as described below. 

We consider a two-user interference channel as illustrated in Fig.~\ref{fig:ChannelModel}. In this network, the channel gains at each time instant are either $0$ or $1$ according to some Bernoulli distribution, and are independent from each other and \emph{over time}. The input and output signals are also in the binary field and if two signals arrive simultaneously at a receiver, then the receiver obtains the exclusive OR (XOR) of them. We shall refer to this network as the two-user Binary Fading Interference Channel (BFIC).  

\begin{figure}[h]
\centering
\includegraphics[height = 3.5cm]{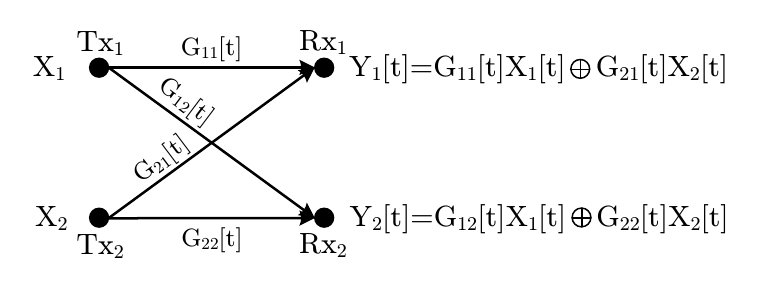}
\caption{Binary fading model for a two-user interference channel. The channel gains, the transmit signals and the received signals are in the binary field. The channel gains are distributed as i.i.d. Bernoulli random variables. The channel gains are independent across time so that the transmitters cannot predict future based on the past channel state information.\label{fig:ChannelModel}}
\end{figure}

As the main motivation, we study the two-user BFIC as a stepping stone towards understanding the capacity of more complicated fading interference channels with Delayed-CSIT. Lately, the linear deterministic model introduced in~\cite{ADT10}, has been utilized to translate the results from deterministic networks into Gaussian networks (\emph{e.g.}, \cite{ADT10,Guy, Guy2,Suh, vahid2010two,AlirezaFB,avestimehr2010capacity,Ave3}). In the linear deterministic model, there is a non-negative integer representing the channel gain from a transmitter to a receiver. Hence, one can view the binary fading model as a fading interpretation of the linear deterministic model where the non-negative integer associated to each link is either $0$ or $1$. Furthermore, as demonstrated in~\cite{vahid2013communication}, the binary fading model provides a simple, yet useful physical layer abstraction for wireless packet networks in which whenever a collision occurs, the receiver can store its received analog signal and utilize it for decoding the packets in future (for example, by successive interference cancellation techniques).


In this work, we fully characterize the capacity region of the two-user BFIC with Delayed-CSIT. We introduce and discuss several novel coding opportunities, created by outdated CSIT, which can enlarge the achievable rate region. In particular, we propose a new transmission strategy, which is carried on over several phases. Each channel realization creates multiple coding opportunities which can be exploited in the next phases, to improve the rate region. However, we observe that \emph{merging} or \emph{concatenating} some of the opportunities can offer even more gain. To achieve the capacity region, we find the most efficient arrangement of combination, concatenation, and merging of the opportunities, depending on the channel statistics. This can take up to five phases of communication for a two-user channel. For converse arguments, we start with a genie-aided interference channel and show that the problem can be reduced to some particular form of broadcast channels with Delayed-CSIT. We establish a new extremal inequality for the underlying BC that leads to a tight outer-bound for the original interference channel. The established inequality provides an outer-bound on how much the transmitter in a BC can favor one receiver to the other using Delayed-CSIT (in terms of the entropy of the received signal at the two receivers). 

We also consider the scenario in which output feedback links are available from the receivers to the transmitters on top of the delayed knowledge of the channel state information. We demonstrate how output feedback can be utilized to further improve the achievable rates in terms of both enlarging the capacity region and improving the achievable sum-rate.  In addition, output feedback can help us simplify the achievability strategy. For converse, again the core idea is to reduce the problem to a broadcast channel with Delayed-CSIT and output feedback, and establishing a new extremal inequality for the resultant broadcast channel. The inequality then helps us prove a tight outer-bound for the original interference channel. 

Our contributions are therefore multi-fold. We develop several new coding opportunities for the BFIC with Delayed-CSIT, as well as when the transmitters have access to perfect instantaneous CSIT. We then develop a framework to demonstrate how to combine and merge these coding opportunities in an optimal fashion based on the channel statistics. For converse, we develop an extremal entropy inequality that captures the effect of Delayed-CSIT at the transmitters. Using this extremal entropy inequality, we derive an outer-bound that matches our achievability.

The rest of the paper is organized as follows. In Section~\ref{sec:problem}, we formulate our problem. In Section~\ref{sec:results}, we present our main results and illustrate them through an example. We then provide an overview of our main achievability and converse techniques in Section~\ref{sec:opportunities}. Sections~\ref{sec:achallhalf}-\ref{sec:ConvInstFB} are dedicated to the proof of our main results. In Section~\ref{sec:extension}, we discuss how our results could be extended to more general settings. Section~\ref{sec:conclusion} concludes the paper and describes several interesting future directions.



\section{Problem Setting}
\label{sec:problem}


We consider the two-user Binary Fading Interference Channel as illustrated in Fig.~\ref{fig:detIC} and defined below. 

\begin{definition}
The two-user Binary Fading Interference Channel includes two transmitter-receiver pairs in which the channel gain from transmitter ${\sf Tx}_i$ to receiver ${\sf Rx}_j$ at time instant $t$ is denoted by $G_{ij}[t]$, $i,j \in \{1,2\}$. The channel gains are either $0$ or $1$ (\emph{i.e.} $G_{ij}[t] \in \{0,1\}$), and they are distributed as independent Bernoulli random variables (independent from each other and \emph{over time}). We consider the homogeneous setting where 
\begin{align}
G_{ij}[t] \overset{d}\sim \mathcal{B}(p), \qquad i,j = 1,2,
\end{align}
for $0 \leq p \leq 1$, and we define $q \overset{\triangle}= 1 - p$. 

At each time instant $t$, the transmit signal at ${\sf Tx}_i$ is denoted by $X_i[t] \in \{ 0, 1 \}$, and the received signal at ${\sf Rx}_i$ is given by
\begin{equation} 
\label{eq:receivedsignal}
Y_i[t] = G_{ii}[t] X_i[t] \oplus G_{\bar{i}i}[t] X_{\bar{i}}[t], \quad i = 1, 2,
\end{equation}
where the summation is in $\mathbb{F}_2$. 
\end{definition}

\begin{definition}
We define the channel state information (CSI) at time instant $t$ to be the quadruple 
\begin{align}
G[t] \overset{\triangle}= (G_{11}[t], G_{12}[t], G_{21}[t], G_{22}[t]).
\end{align}
\end{definition}

\begin{figure}[ht]
\centering
\includegraphics[height = 3cm]{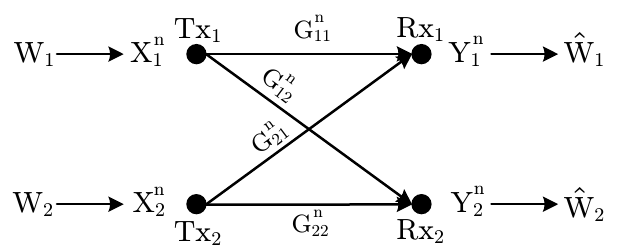}
\caption{Two-user Binary Fading Interference Channel (BFIC). The channel gains, the transmit and the received signals are in the binary field. The channel gains are distributed as i.i.d. Bernoulli random variables.\label{fig:detIC}}
\end{figure}

We use the following notations in this paper. We use capital letters to denote random variables (RVs), \emph{e.g.}, $G_{ij}[t]$ is a random variable at time instant $t$. Furthermore for a natural number $k$, we set 
\begin{align}
G^{k} \overset{\triangle}= \left[ G[1], G[2], \ldots, G[k] \right]^{\top}.
\end{align}

Finally, we set 
{\small \begin{align}
& G_{ii}^t X_i^t \oplus G_{\bar{i}i}^t X_{\bar{i}}^t \\
&~\overset{\triangle}= \left[ G_{ii}[1] X_i[1] \oplus G_{\bar{i}i}[1] X_{\bar{i}}[1], \ldots, G_{ii}[t] X_i[t] \oplus G_{\bar{i}i}[t] X_{\bar{i}}[t] \right]^{\top}. \nonumber 
\end{align}}
In this paper, we consider three models for the available channel state information at the transmitters:
\begin{enumerate}

\item Instantaneous-CSIT: In this model, the channel state information $G^t$ is available at each transmitter at time instant $t$, $t=1,2,\ldots,n$;

\item No-CSIT: In this model, transmitters only know the distribution from which the channel gains are drawn, but not the actual realizations of them;

\item Delayed-CSIT: In this model, at time instant $t$, each transmitter has the knowledge of the channel state information up to the previous time instant (\emph{i.e.} $G^{t-1}$) and the distribution from which the channel gains are drawn, $t=1,2,\ldots,n$.

\end{enumerate}

We assume that the receivers have instantaneous knowledge of the CSI. We consider the scenario in which ${\sf Tx}_i$ wishes to reliably communicate message $\hbox{W}_i \in \{ 1,2,\ldots,2^{n R_i}\}$ to ${\sf Rx}_i$ during $n$ uses of the channel, $i = 1,2$. We assume that the messages and the channel gains are {\it mutually} independent and the messages are chosen uniformly. For each transmitter ${\sf Tx}_i$, let message $\hbox{W}_i$ be encoded as $X_i^n$ using the encoding function $f_i(.)$, which depends on the available CSI at ${\sf Tx}_i$. Receiver ${\sf Rx}_i$ is only interested in decoding $\hbox{W}_i$, and it will decode the message using the decoding function $\widehat{\hbox{W}}_i = g_i(Y_i^n,G^n)$. An error occurs when $\widehat{\hbox{W}}_i \neq \hbox{W}_i$. The average probability of decoding error is given by
\begin{equation}
\label{}
\lambda_{i,n} \overset{\triangle}= \mathbb{E}[P[\widehat{\hbox{W}}_i \neq \hbox{W}_i]], \hspace{5mm} i = 1, 2,
\end{equation}
and the expectation is taken with respect to the random choice of the transmitted messages $\hbox{W}_1$ and $\hbox{W}_2$. A rate tuple $(R_1,R_2)$ is said to be achievable, if there exists encoding and decoding functions at the transmitters and the receivers respectively, such that the decoding error probabilities $\lambda_{1,n},\lambda_{2,n}$ go to zero as $n$ goes to infinity. The capacity region is the closure of all achievable rate tuples.

In addition to the setting described above, we consider a separate scenario in which an output feedback (OFB) link is available from each receiver to its corresponding transmitter\footnote{As we will see later, our result holds for the case in which output feedback links are available from each receiver to \emph{both} transmitters.}. More precisely, we consider a noiseless feedback link of infinite capacity from each receiver to its corresponding transmitter. 

Due to the presence of output feedback links, the encoded signal $X_i[t]$ of transmitter ${\sf Tx}_i$ at time $t$, would be a function of its own message, previous output sequence at its receiver, and the available CSIT. For instance, with Delayed-CSIT and OFB, we have
\begin{equation}
\label{}
X_i[t] =  f_i[t](\hbox{W}_i,Y_{i}^{t-1},G^{t-1}), \hspace{5mm} i = 1, 2.
\end{equation}

As stated in the introduction, our goal is to understand the effect of the channel state information and the output feedback, on the capacity region of the two-user Binary Fading Interference Channel. Towards that goal, we consider several scenarios about the availability of the CSIT and the OFB. For all scenarios, we provide exact characterization of the capacity region. In the next section, we present the main results of the paper. 



\section{Statement of the Main Results}
\label{sec:results}


In this paper, we focus on the following scenarios about the availability of the CSI and the OFB: $(1)$ Delayed-CSIT and no OFB; $(2)$ Delayed-CSIT and OFB; and $(3)$ Instantaneous-CSIT and OFB. In order to illustrate the results, we first establish the capacity region of the two-user BFIC with No-CSIT and Instantaneous-CSIT as our benchmarks.

\subsection{Benchmarks}

Our baseline is the scenario in which there is no output feedback link from the receivers to the transmitters, and we assume the No-CSIT model. In other words, the only available knowledge at the transmitters is the distribution from which the channel gains are drawn. In this case, it is easy to see that for any input distribution, the two received signals are \emph{statistically} the same, hence
\begin{align}
& I\left( X_1^n ; Y_1^n | G^n \right) = I\left( X_1^n ; Y_2^n | G^n \right), \nonumber \\
& I\left( X_2^n ; Y_1^n | G^n \right) = I\left( X_2^n ; Y_2^n | G^n \right).
\end{align}

Therefore, the capacity region in this case, $\mathcal{C}^{\mathrm{No-CSIT}}$, is the same as the intersection of the capacity region of the multiple-access channels (MACs) formed at the receivers:
\begin{equation}
\label{eq:RegionNoCSIT}
\mathcal{C}^{\mathrm{No-CSIT}} =
\left\{ \begin{array}{ll}
\vspace{1mm} 0 \leq R_i \leq p, & i = 1,2, \\
R_1 + R_2 \leq 1 - q^2. &
\end{array} \right.
\end{equation}

The other extreme point on the available CSIT is the Instantaneous-CSIT model. The capacity region in this case is given in the following theorem which is proved in Appendices~\ref{sec:AchInst} and~\ref{sec:ConvInst}.

\begin{theorem}
\label{THM:IC-InstCSIT}
{\bf [Capacity Region with Instantaneous-CSIT]} The capacity region of the two-user Binary Fading IC with Instantaneous-CSIT (and no output feedback), $\mathcal{C}^{\mathrm{ICSIT}}$, is the set of all rate tuples $\left( R_1, R_2 \right)$ satisfying
\begin{equation}
\label{eq:fullNSIregion}
\mathcal{C}^{\mathrm{ICSIT}} =
\left\{ \begin{array}{ll}
\vspace{1mm} 0 \leq R_i \leq p, & i = 1,2, \\
R_1 + R_2 \leq 1 - q^2 + p q.  &  \\
\end{array} \right.
\end{equation}
\end{theorem}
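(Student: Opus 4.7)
Proof plan:

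For achievability, the individual bounds $R_i \le p$ are obtained by letting TX$i$ transmit i.i.d.\ uniform bits only when $G_{ii}[t]=1$ with the other transmitter silent. To reach the sum-rate corner $(p, 2pq)$ (and its symmetric counterpart, where $2pq = 1-q^2+pq - p$), I would use a Han--Kobayashi-style rate splitting with CSIT-adaptive encoding. TX1 always transmits uniform bits, attaining $R_1 = p$. TX2 splits $W_2$ into a \emph{private} part, transmitted only in the clean slots $\{G_{21}=0, G_{22}=1\}$ so that it is invisible at RX1 and directly decodable at RX2 (contributing rate $pq$), and a \emph{common} part, transmitted in the remaining slots and jointly MAC-decoded with $X_1$ at RX1. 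Since the MAC sum capacity at RX1 is $1-q^2 = p + pq$ and $R_1 = p$ consumes $p$ of it, the common part can carry an additional $pq$. Summing gives $R_2 = 2pq$, and the rest of the region is obtained by time-sharing with $(p,0)$ and $(0,p)$.

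The individual converse bounds $R_i \le p$ follow from a standard genie argument. Give $W_{\bar i}$ to RX$i$; then RX$i$ can form $X_{\bar i}^n$ from $(W_{\bar i}, G^n)$ and subtract the interference from $Y_i^n$ to recover $G_{ii}^n X_i^n$. Fano's inequality together with $H(G_{ii}^n X_i^n \mid G^n) \le np$ gives $R_i \le p$.

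For the sum-rate converse, I would start from
\[
n(R_1+R_2) \le I(W_1; Y_1^n \mid G^n) + I(W_2; Y_2^n \mid G^n) + o(n),
\]
and expand the second term via the chain rule as $I(W_2; Y_2^n \mid G^n) = I(W_2; Y_1^n \mid G^n) + I(W_2; Y_2^n \mid Y_1^n, G^n)$. The first pair combines as $I(W_1; Y_1^n \mid G^n) + I(W_2; Y_1^n \mid G^n) \le I(W_1, W_2; Y_1^n \mid G^n) + I(W_1; W_2 \mid Y_1^n, G^n) \le H(Y_1^n \mid G^n) + o(n) \le n(1-q^2) + o(n)$, where $I(W_1; W_2 \mid Y_1^n, G^n) = o(n)$ follows because $W_1$ is decodable from $(Y_1^n, G^n)$ by Fano. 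The residual term $I(W_2; Y_2^n \mid Y_1^n, G^n)$ must then be bounded by $npq$, reflecting that any information about $W_2$ beyond what RX1 already sees can be carried only through the clean slots $\{G_{21}=0, G_{22}=1\}$, which have probability $pq$ per channel use.

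The main obstacle is this last step: establishing $I(W_2; Y_2^n \mid Y_1^n, G^n) \le npq$ uniformly over all admissible (CSIT-adaptive) input distributions. Per-slot, the conditional entropy $H(Y_2 \mid Y_1, G)$ and the conditional leakage $H(G_{12}X_1 \mid G_{11}X_1, G)$ must be combined carefully using the binary fading structure to show that $Y_2$ cannot carry more than $pq$ fresh bits about $X_2$ once $Y_1$ is known. A single-letterization and identification of the worst-case input law along the lines of an extremal entropy inequality---analogous to the one the authors develop for the BC in the Delayed-CSIT case---is the natural route.
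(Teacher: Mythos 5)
Your achievability plan and your converse plan both depart from the paper, and the converse has a concrete gap.

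\textbf{Converse.} The individual bounds $R_i\le p$ are fine and match the paper's genie argument. For the sum-rate, however, your decomposition leads to a residual term that cannot be bounded by $npq$. You expand $I(W_2;Y_2^n\mid G^n)$ and combine to get
\[
n(R_1+R_2)\le H(Y_1^n\mid G^n) + I(W_2;Y_2^n\mid Y_1^n,G^n)+o(n),
\]
and then assert $I(W_2;Y_2^n\mid Y_1^n,G^n)\le npq$. This is false for general admissible inputs: take $X_1^n=W_1$, $X_2^n=W_2$ i.i.d.\ Bernoulli$(1/2)$ and $p=1/2$. A direct per-slot computation gives $H(Y_2\mid Y_1,G)=9/16$ and $H(Y_2\mid X_2,Y_1,G)=1/4$, so $I(W_2;Y_2^n\mid Y_1^n,G^n)=5n/16>n/4=npq$. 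The root cause is that your decomposition never gives RX$1$'s mutual-information term the genie $W_2$. The paper starts instead from $I(W_1;Y_1^n\mid W_2,G^n)+I(W_2;Y_2^n\mid G^n)$, which after a chain-rule rearrangement telescopes to $H(Y_2^n\mid G^n)+H(Y_1^n\mid Y_2^n,W_2,G^n)$. That residual \emph{is} bounded by $npq$ by an elementary per-slot argument: given $W_2$ and $G$ you know $X_2[t]$; from $Y_2[t]$ you then extract $G_{12}[t]X_1[t]$; and $Y_1[t]=G_{11}[t]X_1[t]\oplus G_{21}[t]X_2[t]$ is then deterministic unless $G_{11}[t]=1,\ G_{12}[t]=0$, which has probability $pq$. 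No extremal inequality of the kind you anticipate is needed in the Instantaneous-CSIT case --- that machinery appears in the paper only for Delayed-CSIT (Lemma~\ref{lemma:portion}). Two smaller algebra slips: $I(W_2;Y_2^n\mid G^n)=I(W_2;Y_1^n\mid G^n)+I(W_2;Y_2^n\mid Y_1^n,G^n)$ is a ``$\le$'', not equality (chain rule applies to $I(W_2;Y_1^n,Y_2^n\mid G^n)$); and by the independence of $W_1,W_2$ one has $I(W_1;Y_1^n\mid G^n)+I(W_2;Y_1^n\mid G^n)=I(W_1,W_2;Y_1^n\mid G^n)-I(W_1;W_2\mid Y_1^n,G^n)\le I(W_1,W_2;Y_1^n\mid G^n)$ directly, so the $o(n)$ Fano argument for that extra term is unnecessary and has the wrong sign.

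\textbf{Achievability.} Your Han--Kobayashi-with-CSIT plan is a genuinely different route from the paper's multi-block pairing scheme (pairing Types A--D across blocks with backward decoding), and at a high level it is the natural thing to try. However, as stated it is incomplete. The slots $\{G_{21}=0,G_{22}=1\}$ are interference-free at RX$1$ but not at RX$2$: TX$1$ still hits RX$2$ with gain $G_{12}$, so the private bits are not ``directly decodable at RX$2$'' without accounting for this interference, and RX$2$ cannot afford to jointly decode all of $W_1$ at rate $p$ since $p+2pq>1-q^2$. The paper's scheme handles exactly this by pairing the slots where interference occurred with later slots of complementary channel state. Also, for $p<1/2$ the operative corner is $(p,p)$, not $(p,2pq)$, and time-sharing $(p,0)$ with $(0,p)$ only reaches sum rate $p$, far short of $2p$; the paper's Types A--C pairings are what achieve $(p,p)$ there.
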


\begin{remark}
Comparing the capacity region of the two-user BFIC with No-CSIT~(\ref{eq:RegionNoCSIT}) and Instantaneous-CSIT~(\ref{eq:fullNSIregion}), we observe that the bounds on individual rates remain unchanged while the sum-rate outer-bound is increased by $pq$. This increase can be intuitively explained as follows. The outer-bound of $1-q^2$ corresponds to the fraction of time in which at least one of the links to each receiver is equal to $1$. Therefore, this outer-bound corresponds to the fraction of time that each receiver gets ``useful'' signal. This is tight with No-CSIT since each receiver should be able to decode both messages. However, once we move to Instantaneous-CSIT, we can send a private message to one of the receivers by using those time instants in which the link from the corresponding transmitter to that receiver is equal to $1$, but that transmitter is not interfering with the other receiver. This corresponds to $pq$ fraction of the time.
\end{remark}

Now that we have covered the benchmarks, we are ready to present our main results. 

\subsection{Main Results}

As the first step, we consider the Delayed-CSIT model. In this case, the following theorem establishes our result.

\begin{theorem}
\label{THM:IC-DelayedCSIT}
{\bf [Capacity Region with Delayed-CSIT]} The capacity region of the two-user Binary Fading IC with Delayed-CSIT (and no output feedback), $\mathcal{C}^{\mathrm{DCSIT}}$, is the set of all rate tuples $\left( R_1, R_2 \right)$ satisfying
\begin{equation}
\label{eq:DelayedNSIregion}
\mathcal{C}^{\mathrm{DCSIT}} =
\left\{ \begin{array}{ll}
\vspace{1mm} 0 \leq R_i \leq p, &  i = 1,2, \\
R_i + \left( 1 + q \right) R_{\bar{i}} \leq p \left( 1 + q \right)^2, & i = 1,2.
\end{array} \right.
\end{equation}
\end{theorem}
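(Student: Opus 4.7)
\textbf{Proof plan for Theorem~\ref{THM:IC-DelayedCSIT}.} The result has an achievability part and a converse part, which I address separately.

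For achievability, the individual rate bounds $R_i\le p$ are attained trivially by single-user operation, so the task is to attain the weighted bounds $R_i+(1+q)R_{\bar i}\le p(1+q)^2$. By symmetry and time-sharing it suffices to attain the two corresponding corner points. The strategy is a multi-phase scheme in the spirit of Maddah-Ali--Tse, adapted to distributed transmitters. In Phase~1 each transmitter sends fresh uncoded bits for a number of slots chosen as a function of $p$. After the phase, delayed CSIT allows each transmitter to classify its own bits by the realized channel state: bits for which the direct link was on and the cross link was off are already delivered, while bits for which the direct link was off but the cross link was on were received at the \emph{wrong} receiver as interference, creating a side-information asymmetry. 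Later phases retransmit these ``known-at-the-wrong-receiver'' bits, optionally XOR-combined with similar bits of the other user, so that a single channel use is simultaneously useful at both receivers (the hallmark of a common-interest signal). By choosing the phase lengths proportional to the probabilities of the various state patterns and by applying the \emph{merging} and \emph{concatenation} ideas mentioned in the introduction (so that the recursion closes in at most five phases), one can verify by a direct accounting of delivered versus transmitted bits that the desired corner points are reached.

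For the converse, the bound $R_i\le p$ follows immediately from a genie argument: give $W_{\bar i}$ to $\mathrm{Rx}_i$, whereupon $\mathrm{Rx}_i$ observes $G_{ii}^n X_i^n$ corrupted only by erasures, and $H(G_{ii}^n X_i^n\mid G^n)\le np$. The non-trivial bound $R_i+(1+q)R_{\bar i}\le p(1+q)^2$ is the heart of the proof. My plan is as follows. First, apply Fano's inequality and provide $W_i$ to $\mathrm{Rx}_{\bar i}$ as genie side information; after subtracting the known interference, the setup reduces to a two-receiver broadcast-type problem in which a ``virtual'' transmitter holding both messages must serve $\mathrm{Rx}_i$ with $W_i$ and $\mathrm{Rx}_{\bar i}$ with $W_{\bar i}$ under Delayed-CSIT (the bound remains valid because we have only added information). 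Second, express the weighted sum $nR_i+n(1+q)R_{\bar i}$ as a combination of conditional entropies of the two received signals, conditioned on $G^n$ and appropriate past message variables. Third, invoke the extremal inequality promised in the introduction for the binary-fading broadcast channel with Delayed-CSIT: the inequality asserts roughly that
\begin{equation*}
H(Y_{\bar i}^n\mid G^n,W_i)\;\le\;(1+q)\,H(Y_i^n\mid G^n,W_i,W_{\bar i}\text{-conditioned variant}),
\end{equation*}
with an additive slack proportional to $np(1+q)^2$, so that combining the two Fano bounds with this inequality collapses to the claimed weighted sum-rate bound.

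The main obstacle, and the conceptually novel step, is the extremal inequality itself. Its coefficient $(1+q)$ is the fingerprint of Delayed-CSIT: a bit sent by the virtual transmitter can reach the ``favored'' receiver with probability $p$ directly and with conditional probability $q$ through a retransmission enabled by past state knowledge when the direct link was off, and no coding strategy can do better because the transmitter cannot foresee which link will be available. I would prove the inequality by induction on $n$: at each step the single-letter increment $H(Y_{\bar i}[t]\mid\cdot)-(1+q)H(Y_i[t]\mid\cdot)$ is bounded using the i.i.d.\ structure of $G[t]$ and the fact that, conditioned on the past, the input $X^t$ is independent of $G[t]$. The symmetry $G_{ij}\overset{d}{\sim}\mathcal B(p)$ makes the single-letter optimization tractable, and the sharp constant $p(1+q)^2$ emerges after summing over $t$. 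The converse then recovers the second family of bounds in~(\ref{eq:DelayedNSIregion}) by applying the argument with the roles of receivers~$1$ and~$2$ interchanged.
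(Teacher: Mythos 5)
Your overall structure matches the paper: a multi-phase achievability scheme that reclassifies past bits by their realized channel states and retransmits XOR-combined ``common-interest'' bits, plus a converse built on Fano's inequality, a genie, and an extremal entropy inequality for a one-transmitter/two-receiver binary-fading broadcast channel with Delayed-CSIT. Your sketch of that extremal inequality is also essentially the paper's Lemma~\ref{lemma:portion}: by expanding $H(Y_2^n\mid G^n)$ over time, using that $X[t]$ is independent of $G[t]$ given the past, and applying ``conditioning reduces entropy,'' one gets $H(Y_2^n\mid G^n)\ge\frac{1}{2-p}H(Y_1^n\mid G^n)$, and $\frac{1}{2-p}=\frac{1}{1+q}$. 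So the key lemma and the single-letterization you propose are the right ingredients.

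There is, however, a concrete gap in the converse as you have written it. For the bound $R_i+(1+q)R_{\bar i}\le p(1+q)^2$ you propose giving $W_i$ to $\mathrm{Rx}_{\bar i}$, but the paper's argument gives $W_{\bar i}$ to $\mathrm{Rx}_i$ (for $i=1$, give $W_2$ to $\mathrm{Rx}_1$). This direction matters. With the paper's genie one arrives at $\beta H(Y_2^n\mid G^n)+\bigl[H(G_{11}^nX_1^n\mid G^n)-\beta H(G_{12}^nX_1^n\mid G^n)\bigr]$ with $\beta=1+q$; the bracket is $\le 0$ by the lemma and the remaining term is bounded by $\beta(1-q^2)n=p(1+q)^2n$. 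With your genie the analogous bracket becomes $-H(G_{21}^nX_2^n\mid G^n)+\beta H(G_{22}^nX_2^n\mid G^n)$, which has the \emph{favored} link's entropy multiplied by $\beta>1$; the lemma cannot make that nonpositive, and working it out gives $R_1+(1+q)R_2\le(1-q^2)+p\,q(2+q)/(1+q)$, which for small $p$ is actually \emph{smaller} than $p(1+q)^2$ and is violated by the achievable point $(p,p)$ — so that genie-aided bound is simply false. Relatedly, the ``additive slack proportional to $np(1+q)^2$'' is not part of the extremal inequality itself: the inequality is a clean multiplicative bound on the ratio of received entropies, and the $p(1+q)^2n$ appears separately from the trivial bound $H(Y_2^n\mid G^n)\le(1-q^2)n$. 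Fix the genie direction (give $W_{\bar i}$ to $\mathrm{Rx}_i$), state the lemma as $H(G_{ii}^nX_i^n\mid G^n)\le(1+q)H(G_{i\bar i}^nX_i^n\mid G^n)$, and your plan aligns with the paper's proof; the ``virtual centralized transmitter'' framing is also unnecessary, since the broadcast subproblem arises naturally from $\mathrm{Tx}_1$'s two outgoing links once $W_2$ (hence $X_2^n$) is conditioned away.
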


\begin{remark}
Comparing the capacity region of the two-user BFIC with Delayed-CSIT~(\ref{eq:DelayedNSIregion}) and Instantaneous-CSIT~(\ref{eq:fullNSIregion}), we can show that for $0 \leq p \leq \left( 3 - \sqrt{5} \right)/2$, the two regions are equal. However, for $\left( 3 - \sqrt{5} \right)/2 < p < 1$, the capacity region of the two-user BFIC with Delayed-CSIT is strictly smaller than that of Instantaneous-CSIT. Moreover, we can show that the capacity region of the two-user BFIC with Delayed-CSIT is strictly larger than that of No-CSIT (except for $p=0$ or $1$).
\end{remark}

Furthermore, since the channel state information is acquired through the feedback channel, it is also important to understand the effect of output feedback on the capacity region of the two-user BFIC with Delayed-CSIT. In the study of feedback in wireless networks, one other direction is to consider the transmitter cooperation created through the output feedback links. In this context, it is well-known that feedback does not increase the capacity of discrete memoryless point-to-point channels~\cite{Shannon}. However, feedback can enlarge the capacity region of multi-user networks, even in the most basic case of the two-user memoryless multiple-access channel~\cite{Gaar, Oza}. In~\cite{Suh,AlirezaFB}, the feedback capacity of the two-user Gaussian IC has been characterized to within a constant number of bits. One consequence of these results is that output feedback can provide an unbounded capacity increase. This is in contrast to point-to-point and multiple-access channels where feedback provides no gain and bounded gain respectively. In this work, we consider the scenario in which an output feedback link is available from each receiver to its corresponding transmitter on top of the delayed knowledge of the channel state information as depicted in Fig.~\ref{fig:detICFB}(a).
\begin{figure}[ht]
\centering
\subfigure[]{\includegraphics[height = 4cm]{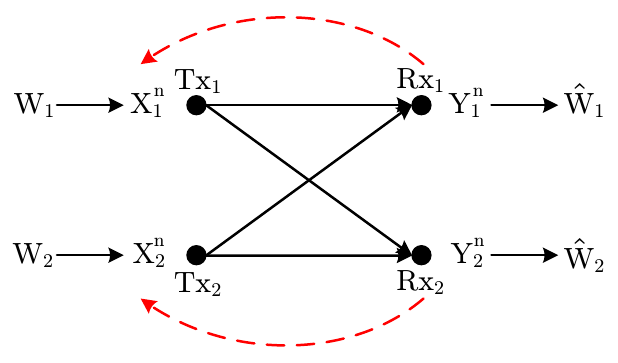}}
\hspace{0.75 in}
\subfigure[]{\includegraphics[height = 4cm]{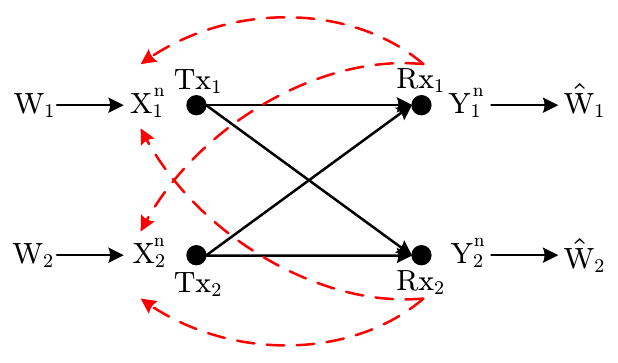}}
\caption{\it Two-user Binary Fading Interference Channel: $(a)$ with output feedback links from each receiver to its corresponding transmitter. In this setting, the transmit signal of ${\sf Tx}_i$ at time instant $t$, would be a function of the message $\hbox{W}_i$, the available CSIT, and the output sequences $Y_i^{t-1}$, $i=1,2$; and $(b)$ with output feedback links from each receiver to both transmitters. In this setting, the transmit signal of ${\sf Tx}_i$ at time instant $t$, would be a function of the message $\hbox{W}_i$, the available CSIT, and the output sequences $Y_1^{t-1}, Y_2^{t-1}$, $i=1,2$.\label{fig:detICFB}}
\end{figure}


In the presence of output feedback and Delayed-CSIT, we have the following result.

\begin{theorem}
\label{THM:IC-DCSITFB}
{\bf [Capacity Region with Delayed-CSIT and OFB]} For the two-user binary IC with Delayed-CSIT and OFB, the capacity region $\mathcal{C}^{\mathrm{DCSIT,OFB}}$, is given by
\begin{align}
\label{eq:capacity-FB}
& \mathcal{C}^{\mathrm{DCSIT,OFB}} = \\
& \left\{ R_1, R_2 \in \mathbb{R}^+~s.t.~R_i + (1+q) R_{\bar{i}} \leq p(1+q)^2,~i=1,2 \right\}. \nonumber
\end{align}
\end{theorem}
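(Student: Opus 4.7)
My plan is to design a multi-phase transmission strategy that exploits output feedback to substantially simplify the coding scheme relative to the Delayed-CSIT-only case of Theorem~\ref{THM:IC-DelayedCSIT}. The key observation is that with OFB, transmitter ${\sf Tx}_i$ knows the received sequence $Y_i^{t-1}$ exactly, and combined with Delayed-CSIT it can recover the past interference contribution $G_{\bar{i}i}[t']X_{\bar{i}}[t']=Y_i[t']\oplus G_{ii}[t']X_i[t']$ at its receiver for every $t'<t$. Thus after an initial fresh-transmission phase in which each ${\sf Tx}_i$ sends independent symbols, each transmitter knows (a) which of its own symbols arrived cleanly, (b) which were lost due to a direct erasure, and (c) which were received mixed with interference that is now fully known to it. In subsequent phases, symbols that were corrupted at one receiver but already in hand at the other become signals of \emph{common interest}: a single XOR combination simultaneously serves as side information to cancel past interference at one receiver and as new information for the other, allowing one transmission to retire two unresolved symbols on average with probability depending on $p$ and $q$. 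Counting the expected duration of each phase via a law-of-large-numbers argument on the empirical fraction of each channel-state pattern would give achievability of the symmetric corner $R_1=R_2=p(1+q)^2/(2+q)$, and by swapping roles we hit the other corner; the rest of the region follows by time-sharing.

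\textbf{Plan for converse.} I would follow the roadmap sketched in the introduction: reduce the interference channel to a virtual broadcast channel with Delayed-CSIT and OFB by providing genie side information, and then prove a new extremal inequality for that BC. Concretely, to establish the weighted sum bound $R_i+(1+q)R_{\bar{i}} \leq p(1+q)^2$, I would give $\hbox{W}_{\bar{i}}$ to ${\sf Rx}_i$ as a genie, so that ${\sf Rx}_i$ must decode both messages from a signal that is a function of $(X_1^n,X_2^n)$; the setup is then structurally equivalent to a BC whose transmitter knows both messages but only past CSI and past outputs of ${\sf Rx}_i$. Applying Fano's inequality and expanding mutual informations in terms of conditional entropies reduces the claim to an inequality of the form
\[
H(Y_i^n \mid G^n) \;\geq\; (1+q)\, H(Y_{\bar{i}}^n \mid Y_i^n, G^n) - n\varepsilon_n,
\]
which I would prove by single-letterization, bounding $H(Y_{\bar{i}}[t]\mid Y_i^t, G^t, \cdot)$ at each $t$ and using the fact that OFB lets the transmitted symbols depend on $Y_i^{t-1}$ but not on $G[t]$. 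Chaining the inequality with the point-to-point bound on $H(Y_i^n\mid G^n)$ then yields the weighted sum bound.

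\textbf{Main obstacle.} The hardest step will be establishing the extremal inequality with the specific coefficient $(1+q)$. Unlike the Delayed-CSIT-only case, the transmit symbols are now correlated with the past output $Y_i^{t-1}$ through the feedback, so the symmetrization and conditional-entropy manipulations used for pure Delayed-CSIT BCs do not directly apply. The argument must carefully track, for each time instant $t$ and each realization of $G[t]$, the joint distribution of $(X_1[t],X_2[t])$ conditioned on $(Y_i^{t-1},G^{t-1})$, and show that the maximum excess of $H(Y_i[t]\mid\cdot)$ over $H(Y_{\bar{i}}[t]\mid Y_i[t],\cdot)$ across \emph{all} choices of coding strategy is exactly $(1+q)$. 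The fact that this coefficient matches the no-OFB case (Theorem~\ref{THM:IC-DelayedCSIT}) is precisely what makes the OFB region a strict enlargement only through loss of the individual bounds $R_i\leq p$, so pinning down the coefficient $(1+q)$—rather than something smaller—is where most of the technical work lies.
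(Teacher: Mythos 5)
Your converse road map (genie side information, reduce to a broadcast-channel extremal inequality, single-letterize, exploit independence of $X_i[t]$ from $G[t]$) is the same as the paper's, but the specific inequality you propose does not close the argument. After giving $\hbox{W}_2$ to ${\sf Rx}_1$ as a genie (to prove $R_1 + (1+q)R_2 \leq p(1+q)^2$), the bound the paper actually needs, once it also upper-bounds $I(W_1;Y_1^n\mid W_2, G^n)$ by $I(W_1;Y_1^n,Y_2^n\mid W_2, G^n)$, is
\[
H(Y_1^n, Y_2^n \mid W_2, G^n) \;\leq\; (1+q)\, H(Y_2^n \mid W_2, G^n),
\]
equivalently $H(Y_1^n \mid Y_2^n, W_2, G^n) \leq q\, H(Y_2^n \mid W_2, G^n)$, and this is proved (Lemma~\ref{lemma:OFBHalf}) by conditioning further on $X_2^t$ (which is a deterministic function of $W_2$ and the past outputs), reducing the received signals to $G_{11}[t]X_1[t]$ and $G_{12}[t]X_1[t]$, and using the fact that $X_1[t]\perp G[t]$ given the past, which fixes the ratio at exactly $p/(1-q^2)=1/(1+q)$. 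Your stated inequality $H(Y_i^n\mid G^n) \geq (1+q) H(Y_{\bar{i}}^n\mid Y_i^n, G^n)$ is both missing the indispensable conditioning on $\hbox{W}_{\bar{i}}$ and $X_{\bar{i}}^t$ (without which $Y_1,Y_2$ are not functions of $X_1$ alone) and, more importantly, has the wrong constant: it is equivalent to $H(Y_{\bar{i}}^n\mid Y_i^n, G^n)\leq \frac{1}{1+q}H(Y_i^n\mid G^n)$, but what is needed is the coefficient $q$ on the right, and $\frac{1}{1+q} > q$ whenever $q < (\sqrt{5}-1)/2$, i.e., $p > (3-\sqrt 5)/2\approx 0.382$. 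So for a large part of the parameter range your proposed lemma, even if true, is too weak to yield the theorem. Your worry that one must ``carefully track the joint distribution of $(X_1[t],X_2[t])$'' over all strategies also overstates the difficulty: once $W_2,X_2^t$ and the past channel outputs are in the conditioning, the step is a direct per-symbol computation, not a maximization over codes.

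On achievability, the symmetric point $R_1=R_2 = p(1+q)^2/(2+q)$ is correct and your description of using OFB to turn interference-free and swapped-receiver residuals into bits of common interest that feed a two-multicast phase matches the paper. However, the capacity region~\eqref{eq:capacity-FB} has three nontrivial vertices: the symmetric sum-rate point and the two axis corners $(1-q^2,0)$, $(0,1-q^2)$. Swapping user indices at the symmetric point maps it to itself, so your remark ``by swapping roles we hit the other corner'' does not produce the axis corners, and time-sharing from the symmetric point alone does not fill the region. Achieving $(1-q^2,0)$ requires a genuinely different scheme---in the paper, ${\sf Tx}_2$ and ${\sf Rx}_2$ act as a pure relay for ${\sf Tx}_1$ using the output feedback, a block-Markov construction over $b+1$ blocks---and your outline does not cover it.
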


\begin{figure*}[ht]
\centering
\subfigure[]{\includegraphics[height = 4.15cm]{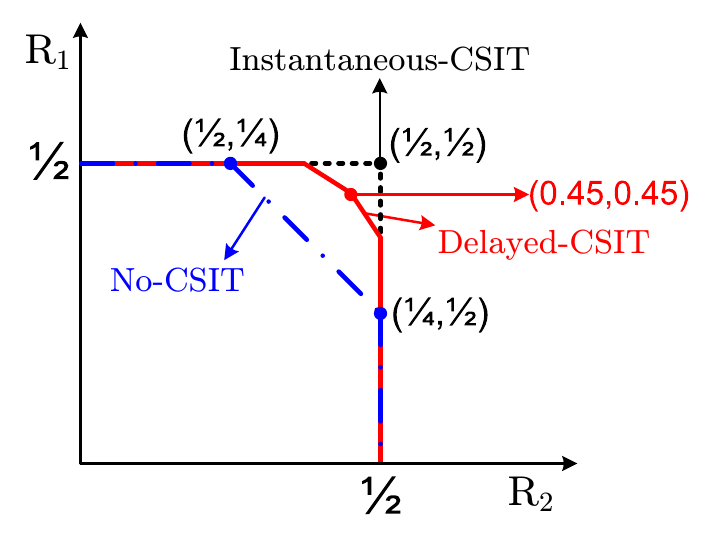}}
\hspace{0.25 in}
\subfigure[]{\includegraphics[height = 4.15cm]{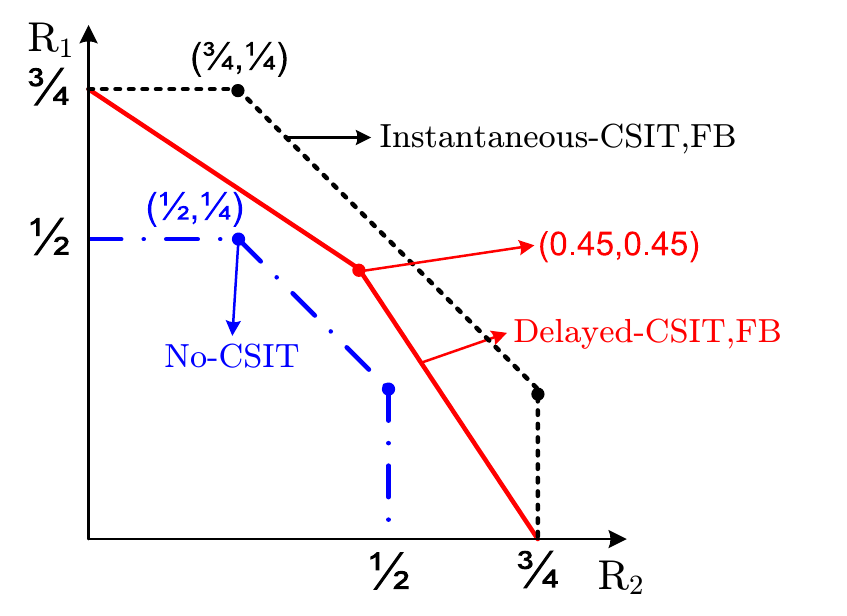}}
\hspace{0.25 in}
\subfigure[]{\includegraphics[height = 4.15 cm]{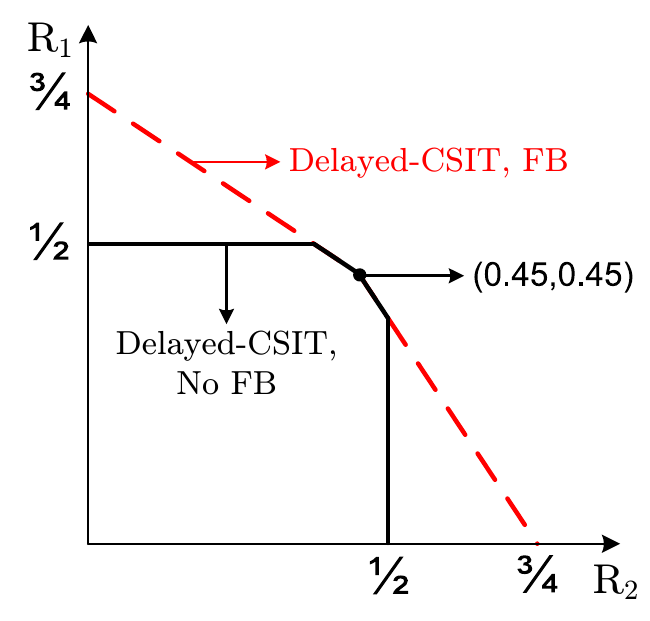}}
\caption{\it Two-user Binary Fading IC: $(a)$ the capacity region with No-CSIT, Delayed-CSIT, and Instantaneous-CSIT, without OFB; $(b)$ the capacity region with No-CSIT, Delayed-CSIT, and Instantaneous-CSIT, with OFB; and $(c)$ the capacity region with Delayed-CSIT, with  and without output feedback.\label{fig:illustration}}
\end{figure*}

\begin{remark}
\label{remark:OFBConverse}
The outer-bound on the capacity region with only Delayed-CSIT~(\ref{eq:DelayedNSIregion}) is in fact the intersection of the outer-bounds on the individual rates (\emph{i.e.} $R_i \leq p$, $i=1,2$) and the capacity region with Delayed-CSIT and OFB~(\ref{eq:capacity-FB}). Therefore, the effect of OFB is to remove the constraints on individual rates. This can be intuitively explained by noting that OFB creates a new path to flow information from each transmitter to its corresponding receiver (\emph{e.g.}, ${\sf Tx}_1 \rightarrow {\sf Rx}_2 \rightarrow {\sf Tx}_2 \rightarrow {\sf Rx}_1$). This opportunity results in elimination of the individual rate constraints in this case.
\end{remark}

\begin{remark}
As we will see in Section~\ref{sec:conversedelayedhalfFB}, same outer-bounds hold in the presence of \emph{global output feedback} where output feedback links are available from each receiver to both transmitters, see Fig.~\ref{fig:detICFB}(b). Therefore, the capacity region of two user binary IC with Delayed-CSIT and global output feedback is the same as the capacity region described in~(\ref{eq:capacity-FB}). This implies that in this case, global output feedback does not provide new coding opportunities, nor does it enhance the existing ones. Similar observation has been made in the context of MIMO Interference Channels~\cite{tandon2012degrees,vaze2011degrees}, even though the coding opportunities in Binary IC and MIMO IC are not the same.
\end{remark}

Finally, we present our result for the case of Instantaneous-CSIT and output feedback. Note that in this scenario, although transmitters have instantaneous knowledge of the channel state information, the output signals are available at the transmitters with unit delay. This scenario corresponds to a slow-fading channel where output feedback links are available from the receivers to the transmitters.

\begin{theorem}
\label{THM:IC-ICSITFB}
{\bf [Capacity Region with Instantaneous-CSIT and OFB]} For the two-user binary IC with Instantaneous-CSIT and OFB, the capacity region $\mathcal{C}^{\mathrm{ICSIT,OFB}}$, is the set of all rate tuples $(R_1,R_2)$ satisfying
\begin{equation}
\label{eq:capacity-FBInst}
\mathcal{C}^{\mathrm{ICSIT,OFB}} =
\left\{ \begin{array}{ll}
\vspace{1mm} 0 \leq R_i \leq 1-q^2, &  i=1,2,\\
R_1 + R_2 \leq 1 - q^2 + p q.  &  \\
\end{array} \right.\end{equation}
\end{theorem}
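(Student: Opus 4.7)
The plan is to establish the capacity region by matching inner and outer bounds. The region $\mathcal{C}^{\mathrm{ICSIT,OFB}}$ is a polygon whose nontrivial corner points are $(pq,1-q^2)$ and $(1-q^2,pq)$, so by symmetry and time-sharing it suffices to achieve $(1-q^2, pq)$. For achievability, I would exploit Instantaneous-CSIT for slot-by-slot scheduling and OFB for a relay-like mechanism in which ${\sf Tx}_2$ helps carry ${\sf Tx}_1$'s bits that would otherwise be lost. At each time instant $t$, ${\sf Tx}_1$ transmits a fresh message bit whenever $G_{11}[t]=1$ or $G_{12}[t]=1$: bits with $G_{11}[t]=1$ are delivered to ${\sf Rx}_1$ directly (with ${\sf Tx}_2$ silent whenever $G_{21}[t]=1$ to avoid corruption), while bits with $G_{11}[t]=0, G_{12}[t]=1$ reach only ${\sf Rx}_2$ and are recovered by ${\sf Tx}_2$ from $Y_2[t]$ via OFB by subtracting its own known contribution $G_{22}[t]X_2[t]$; ${\sf Tx}_2$ then forwards such bits to ${\sf Rx}_1$ in slots with $G_{11}[t]=0, G_{21}[t]=1$. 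A law-of-large-numbers argument shows that the forwarding demand and supply both equal $pq$ per slot in expectation, yielding $R_1 = pq+p^2+pq = p(1+q) = 1-q^2$, while ${\sf Tx}_2$ concurrently transmits its own bits in the remaining slots that neither disturb ${\sf Rx}_1$ nor collide at ${\sf Rx}_2$, yielding $R_2 \geq pq$.

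For the converse on individual rates, $R_i \leq 1-q^2$ follows from a cooperative outer bound: providing $\hbox{W}_{\bar{i}}$ to ${\sf Tx}_i$ as a genie makes the two transmitters fully cooperate, reducing the problem to a MISO point-to-point channel from the joint transmitter to ${\sf Rx}_i$. Fano's inequality combined with $\max_{p(X_1,X_2)} H(Y_i|G) \leq \Pr[(G_{ii},G_{\bar{i}i}) \neq (0,0)] = 1-q^2$ yields the claim.

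The sum-rate bound $R_1+R_2 \leq 1-q^2+pq$ is the main obstacle. Although it coincides with the no-OFB Instantaneous-CSIT sum-rate bound (Theorem~\ref{THM:IC-InstCSIT}), the clean no-OFB argument in which $X_1^n$ is a deterministic function of $(\hbox{W}_1, G^n)$ breaks here, because OFB lets $X_1[t]$ depend on $Y_1^{t-1}$ and hence on $\hbox{W}_2$. The plan is to start from Fano and
\begin{equation*}
n(R_1+R_2) \leq I(\hbox{W}_1; Y_1^n | G^n) + I(\hbox{W}_2; Y_2^n | \hbox{W}_1, G^n) + n\epsilon,
\end{equation*}
expand telescopically as a sum of per-slot terms, and handle the OFB dependence by conditioning on the enlarged sigma-algebra $(\hbox{W}_1, Y_1^{t-1}, G^t)$, given which $X_1[t]$ is again deterministic. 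An extremal inequality tailored to this setting---analogous in spirit to the one the paper develops for Delayed-CSIT BCs, but exploiting Instantaneous-CSIT to tighten the per-slot bound---then controls the resulting weighted conditional entropy difference by $(1-q^2)+pq$; summing over $t$ completes the converse and matches the achievability.
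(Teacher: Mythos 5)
Your achievability plan is in the same spirit as the paper's: a block-Markov relay scheme in which ${\sf Tx}_2$ recovers ${\sf Tx}_1$'s overheard bits from $Y_2$ via OFB and retransmits them in the next block while sending its own fresh bits in the $G_{21}[t]=0,\ G_{22}[t]=1$ slots. The paper's version is more granular (it tracks exactly which of Cases 2, 11, 12, 14, 15 each overheard bit arose in and forwards it in the complementary case of the next block, using backward decoding at ${\sf Rx}_2$), but your description captures the essential mechanism. Your individual-rate bound via a cooperative genie is also fine, though the paper gets it even more cheaply: $nR_i \leq I(\hbox{W}_i;Y_i^n|G^n) \leq H(Y_i^n|G^n) \leq (1-q^2)n$, with no genie.

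The gap is in the sum-rate converse, which you correctly flag as the obstacle but then defer to an unconstructed ``new extremal inequality.'' That inequality is not needed; the paper handles OFB with a direct chain-rule argument. Starting from Fano with $I(\hbox{W}_1;Y_1^n|\hbox{W}_2,G^n)+I(\hbox{W}_2;Y_2^n|G^n)$ (note the genie $\hbox{W}_2$ goes on the \emph{first} term, opposite to your decomposition), and using $H(Y_1^n|\hbox{W}_1,\hbox{W}_2,G^n)=0$ together with the chain-rule identity $H(Y_2^n|\hbox{W}_2,G^n)=H(Y_1^n,Y_2^n|\hbox{W}_2,G^n)-H(Y_1^n|Y_2^n,\hbox{W}_2,G^n)$, one obtains
\begin{align*}
n(R_1+R_2-2\epsilon_n) \;\leq\; H(Y_2^n|G^n)\; +\; H(Y_1^n|Y_2^n,\hbox{W}_2,G^n).
\end{align*}
The step you were missing is that keeping $Y_2^n$ in the conditioning is precisely what tames the OFB dependence: given $(\hbox{W}_2,Y_2^{t-1},G^t)$, the input $X_2^t$ is deterministic despite feedback, and given additionally $Y_2^t$ and $G^t$, the quantity $G_{12}^tX_1^t = Y_2^t \oplus G_{22}^tX_2^t$ is determined. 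Dropping the message/history from the condition leaves $\sum_t H\bigl(Y_1[t]\,\big|\,G_{12}[t]X_1[t],\,G_{21}[t]X_2[t],\,G^n\bigr)$, and each summand is at most $pq$ because the only unresolved part of $Y_1[t]=G_{11}[t]X_1[t]\oplus G_{21}[t]X_2[t]$ is $G_{11}[t]X_1[t]$ in the event $G_{11}[t]=1,\ G_{12}[t]=0$. Together with $H(Y_2[t]|G^n)\leq 1-q^2$, this gives $R_1+R_2 \leq 1-q^2+pq$ outright. So the converse does not call for a new single-letter extremal inequality analogous to Lemma~\ref{lemma:portion}; your plan would work in principle if the inequality were supplied, but you have not supplied it, and the simpler route above closes the argument.
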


\begin{remark}
Comparing the capacity region of the two-user BFIC with Instantaneous-CSIT, with OFB~(\ref{eq:capacity-FBInst}) and without OFB~(\ref{eq:fullNSIregion}), we observe that the outer-bound on the sum-rate remains unchanged. However, the bounds on individual rates are further increased to $1-q^2$. Similar to the previous remark, this is again due to the additional communication path provided by OFB from each transmitter to its intended receiver. However, since the outer-bound on sum-rate with Instantaneous-CSIT and OFB~(\ref{eq:capacity-FBInst}) is higher than that of Delayed-CSIT and OFB~(\ref{eq:capacity-FB}), the bounds on individual rates cannot be eliminated.
\end{remark}

The proof of the results is organized as follows. The proof of Theorem~\ref{THM:IC-DelayedCSIT} is presented in Sections~\ref{sec:achallhalf} and~\ref{sec:outerHalf}. The proof of Theorem~\ref{THM:IC-DCSITFB} is presented in Sections~\ref{sec:AchDelayedFB} and~\ref{sec:conversedelayedhalfFB}, and finally, the proof of Theorem~\ref{THM:IC-ICSITFB} is presented in Sections~\ref{sec:AchInstFB} and~\ref{sec:ConvInstFB}. We end this section by illustrating our main results via an example in which $p = 0.5$.

\subsection{Illustration of the Main Results for $p = 0.5$}

For this particular value of the channel parameter, the capacity region with Delayed-CSIT and Instantaneous-CSIT with or without output feedback is given in Table~\ref{table:regions}, and Fig.~\ref{fig:illustration} illustrates the results presented in this table. We notice the following remarks.
\begin{table}[h]
\caption{Illustration of our main results through an example in which $p = 0.5$.}
\centering
\begin{tabular}{| c | c | c |}
\hline
      		 &  Capacity Region     &  Capacity Region \\
					 &  with Delayed-CSIT   &  with Instantaneous-CSIT  \\ [0.5ex]

\hline

\raisebox{15pt}{No-OFB}   & \raisebox{15pt}{$ \left\{ \begin{array}{ll}  \vspace{1mm} R_i \leq \frac{1}{2}  & \\ R_i + \frac{3}{2} R_{\bar{i}} \leq \frac{9}{8}  &  \end{array} \right. $} & \raisebox{15pt}{$ \left\{ \begin{array}{ll}  \vspace{1mm} R_1 \leq \frac{1}{2}  & \\ R_2 \leq \frac{1}{2}  &  \end{array} \right. $} \\

\hline

\raisebox{15pt}{OFB}    & \raisebox{15pt}{$ \left\{ \begin{array}{ll}  \vspace{1mm}  R_1 + \frac{3}{2} R_{2} \leq \frac{9}{8} & \\ \frac{3}{2} R_1 + R_{2} \leq \frac{9}{8}  &  \end{array} \right. $} & \raisebox{15pt}{$ \left\{ \begin{array}{ll}  \vspace{1mm} R_i \leq \frac{3}{4}  & \\ R_1 + R_2 \leq 1  &  \end{array} \right. $} \\

\hline

\end{tabular}
\label{table:regions}
\end{table}

\begin{remark}
Note that for $p=0.5$, we have
$$\mathcal{C}^{\mathrm{No-CSIT}} \subset \mathcal{C}^{\mathrm{DCSIT}} \subset \mathcal{C}^{\mathrm{ICSIT}}.$$

In other words, the capacity region with Instantaneous-CSIT is strictly larger than that of Delayed-CSIT, which is in turn strictly larger than the capacity region with No-CSIT. Moreover, we have
$$\mathcal{C}^{\mathrm{DCSIT,OFB}} \subset \mathcal{C}^{\mathrm{ICSIT,OFB}},$$
meaning that the instantaneous knowledge of the CSIT enlarges the capacity region of the two-user BFIC with OFB compared to the case of Delayed-CSIT.
\end{remark}

\begin{remark}
\label{remark:comparison}
In Fig.~\ref{fig:illustration}(c), we have illustrated the capacity region with Delayed-CSIT, with and without output feedback. First, we observe that OFB enlarges the capacity region. Second, we observe that the optimal sum-rate point is the same for $p =0.5$. However, this is not always the case. In fact, for some values of $p$, output feedback can even increase the optimal sum-rate. Using the results of Theorem~\ref{THM:IC-DelayedCSIT} and Theorem~\ref{THM:IC-DCSITFB}, we have plotted the sum-rate capacity of the two-user Binary Fading IC with and without OFB for the Delayed-CSIT model in Fig.~\ref{fig:woFB}. Note that for $0 < p < \left( 3 - \sqrt{5} \right)/2$, the sum-rate capacity with OFB is strictly larger than the no OFB scenario. 
\end{remark}

\begin{figure}[h]
\centering
\includegraphics[height = 5 cm]{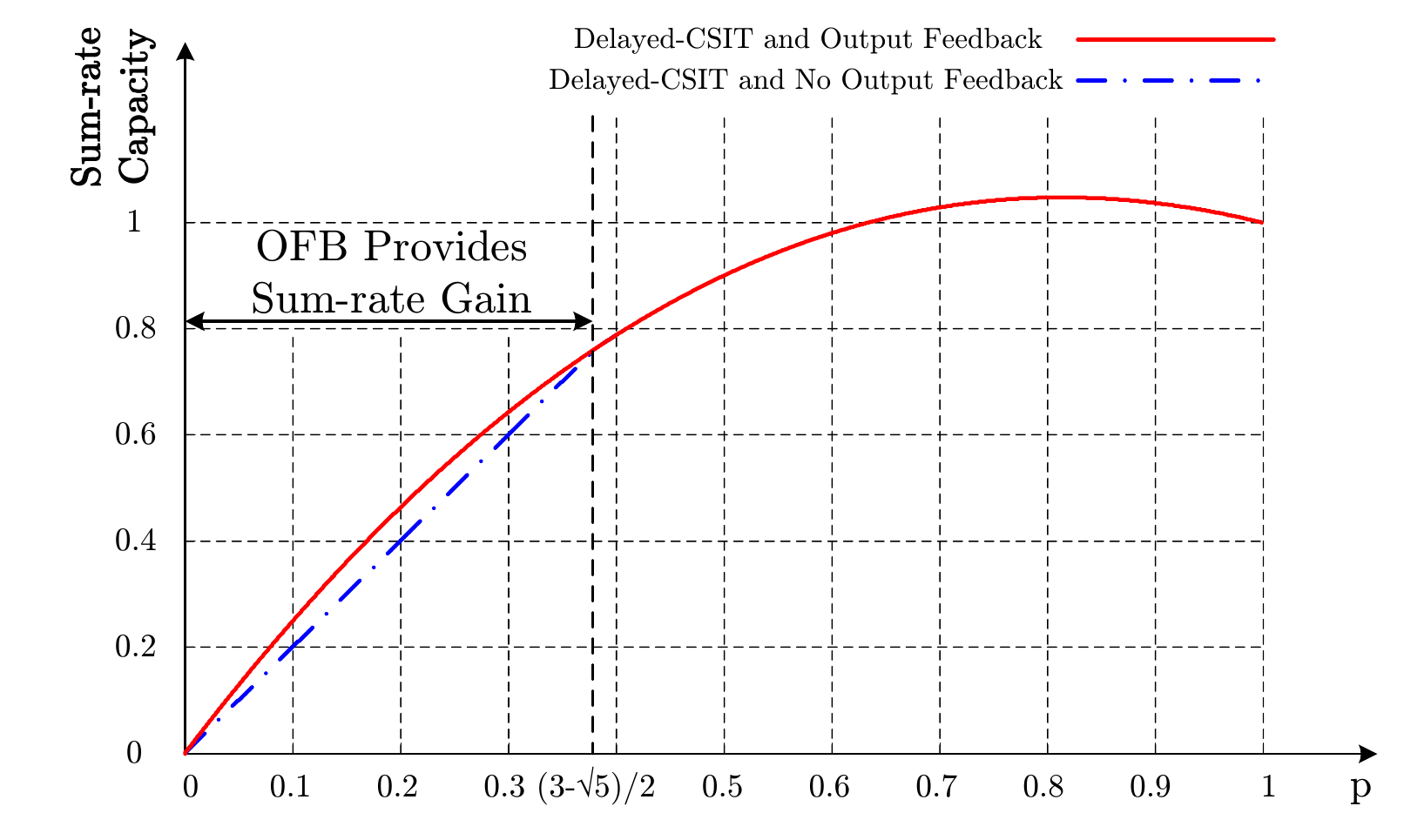}
\caption{The sum-rate capacity of the two-user BFIC with Delayed-CSIT, with and without output feedback. For $0 < p < \left( 3 - \sqrt{5} \right)/2$, the sum-rate capacity with OFB is strictly larger than the scenario where no OFB is available.\label{fig:woFB}}
\end{figure}

\begin{remark}
Comparing the capacity region of the two-user BFIC with Instantaneous-CSIT, with OFB~(\ref{eq:capacity-FBInst}) and without OFB~(\ref{eq:fullNSIregion}), we observe that OFB enlarges the capacity region. Moreover, similar to the Delayed-CSIT scenario, the optimal sum-rate point is the same for $p =0.5$. Again, this is not always the case. In fact, for $0 < p < 0.5$, output feedback can even increase the optimal sum-rate. Using the results of Theorem~\ref{THM:IC-InstCSIT} and Theorem~\ref{THM:IC-ICSITFB}, we have plotted the sum-rate capacity of the two-user Binary Fading IC with and without OFB in Fig.~\ref{fig:woFBInst}.
\end{remark}

\begin{figure}[h]
\centering
\includegraphics[height = 5 cm]{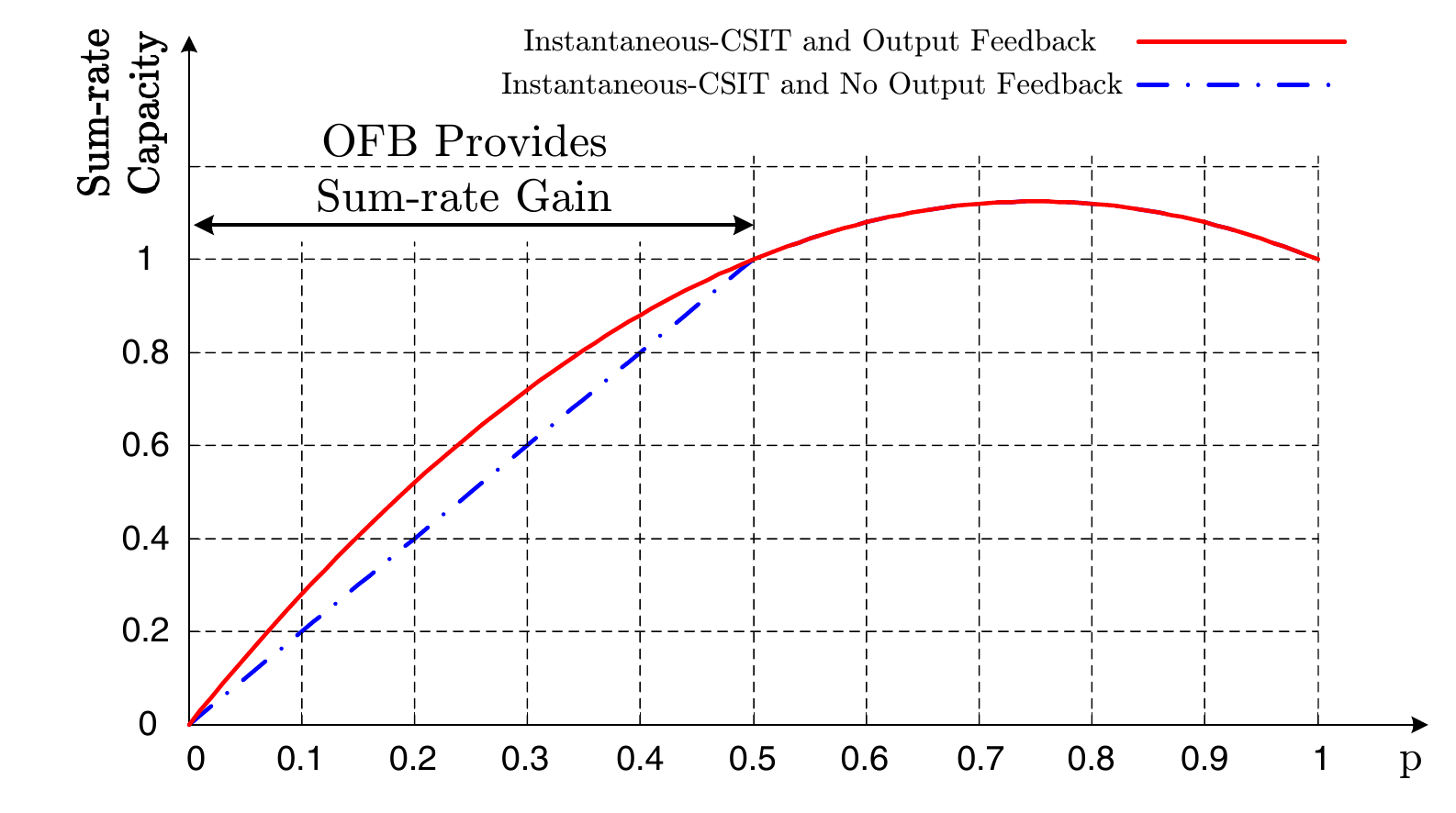}
\caption{The sum-rate capacity of the two-user BFIC with Instantaneous-CSIT, with and without output feedback. For $0 < p < 0.5$, the sum-rate capacity with OFB is strictly larger than the scenario where no OFB is available.\label{fig:woFBInst}}
\end{figure}

\begin{remark}
In Fig.~\ref{fig:woFB} and Fig.~\ref{fig:woFBInst}, we have identified the range of $p$ for which output feedback provides sum-rate gain. Basically, when the sum-rate capacity without OFB is dominated by the capacity of the direct links (\emph{i.e.} $2p$), the additional communication paths created by the means of output feedback links help increase the optimal sum-rate. 
\end{remark}

\begin{remark}
While our capacity results in Theorem~\ref{THM:IC-DelayedCSIT} and Theorem~\ref{THM:IC-DCSITFB} are for binary fading interference channels, in~\cite{vahid2013communication}, we have shown how they can also be utilized to obtain capacity results for a class of wireless packet networks.
\end{remark}

In the following section, we present the main ideas that we incorporate in this paper.



\section{Overview of the Key Ideas}
\label{sec:opportunities}


Our goal in this section is to present the key techniques we use in this paper both for achievability and converse purposes. Although we will provide detailed explanation of the achievability strategy and converse proofs for all different scenarios, we found it instructive to elaborate the main ideas through several clarifying examples. Furthermore, the coding opportunities introduced in this section can be applicable to DoF analysis of wireless networks with linear schemes (Section~III.A of~\cite{issa2013two}) or interference management in packet collision networks (Section~IV of~\cite{vahid2013communication}).

\begin{figure*}[ht]
\centering
\subfigure[]{\includegraphics[height = 3 cm]{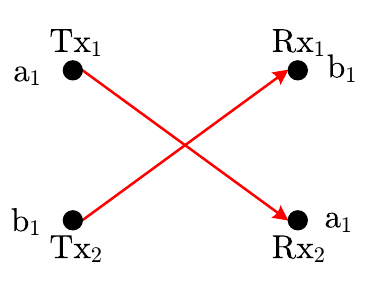}}
\hspace{0.5 in}
\subfigure[]{\includegraphics[height = 3 cm]{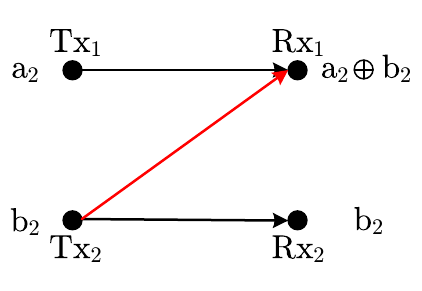}}
\hspace{0.5 in}
\subfigure[]{\includegraphics[height = 3 cm]{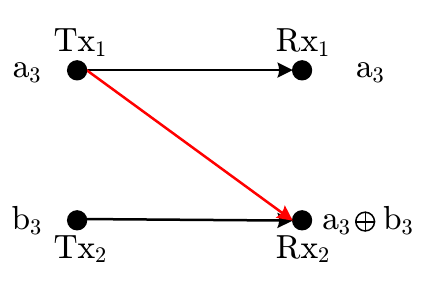}}
\caption{\it Achievability ideas with Delayed-CSIT: $(a)$ via Delayed-CSIT transmitters can figure out which bits are already known at the unintended receivers. Transmission of these bits will no longer create interference at the unintended receivers; $(b)$ to decode the bits, it is sufficient that ${\sf Tx}_2$ provides ${\sf Rx}_1$ with $b_2$ while this bit is available at ${\sf Rx}_2$; and $(c)$ is similar to $(b)$. Note that in $(b)$ and $(c)$ the intended receivers are swapped.\label{fig:opp7-8-9}}
\end{figure*}

\subsection{Achievability Ideas with Delayed-CSIT}
\label{sec:oppideas1}

As we have described in Section~\ref{sec:problem}, the channel gains are independent from each other and over time. This way, transmitters cannot use the delayed knowledge of the channel state information to predict future. However, this information can still be very useful. In particular, Delayed-CSIT allows us to evaluate the contributions of the desired signal and the interference at each receiver in the past signaling stages and exploit it as available side information for future communication. 

\subsubsection{Interference-free Bits}

Using Delayed-CSIT transmitters can identify previously transmitted bits such that if retransmitted, they do not create any further interference. The following examples clarify this idea. 

\emph{ Example 1} [Creating interference channels with side information]: Suppose at a time instant, each one of the transmitters simultaneously sends one data bit. The bits of ${\sf Tx}_1$ and ${\sf Tx}_2$ are denoted by $a_1$ and $b_1$ respectively. Later, using Delayed-CSIT, transmitters figure out that only the cross links were equal to $1$ at this time instant as shown in Fig.~\ref{fig:opp7-8-9}(a). This means that in future, transmission of these bits will no longer create interference at the unintended receivers. 

\emph{ Example 2} [Creating interference channels with swapped receivers and side information]: Assume that at a time instant, transmitters one and two simultaneously send data bits $a_2$ and $b_2$ respectively. Again through Delayed-CSIT, transmitters realize that all links except the link between ${\sf Tx}_1$ and ${\sf Rx}_2$ were equal to $1$, see Fig.~\ref{fig:opp7-8-9}(b). In a similar case, assume that at another time instant, transmitters one and two send data bits $a_3$ and $b_3$ at the same time. Through Delayed-CSIT, transmitters realize that all links except the link between ${\sf Tx}_2$ and ${\sf Rx}_1$ were connected, see Fig.~\ref{fig:opp7-8-9}(c). Then it is easy to see that to successfully finish delivering these bits, it is enough that ${\sf Tx}_1$ sends $a_3$ to ${\sf Rx}_2$, while this bit is already available at ${\sf Rx}_1$; and ${\sf Tx}_2$ sends $b_2$ to ${\sf Rx}_1$, while this bit is already available at ${\sf Rx}_2$. Note that here the intended receivers are \emph{swapped}. 

\begin{remark}
As described in Examples~1 and~2, an interference free bit can be retransmitted without worrying about creating interference at the unintended receiver. These bits can be transferred to a sub-problem, where in a two-user interference channel, ${\sf Rx}_i$ has apriori access to $\hbox{W}_{\bar{i}}$ as depicted in Fig.~\ref{fig:detIC-intfree}, $i=1,2$. Since there will be no interference in this sub-problem, such bits can be communicated at higher rates.
\end{remark}

\begin{figure}[h]
\centering
\includegraphics[height = 5cm]{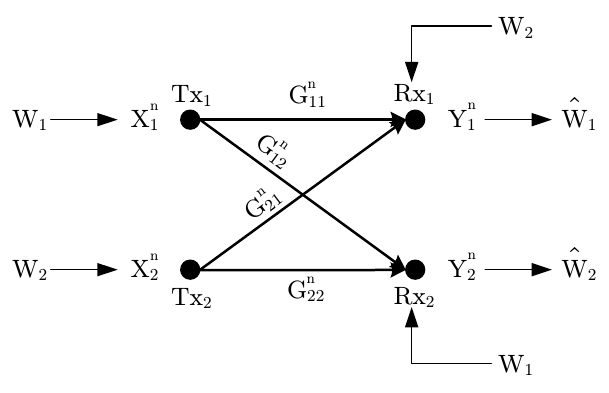}
\caption{Interference channel with side information: the capacity region with no, delayed, or instantaneous CSIT is the same.\label{fig:detIC-intfree}}
\end{figure}

\subsubsection{Bits of Common Interest}

Transmitters can use the delayed knowledge of the channel state information to identify bits that are of interest of both receivers. Below, we clarify this idea through several examples.

\emph{ Example 3} [Opportunistic creation of bits of common interest]: Suppose at a time instant, each one of the transmitters sends one data bit, say $a_4$ and $b_4$ respectively. Later, using Delayed-CSIT, transmitters figure out that all links were equal to $1$. In this case, both receivers have an equation of the transmitted bits, see Fig.~\ref{fig:opp-common}(a). Now, we notice that it is sufficient to provide either of the transmitted bits, $a_4$ or $b_4$, to both receivers rather than retransmitting both bits. We refer to such bits as bits of common interest. Since such bits are useful for both receivers, they can be transmitted more efficiently. 

\begin{figure}[h]
\centering
\subfigure[]{\includegraphics[height = 2.5 cm]{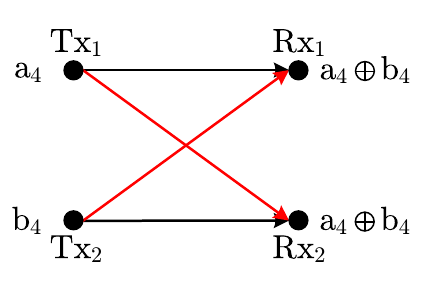}}
\hspace{0.1 in}
\subfigure[]{\includegraphics[height = 2.5 cm]{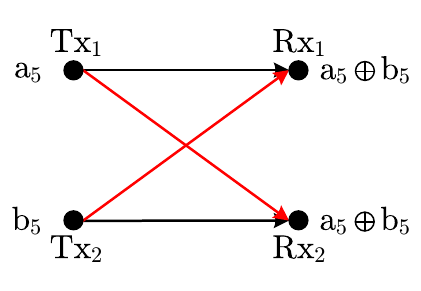}}
\caption{\it In each case, it is sufficient to provide only one of the transmitted bits to both receivers. We refer to such bits as bits of common interest.\label{fig:opp-common}}
\end{figure}



\begin{remark} [Pairing bits of common interest to create a two-multicast problem] We note that in Example 3, one of the transmitters takes the responsibility of delivering one bit of common interest to both receivers. To improve the performance, we can pair this problem with another similar problem as follows. Assume that in another time instant, each one of the transmitters sends one data bit, say $a_5$ and $b_5$ respectively.  Later,  transmitters figure out that all links were equal to $1$, see Fig.~\ref{fig:opp-common}(b). In this case, similar to Example 3, one of the bits $a_5$ and $b_5$, say $b_5$, can be chosen as the bit of common interest. Now we can pair cases depicted in Fig.~\ref{fig:opp-common}(a) and Fig.~\ref{fig:opp-common}(b). Then transmitters can simultaneously send bits $a_4$ and $b_5$ to both receivers. With this pairing, we take advantage of all four links to transmit information. 
\end{remark}

\begin{remark} [Pairing ICs with side information to create a two-multicast problem (pairing Type-I)]
The advantage of interference channels with side information, explained in Examples~1 and~2, is that due to the side information, there is no interference involved in the problem. The downside is that half of the links in the channel become irrelevant and unexploited. More precisely, the cross links in  Example~1 and the direct links in Example~2 are not utilized to increase the rate. Here, we show that these two problems can be paired together to form an efficient two-multicast problem via creating bits of common interest. Referring to Fig.~\ref{fig:opp7-8-9}, one can easily verify that it is enough to deliver $a_1 \oplus a_3$ and $b_1 \oplus b_2$ to both receivers. For instance, if $a_1 \oplus a_3$ and $b_1 \oplus b_2$ are available at ${\sf Rx}_1$, it can remove $b_1$ from $b_1 \oplus b_2$ to decode $b_2$, then using $b_2$ and $a_2 \oplus b_2$ it can decode $a_2$; finally, using $a_3$ and $a_1 \oplus a_3$ it can decode $a_1$. Indeed, bit $a_1 \oplus a_3$ available at ${\sf Tx}_1$, and bit $b_1 \oplus b_2$ available at ${\sf Tx}_2$, are bits of common interest and can be transmitted to both receivers simultaneously in the efficient two-multicast problem as depicted in Fig.~\ref{fig:two-multicastopp}. We note that for the two-multicast problem, the capacity region with no, delayed, or instantaneous CSIT is the same. We shall refer to this pairing as pairing {\bf Type-I} throughout the paper.
\end{remark}

\begin{figure}[h]
\centering
\includegraphics[height = 3.5 cm]{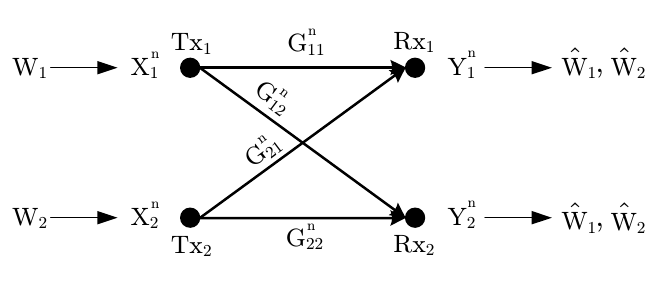}
\caption{Two-multicast network. Transmitter ${\sf Tx}_i$ wishes to reliably communicate message $\hbox{W}_i$ to both receivers, $i=1,2$. The capacity region with no, delayed, or instantaneous CSIT is the same.\label{fig:two-multicastopp}}
\end{figure}

\emph{ Example 4} [Pairing interference-free bits with bits of common interest to create a two-multicast problem (pairing Type-II)]: Suppose at a time instant, each one of the transmitters sends one data bit, say $a_6$ and $b_6$ respectively. Later, using Delayed-CSIT, transmitters figure out that all links were equal to $1$, see Fig.~\ref{fig:opp-commoncoding}(a). In another time instant, each one of the transmitters sends one data bit, say $a_7$ and $b_7$ respectively.  Later,  transmitters figure out that only the cross links were equal to $1$, see Fig.~\ref{fig:opp-commoncoding}(b). Now, we observe that providing $a_6 \oplus a_7$ and $b_6 \oplus b_7$ to both receivers is sufficient to decode the bits. For instance if ${\sf Rx}_1$ is provided with $a_6 \oplus a_7$ and $b_6 \oplus b_7$, then it will use $b_7$ to decode $b_6$, from which it can obtain $a_6$, and finally using $a_6$ and $a_6 \oplus a_7$, it can decode $a_7$. Thus, bit $a_6 \oplus a_7$ available at ${\sf Tx}_1$, and bit $b_6 \oplus b_7$ available at ${\sf Tx}_2$, are bits of common interest and can be transmitted to both receivers simultaneously in the efficient two-multicast problem. We shall refer to this pairing as pairing {\bf Type-II} throughout the paper.

\begin{figure}[h]
\centering
\subfigure[]{\includegraphics[height = 2.5 cm]{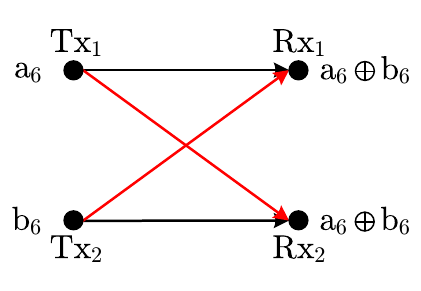}}
\hspace{0.1 in}
\subfigure[]{\includegraphics[height = 2.5 cm]{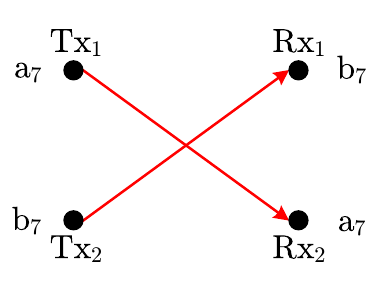}}
\caption{\it Pairing Type-II: providing $a_6 \oplus a_7$ and $b_6 \oplus b_7$ to both receivers is sufficient to decode the bits. In other words, bit $a_6 \oplus a_7$ available at ${\sf Tx}_1$, and bit $b_6 \oplus b_7$ available at ${\sf Tx}_2$, are bits of common interest and can be transmitted to both receivers simultaneously in the efficient two-multicast problem. Note that in $(b)$, the cross links would have been irrelevant for future communications, however, using this pairing, we exploit all links.\label{fig:opp-commoncoding}}
\end{figure}

\emph{ Example 5} [Pairing bits of common interest with interference-free bits with swapped receivers to create a two-multicast problem (pairing Type-III)]: Suppose at a time instant, each one of the transmitters sends one data bit, say $a_8$ and $b_8$ respectively. Later, using Delayed-CSIT, transmitters figure out that all links were equal to $1$ as in Fig.~\ref{fig:opp-commoncodingswapped}(a). In another time instant, each one of transmitters sends one data bit, say $a_9$ and $b_9$ respectively.  Later, transmitters figure out that all links were equal to $1$ except the link from ${\sf Tx}_2$ to ${\sf Rx}_1$, see Fig.~\ref{fig:opp-commoncodingswapped}(b). In a similar case, assume that at another time instant, transmitters one and two send data bits $a_{10}$ and $b_{10}$ at the same time. Through Delayed-CSIT, transmitters realize that all links except the link between ${\sf Tx}_1$ and ${\sf Rx}_2$ were connected, see Fig.~\ref{fig:opp-commoncodingswapped}(c). We observe that providing $a_8 \oplus a_9$ and $b_8 \oplus b_{10}$ to both receivers is sufficient to decode the bits. For instance, if ${\sf Rx}_1$ is provided with $a_8 \oplus a_9$ and $b_8 \oplus b_{10}$, then it will use $a_9$ to decode $a_8$, from which it can obtain $b_8$, then using $b_8$ and $b_8 \oplus b_{10}$, it gains access to $b_{10}$, finally using $b_{10}$, it can decode $a_{10}$ from $a_{10} \oplus b_{10}$. Thus, bit $a_8 \oplus a_9$ available at ${\sf Tx}_1$, and bit $b_8 \oplus b_{10}$ available at ${\sf Tx}_2$, are bits of common interest and can be transmitted to both receivers simultaneously in the efficient two-multicast problem. We shall refer to this pairing as pairing {\bf Type-III} throughout the paper.

\begin{figure}[h]
\centering
\subfigure[]{\includegraphics[height = 2.5 cm]{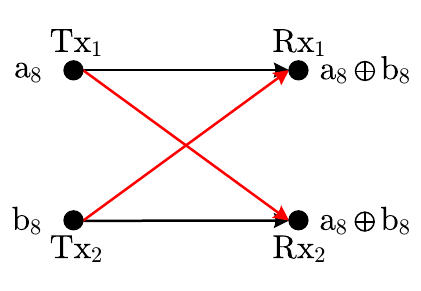}}
\hspace{0.1 in}
\subfigure[]{\includegraphics[height = 2.5 cm]{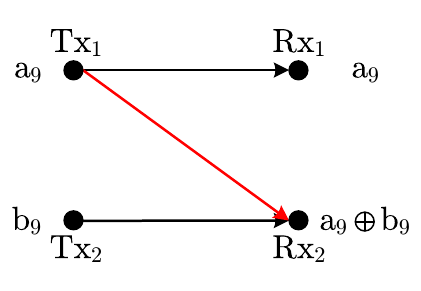}}
\subfigure[]{\includegraphics[height = 2.5 cm]{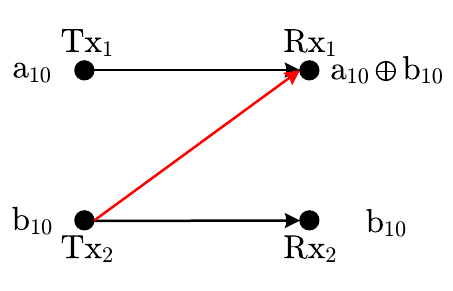}}
\caption{\it Pairing Type-III: providing $a_8 \oplus a_9$ and $b_8 \oplus b_{10}$ to both receivers is sufficient to decode the bits. In other words, bit $a_8 \oplus a_9$ available at ${\sf Tx}_1$, and bit $b_8 \oplus b_{10}$ available at ${\sf Tx}_2$, are bits of common interest and can be transmitted to both receivers simultaneously in the efficient two-multicast problem.\label{fig:opp-commoncodingswapped}}
\end{figure}

As explained in the above examples, there are several ways to exploit the available side information at each transmitter. To achieve the capacity region,  the first challenge is to evaluate various options and choose the most efficient one.  The second challenge is that different opportunities may occur with different probabilities. This makes the process of matching, combining, and upgrading the status of the bits difficult. Unfortunately, there is no simple guideline to decide when to search for the most efficient combination of the opportunities and when to hold on to other schemes. It is also important to note that most of the opportunities we observe here, do not appear in
achieving the DoF of the Gaussian multi-antenna interference channels (see \emph{e.g.},~\cite{GhasemiX1,Vaze_DCSIT_MIMO_BC}).




\subsection{Achievability Ideas with Output Feedback}
\label{sec:oppideas2}

In this subsection, we focus on the effect of the output feedback in the presence of Delayed-CSIT. The first observation is that through output feedback, each transmitter can evaluate the interference of the other transmitter, and therefore, has access to the previously transmitted signal of the other user. Thus, output feedback can create new path for information flow between  each transmitter and the corresponding receiver, \emph{e.g.}, $${\sf Tx}_1 \rightarrow {\sf Rx}_2 \rightarrow {\sf Tx}_2 \rightarrow {\sf Rx}_1.$$   

Although this additional path can improve the rate region, the advantage of output feedback is not limited to that. We explain the new opportunities through two examples

\emph{ Example 6} [Creating two-multicast problem from ICs with side information]:
In the previous subsection, we showed that interference-free transmissions can be upgraded  to two-multicast problems through pairing. However, it is important to note that the different channel realizations used for pairing do not occur at the same probability. Therefore, it is not always possible to fully implement pairing in all cases. In particular, in some cases, some interference-free transmissions are left alone without possibility of pairing. In this example, we show that output feedback allows us to create bits of common interest out of these cases, which in turn allows us to create two-multicast problems.  Referring to Fig.~\ref{fig:opp13}, one can see that through the output feedback links, transmitters one and two can learn $b_{11}$ and $a_{11}$ respectively. Therefore, either of the transmitters is able to create $a_{11} \oplus b_{11}$. It is easy to see that  $a_{11} \oplus b_{11}$ is of interest of both receivers. Indeed, output feedback allows us to form a bit of common interest which can be delivered through the efficient two-multicast problem. 

\begin{figure}[h]
\centering
\includegraphics[height = 2.5 cm]{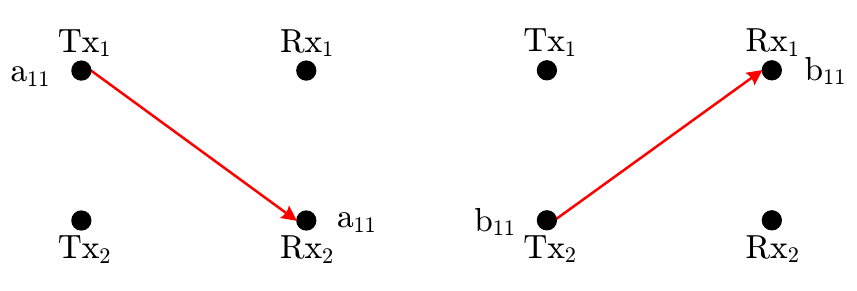}
\caption{\it Opportunistic creation of bits of common interest using output feedback: bit $b_{11}$ is available at ${\sf Tx}_1$ via the feedback link from ${\sf Rx}_1$; it is sufficient that ${\sf Tx}_1$ provides $a_{11} \oplus b_{11}$ to both receivers.\label{fig:opp13}}
\end{figure}

\emph{ Example 7} [Creating two-multicast problem from ICs with swapped receivers and side information]:
As another example, consider the two channel gain realizations depicted in Fig.~\ref{fig:opp14}. In these cases, using output feedback ${\sf Tx}_1$ can learn the transmitted bit of ${\sf Tx}_2$ (\emph{i.e.} $b_{12}$), and then form  $a_{13} \oplus b_{12}$. It is easy to see that $a_{13} \oplus b_{12}$ is useful for both receivers and thus is a bit of common interest. Similar argument is valid for the second receiver. This means that output feedback allows us to upgrade interference-free transmissions with swapped receivers to bits of common interest that can be used to form efficient two-multicast problems.

\begin{figure}[h]
\centering
\includegraphics[height = 2.5 cm]{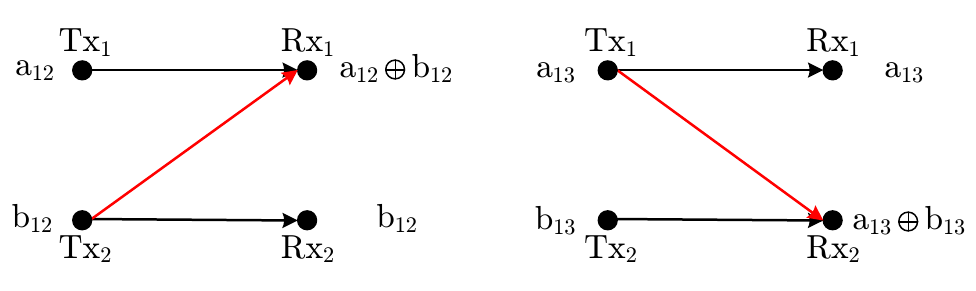}
\caption{\it Opportunistic creation of bits of common interest using output feedback: using output feedback ${\sf Tx}_1$ can learn the transmitted bit of ${\sf Tx}_2$ (\emph{i.e.} $b_{12}$); now, we observe that providing $a_{13} \oplus b_{12}$ to both receivers is sufficient to decode the intended bits.\label{fig:opp14}}
\end{figure}

\subsection{Key Idea for Converse Proofs with Delayed-CSIT}

While we provide detailed proofs in Sections~\ref{sec:outerHalf} and~\ref{sec:conversedelayedhalfFB}, we try to describe the main challenge in deriving the outer-bounds in this subsection. Consider the Delayed-CSIT scenario and suppose rate tuple $\left( R_1, R_2 \right)$ is achievable. Then for $\beta > 0$, we have
\begin{align}
n &\left( R_1 + \beta R_2 \right) = H(W_1|W_2, G^n) + \beta H(W_2|G^n) \nonumber \\
& \overset{(\mathrm{Fano})}\leq I(W_1;Y_1^n|W_2,G^n) + \beta I(W_2;Y_2^n|G^n) + n \epsilon_n \nonumber \\
& = \beta H(Y_2^n|G^n) + \underbrace{H(G_{11}^nX_1^n|G^n) - \beta H(G_{12}^nX_1^n|G^n)} + n \epsilon_n.
\end{align}

We refer the reader to Section~\ref{sec:outerHalf} for the detailed derivation of each step. Here, we would like to find a value of $\beta$ such that  
\begin{align}
\label{eq:betanegative}
H(G_{11}^nX_1^n|G^n) - \beta H(G_{12}^nX_1^n|G^n) \leq 0,
\end{align}
for \emph{any} input distribution. Note that since the terms involved are only a function of $X_1^n$ and the channel gains, this term resembles a broadcast channel formed by ${\sf Tx}_1$ and the two receivers. Therefore, the main challenge boils down to understanding the ratio of the entropies of the received signals in a broadcast channel, and this would be the main focus of this subsection.

\begin{figure}[ht]
\centering
\includegraphics[height = 3cm]{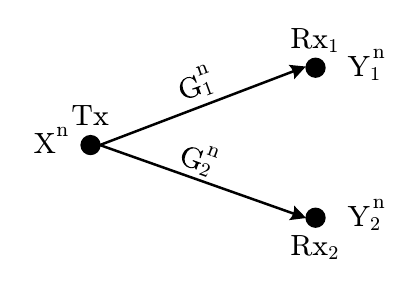}
\caption{A transmitter connected to two receivers through binary fading channels.\label{fig:portionHalfopp}}
\end{figure}

Consider a transmitter that is connected to two receivers through binary fading channels as depicted in Fig.~\ref{fig:portionHalfopp}. We would like to understand how much this transmitter can privilege receiver one to receiver two, given outdated knowledge of the channel state information. Our metric would be the ratio of the entropies of the received signals\footnote{We point out that if $H(G_{11}^nX_1^n|G^n) = 0$, then ratio is not defined. But we keep in mind that what we really care about is (\ref{eq:betanegative}).}. In other words, we would like to understand what is the lower-bound on the ratio of the entropy of the received signal at ${\sf Rx}_2$ to that of ${\sf Rx}_1$. We first point out the result for the No-CSIT and Instantaneous-CSIT cases. With No-CSIT, from transmitter's point of view the two receivers are identical and it cannot favor one over the other and as a result, the two entropies would be equal. However with Instantaneous-CSIT, transmitter can choose to transmit at time $t$ only if $G_1[t] = 1$ and $G_2[t] = 0$. Thus, with Instantaneous-CSIT the ratio of interest could be as low as $0$. For the Delayed-CSIT case, we have the following lemma which we will formally prove in Section~\ref{sec:outerHalf}. Here, we try to provide some intuition about the problem by describing an input distribution that utilizes delayed knowledge of the channel state information in order to favor receiver one. It is important to keep in mind that this should not be considered as a proof but rather just a helpful intuition. Also, we point out that for the two-user BFIC with Delayed-CSIT and OFB, we will derive a variation of this lemma in Section~\ref{sec:conversedelayedhalfFB}.

\begin{lemma}
\label{lemma:portionopp}
{\bf [Entropy Leakage]} For the channel described above with Delayed-CSIT, and for \emph{any} input distribution, we have
\begin{align}
\label{eq:portionopp}
H\left( Y_2^n | G^n \right) \geq \frac{p}{1-q^2} H\left( Y_1^n | G^n \right).
\end{align}
\end{lemma}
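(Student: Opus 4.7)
The plan is to introduce a single auxiliary quantity,
\[
\alpha_t = H\!\left(X[t] \,\big|\, Y_1^{t-1}, Y_2^{t-1}, G^{t-1}\right),
\]
representing the residual uncertainty in the channel input at time $t$ after conditioning on what both receivers have collectively observed, and then to sandwich $H(Y_1^n\mid G^n)$ and $H(Y_2^n\mid G^n)$ between two multiples of $\sum_{t=1}^{n}\alpha_t$. The key structural property I exploit is that under Delayed-CSIT, $X[t]$ is a deterministic function of $(W, G^{t-1})$, so $X[t]$ (and therefore also $Y_1^{t-1}, Y_2^{t-1}$) is independent of the current and future channel vectors $(G[t], G[t+1], \ldots, G[n])$ once $G^{t-1}$ is known. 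This is what will let the channel-activity probabilities factor cleanly out of the per-slot entropies.

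The first step is a per-slot computation at the common reference point $(Y_1^{t-1}, Y_2^{t-1}, G^n)$. For a fixed realization $(y_1^{t-1}, y_2^{t-1}, g^n)$, the pair $(Y_1[t], Y_2[t]) = (g_1[t] X[t], g_2[t] X[t])$ equals $(0,0)$ whenever $g_1[t] = g_2[t] = 0$ and otherwise fully reveals $X[t]$, while the residual entropy of $X[t]$ depends on $g^n$ only through $g^{t-1}$. Using the independence of $G[t]$ from $(W, G^{t-1}, Y_1^{t-1}, Y_2^{t-1})$, I would conclude
\[
H\!\left(Y_1[t], Y_2[t] \,\big|\, Y_1^{t-1}, Y_2^{t-1}, G^n\right) = (1-q^2)\,\alpha_t,
\]
and, by the same argument applied only to the second output,
\[
H\!\left(Y_2[t] \,\big|\, Y_1^{t-1}, Y_2^{t-1}, G^n\right) = p\,\alpha_t.
\]

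The second step is to aggregate via the chain rule. Summing the joint identity across $t$ gives $H(Y_1^n, Y_2^n\mid G^n) = (1-q^2)\sum_t \alpha_t$, while summing the marginal identity together with the standard inequality $H(Y_2[t]\mid Y_2^{t-1}, G^n) \geq H(Y_2[t]\mid Y_1^{t-1}, Y_2^{t-1}, G^n)$ (conditioning on more reduces entropy) yields $H(Y_2^n\mid G^n) \geq p\sum_t \alpha_t$. Combined with the trivial bound $H(Y_1^n\mid G^n) \leq H(Y_1^n, Y_2^n\mid G^n)$, these two inequalities give
\[
H(Y_1^n\mid G^n) \;\leq\; (1-q^2)\sum_t \alpha_t \;\leq\; \frac{1-q^2}{p}\, H(Y_2^n\mid G^n),
\]
which rearranges to (\ref{eq:portionopp}).

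The step I expect to be the main technical hurdle is justifying the clean factorization $(1-q^2)\alpha_t$ for the joint conditional entropy. The potential pitfall is that $(Y_1^{t-1}, Y_2^{t-1})$ is correlated with $G^{t-1}$, so one might worry that the current-slot indicator $\mathbf{1}_{\{G_1[t]\vee G_2[t] = 1\}}$ is tangled with the input-uncertainty factor $H(X[t]\mid y_1^{t-1}, y_2^{t-1}, g^{t-1})$. Verifying that the former is a function of $G[t]$ alone, while the latter is measurable with respect to $(W, G^{t-1})$, and hence that the two are independent under the product structure of $G^n$, is the crux of the argument. This is precisely where Delayed-CSIT is used: in the Instantaneous-CSIT case, $X[t]$ could depend on $G[t]$, the factorization would fail, and the ratio can degrade to $0$, consistent with the intuition discussed right before the lemma.
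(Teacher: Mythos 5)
Your proof is correct and takes essentially the same approach as the paper: the paper also works slot-by-slot, uses the identity $H(Y_2[t]\mid Y_2^{t-1},G^t) = pH(X[t]\mid Y_2^{t-1},G^t)$ together with "conditioning reduces entropy" to reach the joint term $H(Y_1[t],Y_2[t]\mid Y_1^{t-1},Y_2^{t-1},G^t)$ with the factor $1/(1-q^2)$, and then sums by the chain rule and drops $Y_2^n$. Your auxiliary quantity $\alpha_t$ and the explicit independence of $G[t],\ldots,G[n]$ from $(W,G^{t-1})$ are exactly the intermediate object and the mechanism the paper's steps $(a)$--$(c)$ rely on, so this is the same argument with a slightly more explicit sandwich presentation.
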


\begin{figure}[ht]
\centering
\subfigure[]{\includegraphics[height = 2.5 cm]{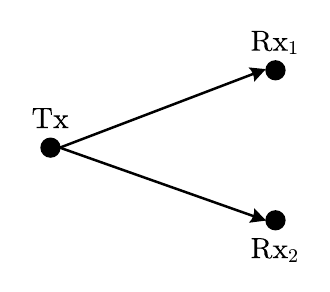}}
\hspace{0.5 in}
\subfigure[]{\includegraphics[height = 2.5 cm]{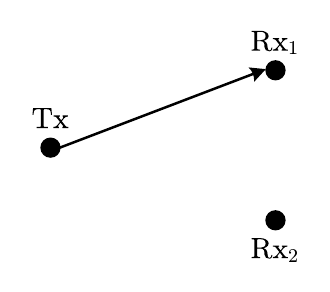}}
\subfigure[]{\includegraphics[height = 2.5 cm]{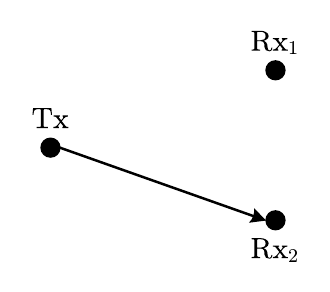}}
\hspace{0.5 in}
\subfigure[]{\includegraphics[height = 2.5 cm]{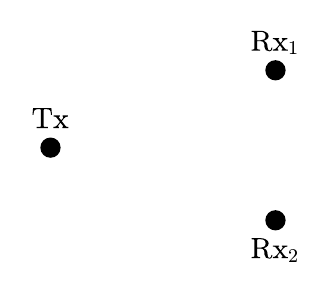}}
\caption{\it Four possible channel realizations for the network in Fig.~\ref{fig:portionHalfopp}. The transmitter sends out a data bit at time instant $t$, and at the next time instant, using Delayed-CSIT, he knows which channel realization has occurred.  If either of the realizations $(a)$ or $(b)$ occurred at time $t$, then we remove the transmitted bit from the initial queue. However, if either of the realizations $(c)$ or $(d)$ occurred at time $t$, we leave this bit in the initial queue. This way the transmitter favors receiver one over receiver two.\label{fig:realizationsopp}}
\end{figure}

As mentioned before, we do not intend to prove this lemma here. We only provide an input distribution for which this lower-bound is tight. Consider $m$ bits drawn from i.i.d. Bernoulli $0.5$ random variables and assume these bits are in some initial queue. At any time instant $t$, the transmitter sends one of the bits in this initial queue (if the queue is empty, then the scheme is terminated). At time instant $t+1$, using Delayed-CSIT, the transmitter knows which one of the four possible channel realizations depicted in Fig.~\ref{fig:realizationsopp} has occurred at time $t$. If either of the realizations $(a)$ or $(b)$ occurred at time $t$, then we remove the transmitted bit from the initial queue. However, if either of the realizations $(c)$ or $(d)$ occurred at time $t$, we leave this bit in the initial queue (\emph{i.e.} among the bits that can be transmitted at any future time instant). Note that this way, any bit that is available at ${\sf Rx}_2$ would be available at ${\sf Rx}_1$. However, there will be bits that are only available at ${\sf Rx}_1$. Hence, transmitter has favored receiver one over receiver two. If we analyze this scheme, we get
\begin{align}
H\left( Y_2^n | G^n \right) = \frac{p}{1-q^2} H\left( Y_1^n | G^n \right),
\end{align}
meaning that the bound given in (\ref{eq:portionopp}) is achievable and thus, it is tight.

Now that we have described the key ideas we incorporate in this paper, starting next section, we provide the proof of our main results.

\section{Achievability Proof of Theorem~\ref{THM:IC-DelayedCSIT}~[Delayed-CSIT]}
\label{sec:achallhalf}


For $0 \leq p \leq \left( 3 - \sqrt{5} \right)/2$, the capacity of the two-user BFIC with Delayed-CSIT is depicted in Fig.~\ref{fig:regionHalf}(a) and as a result, it is sufficient to describe the achievability for point $A = \left( p, p \right)$. However, for $\left( 3 - \sqrt{5} \right)/2 < p \leq 1$, all bounds are active and the region, as depicted in Fig.~\ref{fig:regionHalf}(b), is the convex hull of points $A,B,$ and $C$. By symmetry, it is sufficient to describe the achievability for points $A$ and $C$ in this regime. 

We first provide the achievability proof of point $A$ for $0.5 \leq p \leq 1$ in this section. Then, we provide an overview of the achievability proof of corner point $C$ and we postpone the detailed proof to Appendix~\ref{Appendix:cornerdelayed}. Finally in Appendix~\ref{Appendix:lessthanHalf}, we present the achievability proof of point $A$ for $0 \leq p < 0.5$. 





\begin{figure}[ht]
\centering
\subfigure[]{\includegraphics[height = 4.5cm]{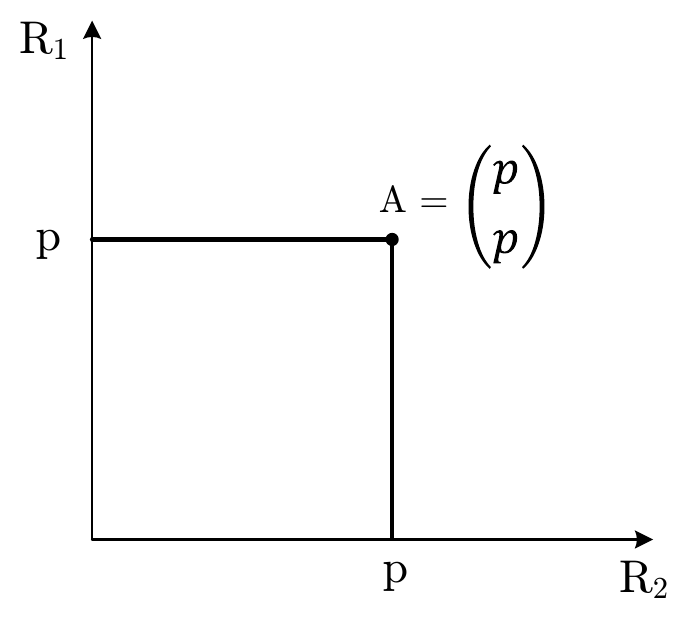}}
\subfigure[]{\includegraphics[height = 4.5cm]{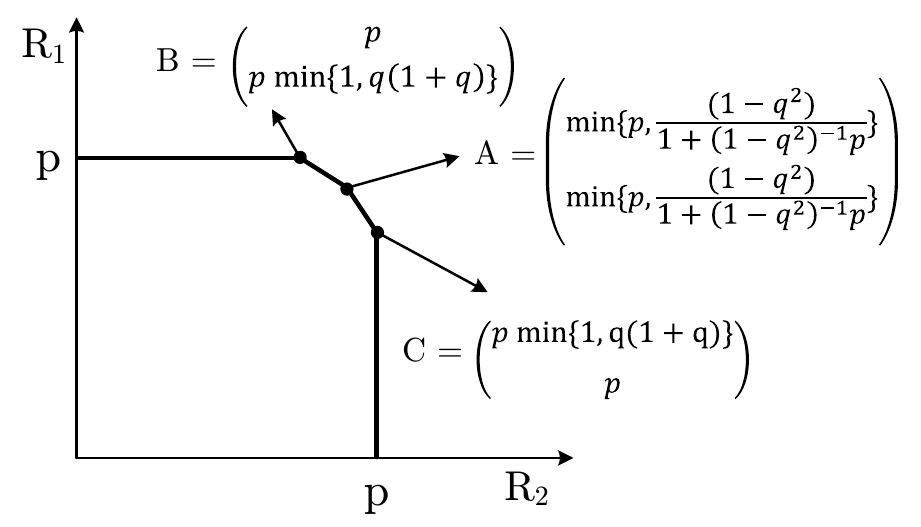}}
\caption{\it Capacity Region of the two-user Binary Fading IC with Delayed-CSIT for: $(a)$ $0 \leq p \leq \left( 3 - \sqrt{5} \right)/2$; and $(b)$ $\left( 3 - \sqrt{5} \right)/2 < p \leq 1$.\label{fig:regionHalf}}
\end{figure}

  

\subsection{Achievability Strategy for Corner Point $A$}

In this subsection, we describe a transmission strategy that achieves a rate tuple arbitrary close to corner point $A$ for $0.5 \leq p \leq 1$ as depicted in Fig.~\ref{fig:regionHalf}(b), \emph{i.e.} 
\begin{align}
\label{eq:cornerA}
R_1 = R_2 = \frac{(1-q^2)}{1+(1-q^2)^{-1}p} \raisebox{2pt}{.}
\end{align}

Let the messages of transmitters one and two be denoted by $\hbox{W}_1 = a_1,a_2,\ldots,a_m$, and $\hbox{W}_2 = b_1,b_2,\ldots,b_m$, respectively, where data bits $a_i$'s and $b_i$'s are picked uniformly and independently from $\{ 0,1 \}$, $i=1,\ldots,m$. We show that it is possible to communicate these bits in 
\begin{align}
n = \left( 1 - q^2 \right)^{-1}m + \left( 1 - q^2 \right)^{-2} pm + O\left( m^{2/3}\right)
\end{align}
time instants\footnote{Throughout the paper whenever we state the number of bits or time instants, say $n$, if the expression for a given value of $p$ is not an integer, then we use the ceiling of that number $\lceil n \rceil$, where $\lceil . \rceil$ is the smallest integer greater than or equal to $n$. Note that since we will take the limit as $m \rightarrow \infty$, this does not change the end results.} with vanishing error probability (as $m \rightarrow \infty$). Therefore achieving the rates given in (\ref{eq:cornerA}) as $m \rightarrow \infty$. Our transmission strategy consists of two phases as described below. 


\begin{table*}[t]
\caption{All possible channel realizations and transitions from the initial queue to other queues; solid arrow from tranmsitter ${\sf Tx}_i$ to receiver ${\sf Rx}_j$ indicates that $G_{ij}[t] = 1$, $i,j \in \{ 1,2\}$, $t=1,2,\ldots,n$. Bit ``$a$'' represents a bit in $Q_{1 \rightarrow 1}$ while bit ``$b$'' represents a bit in $Q_{2 \rightarrow 2}$.}
\centering
\begin{tabular}{| c | c | c | c | c | c |}
\hline
case ID		 & channel realization    & state transition  & case ID		 & channel realization    & state transition \\
					 & at time instant $n$    &                   & 					 & at time instant $n$    &                  \\ [0.5ex]

\hline

\raisebox{18pt}{$1$}    &    \includegraphics[height = 1.4cm]{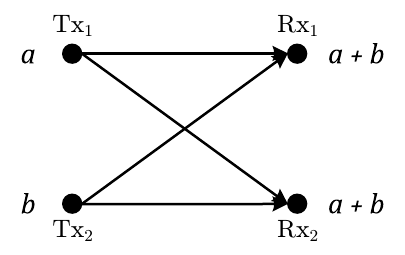}     &  \raisebox{18pt}{$ \left\{ \begin{array}{ll}  \vspace{1mm} a \rightarrow Q_{1,{\sf C}_1}  & \\ b \rightarrow Q_{2,{\sf C}_1}  &  \end{array} \right. $}  &  \raisebox{18pt}{$9$}    &    \includegraphics[height = 1.4cm]{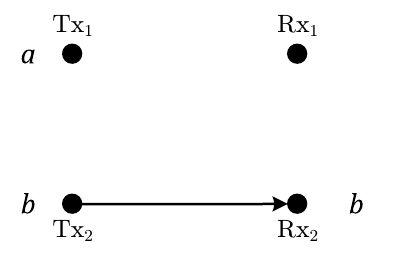}     &   \raisebox{18pt}{$ \left\{ \begin{array}{ll}  \vspace{1mm} a \rightarrow Q_{1 \rightarrow 1}  & \\ b \rightarrow Q_{2 \rightarrow F}  &  \end{array} \right. $} \\

\hline

\raisebox{18pt}{$2$}    &    \includegraphics[height = 1.4cm]{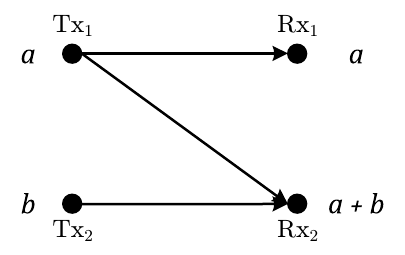}     &  \raisebox{18pt}{$ \left\{ \begin{array}{ll}  \vspace{1mm} a \rightarrow Q_{1 \rightarrow 2|1}  & \\ b \rightarrow Q_{2 \rightarrow F}  &  \end{array} \right. $} &  \raisebox{18pt}{$10$}    &    \includegraphics[height = 1.4cm]{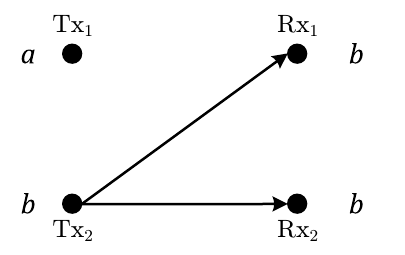}     &   \raisebox{18pt}{$ \left\{ \begin{array}{ll} \vspace{1mm} a \rightarrow Q_{1 \rightarrow 1}  & \\ b \rightarrow Q_{2 \rightarrow F}  &  \end{array} \right. $} \\

\hline

\raisebox{18pt}{$3$}    &    \includegraphics[height = 1.4cm]{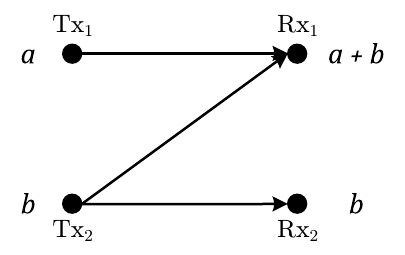}     &  \raisebox{18pt}{$ \left\{ \begin{array}{ll}  \vspace{1mm} a \rightarrow Q_{1 \rightarrow F}  & \\ b \rightarrow Q_{2 \rightarrow 1|2}   &  \end{array} \right. $} &  \raisebox{18pt}{$11$}    &    \includegraphics[height = 1.4cm]{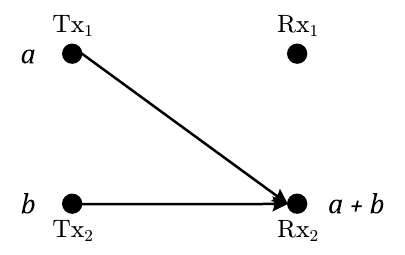}     &   \raisebox{18pt}{$ \left\{ \begin{array}{ll}  \vspace{1mm} a \rightarrow Q_{1 \rightarrow \{ 1,2\}}  & \\ b \rightarrow Q_{2 \rightarrow F} &  \end{array} \right. $} \\

\hline

\raisebox{18pt}{$4$}    &    \includegraphics[height = 1.4cm]{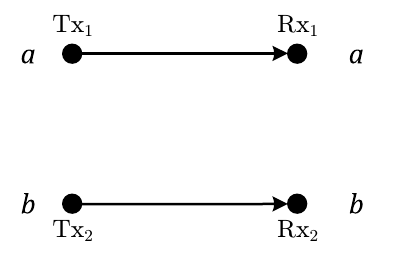}     &   \raisebox{18pt}{$ \left\{ \begin{array}{ll} \vspace{1mm} a \rightarrow Q_{1 \rightarrow F}  & \\ b \rightarrow Q_{2 \rightarrow F}  &  \end{array} \right. $} &  \raisebox{18pt}{$12$}    &    \includegraphics[height = 1.4cm]{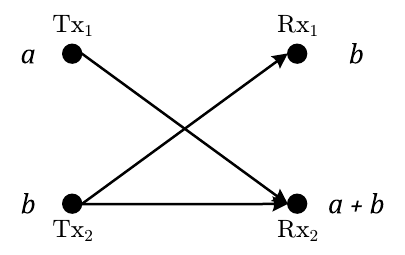}     &   \raisebox{18pt}{$ \left\{ \begin{array}{ll}  \vspace{1mm} a \rightarrow Q_{1 \rightarrow \{ 1,2\}}  & \\ b \rightarrow Q_{2 \rightarrow F} &  \end{array} \right. $} \\

\hline

\raisebox{18pt}{$5$}    &    \includegraphics[height = 1.4cm]{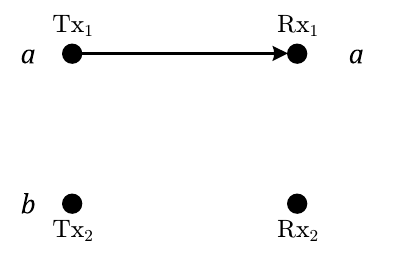}     &   \raisebox{18pt}{$ \left\{ \begin{array}{ll} \vspace{1mm} a \rightarrow Q_{1 \rightarrow F}  & \\ b \rightarrow Q_{2 \rightarrow 2}  &  \end{array} \right. $} &  \raisebox{18pt}{$13$}    &    \includegraphics[height = 1.4cm]{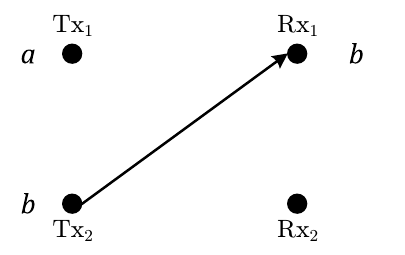}     &   \raisebox{18pt}{$ \left\{ \begin{array}{ll}  \vspace{1mm} a \rightarrow Q_{1 \rightarrow 1}  & \\ b \rightarrow Q_{2 \rightarrow 2|1}  &  \end{array} \right. $} \\

\hline

\raisebox{18pt}{$6$}    &    \includegraphics[height = 1.4cm]{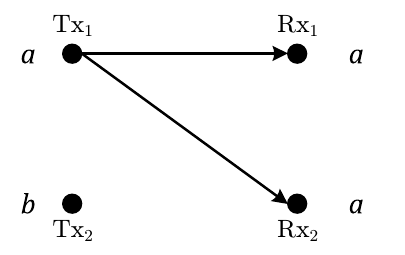}     &   \raisebox{18pt}{$ \left\{ \begin{array}{ll} \vspace{1mm} a \rightarrow Q_{1 \rightarrow F}  & \\ b \rightarrow Q_{2 \rightarrow 2}  &  \end{array} \right. $} &  \raisebox{18pt}{$14$}    &    \includegraphics[height = 1.4cm]{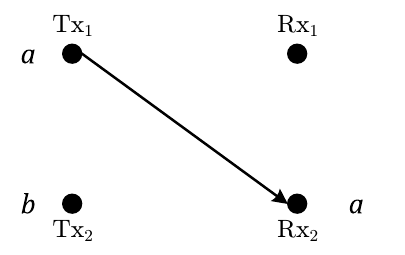}     &   \raisebox{18pt}{$ \left\{ \begin{array}{ll}  \vspace{1mm} a \rightarrow Q_{1 \rightarrow 1|2}  & \\ b \rightarrow Q_{2 \rightarrow 2}  &  \end{array} \right. $} \\

\hline

\raisebox{18pt}{$7$}    &    \includegraphics[height = 1.4cm]{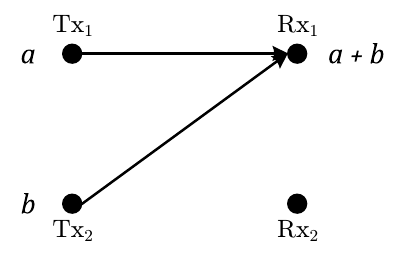}     &   \raisebox{18pt}{$ \left\{ \begin{array}{ll}  \vspace{1mm} a \rightarrow Q_{1 \rightarrow F}  & \\ b \rightarrow Q_{2 \rightarrow \{ 1,2\}} &  \end{array} \right. $} &  \raisebox{18pt}{$15$}    &    \includegraphics[height = 1.4cm]{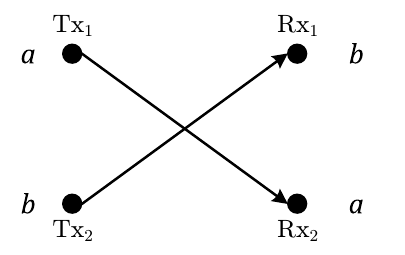}     &   \raisebox{18pt}{$ \left\{ \begin{array}{ll} \vspace{1mm} a \rightarrow Q_{1 \rightarrow 1|2}  & \\ b \rightarrow Q_{2 \rightarrow 2|1}  &  \end{array} \right. $} \\

\hline

\raisebox{18pt}{$8$}    &    \includegraphics[height = 1.4cm]{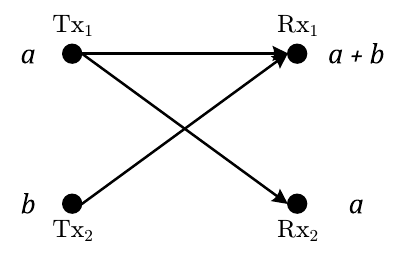}     &   \raisebox{18pt}{$ \left\{ \begin{array}{ll}  \vspace{1mm} a \rightarrow Q_{1 \rightarrow F}  & \\ b \rightarrow Q_{2 \rightarrow \{ 1,2\}} &  \end{array} \right. $} &  \raisebox{18pt}{$16$}    &    \includegraphics[height = 1.4cm]{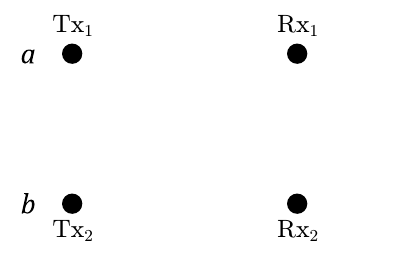}     &   \raisebox{18pt}{$ \left\{ \begin{array}{ll} \vspace{1mm} a \rightarrow Q_{1 \rightarrow 1}  & \\ b \rightarrow Q_{2 \rightarrow 2}  &  \end{array} \right. $} \\

\hline

\end{tabular}
\label{table:16cases}
\end{table*}


\noindent {\bf Phase 1} [uncategorized transmission]: At the beginning of the communication block, we assume that the bits at ${\sf Tx}_i$ are in queue $Q_{i \rightarrow i}$ (the initial state of the bits), $i=1,2$. At each time instant $t$, ${\sf Tx}_i$ sends out a bit from $Q_{i \rightarrow i}$, and this bit will either stay in the initial queue or transition to one of the following possible queues will take place according to the description in Table~\ref{table:16cases}. If at time instant $t$, $Q_{i \rightarrow i}$ is empty, then ${\sf Tx}_i$, $i=1,2$, remains silent until the end of Phase 1.
\begin{enumerate}
\item[(A)] $Q_{i , {\sf C}_1 }$: The bits that at the time of communication, all channel gains were equal to $1$.
\item[(B)] $Q_{i \rightarrow \{ 1,2 \} }$: The bits that are of common interest of both receivers and do not fall in category (A).  
\item[(C)] $Q_{i \rightarrow i|\bar{i}}$: The bits that are required by ${\sf Rx}_i$ but are available at the unintended receiver ${\sf Rx}_{\bar{i}}$. A bit is in  $Q_{i \rightarrow i|\bar{i}}$ if ${\sf Rx}_{\bar{i}}$ gets it without interference and ${\sf Rx}_i$ does not get it with or without interference.
\item[(D)] $Q_{i \rightarrow \bar{i}|i}$: The bits that are required by ${\sf Rx}_{\bar{i}}$ but are available at the intended receiver ${\sf Rx}_i$. More precisely, a bit is in  $Q_{i \rightarrow \bar{i}|i}$ if ${\sf Rx}_{i}$ gets the bit without interference and ${\sf Rx}_{\bar{i}}$ gets it with interference.
\item[(E)] $Q_{i \rightarrow F}$: The bits that we consider delivered and no retransmission is required for them. 
\end{enumerate}

More precisely, based on the channel realizations, a total of $16$ possible configurations may occur at any time instant as summarized in Table~\ref{table:16cases}. The transition for each one of the channel realizations is as follows. 

\begin{itemize}

\item Case~1~$\left( \caseone \right)$: If at time instant $t$, Case 1 occurs, then each receiver gets a linear combination of the bits that were transmitted. Then as illustrated in Fig.~\ref{fig:case1}, if either of such bits is provided to both receivers then the receivers can decode both bits. The transmitted bit of ${\sf Tx}_i$ leaves $Q_{i \rightarrow i}$ and joins $Q_{i,{\sf C}_1}$\footnote{In this paper, we assume that the queues are ordered. Meaning that the first bit that joins the queue is placed at the head of the queue and any new bit occupies the next empty position. For instance, suppose there are $\ell$ bits in $Q_{1,C_1}$ and $\ell$ bits in $Q_{2,C_1}$, then the next time Case~1 occurs, the transmitted bit of ${\sf Tx}_i$ is placed at position $\ell + 1$ in $Q_{i,C_1}$, $i=1,2$.}, $i=1,2$. Although we can consider such bits as bits of common interest, we keep them in an intermediate queue for now and as we describe later, we combine them with other bits to create bits of common interest. 
\begin{figure}[h]
\centering
\includegraphics[height = 3 cm]{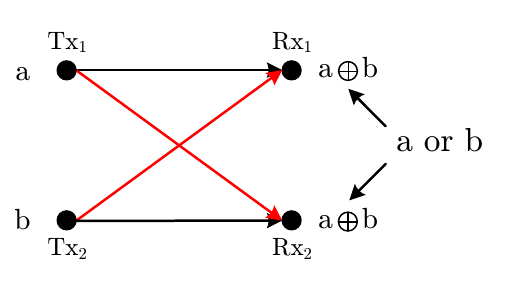}
\caption{Suppose transmitters one and two send out data bits $a$ and $b$ respectively and Case $1$ occurs. Now, if either of the transmitted bits is provided to both receivers, then each receiver can decode its corresponding bit.\label{fig:case1}}
\end{figure}

\item Case~2~$\left( \casetwo \right)$: In this case, ${\sf Rx}_1$ has already received its corresponding bit while ${\sf Rx}_2$ has a linear combination of the transmitted bits, see Table~\ref{table:16cases}. As a result, if the transmitted bit of ${\sf Tx}_1$ is provided to ${\sf Rx}_2$, it will be able to decode both bits. In other words, the transmitted bit from ${\sf Tx}_1$ is available at ${\sf Rx}_1$ and is required by ${\sf Rx}_2$. Therefore, transmitted bit of ${\sf Tx}_1$ leaves $Q_{1 \rightarrow 1}$ and joins $Q_{1 \rightarrow 2|1}$. Note that the bit of ${\sf Tx}_2$ will not be retransmitted since upon delivery of the bit of ${\sf Tx}_1$, ${\sf Rx}_2$ can decode its corresponding bit. Since no retransmission is required, the bit of ${\sf Tx}_2$ leaves $Q_{2 \rightarrow 2}$ and joins $Q_{2,F}$ (the final state of the bits).

\vspace{1mm}

\item Case~3~$\left( \casethree \right)$: This is similar to Case~2 with swapping user IDs.

\vspace{1mm}

\item Case~4~$\left( \casefour \right)$: In this case, each receiver gets its corresponding bit without any interference. We consider such bits to be delivered and no retransmission is required. Therefore, the transmitted bit of ${\sf Tx}_i$ leaves $Q_{i \rightarrow i}$ and joins $Q_{i,F}$, $i=1,2$.

\vspace{1mm}

\item Case~5~$\left( \casefive \right)$ and Case~6~$\left( \casesix \right)$: In these cases, ${\sf Rx}_1$ gets its corresponding bit interference free. We consider this bit to be delivered and no retransmission is required. Therefore, the transmitted bit of ${\sf Tx}_1$ leaves $Q_{1 \rightarrow 1}$ and joins $Q_{1,F}$, while the transmitted bit of ${\sf Tx}_2$ remains in $Q_{2 \rightarrow 2}$.

\vspace{1mm}

\item Case~7~$\left( \caseseven \right)$: In this case, ${\sf Rx}_1$ has a linear combination of the transmitted bits, while ${\sf Rx}_2$ has not received anything, see Table~\ref{table:16cases}. It is sufficient to provide the transmitted bit of ${\sf Tx}_2$ to both receivers. Therefore, the transmitted bit of ${\sf Tx}_2$ leaves $Q_{2 \rightarrow 2}$ and joins $Q_{2 \rightarrow \{ 1,2 \}}$. Note that the bit of ${\sf Tx}_1$ will not be retransmitted since upon delivery of the bit of ${\sf Tx}_2$, ${\sf Rx}_1$ can decode its corresponding bit. This bit leaves $Q_{1 \rightarrow 1}$ and joins $Q_{1,F}$. Similar argument holds for Case~8~$\left( \caseeight \right)$.

\vspace{1mm}

\item Cases~9,10,11, and 12: Similar to Cases~5,6,7, and 8 with swapping user IDs respectively.

\vspace{1mm}

\item Case~13~$\left( \nearrow \right)$: In this case, ${\sf Rx}_1$ has received the transmitted bit of ${\sf Tx}_2$ while ${\sf Rx}_2$ has not received anything, see Table~\ref{table:16cases}. Therefore, the transmitted bit of ${\sf Tx}_1$ remains in $Q_{1 \rightarrow 1}$, while the transmitted bit of ${\sf Tx}_2$ is required by ${\sf Rx}_2$ and it is available at ${\sf Rx}_1$. Hence, the transmitted bit of ${\sf Tx}_2$ leaves $Q_{2 \rightarrow 2}$ and joins $Q_{2 \rightarrow 2|1}$. Queue $Q_{2 \rightarrow 2|1}$ represents the bits at ${\sf Tx}_2$ that are available at ${\sf Rx}_1$, but ${\sf Rx}_2$ needs them.

\vspace{1mm}

\item Case~14~$\left( \searrow \right)$: This is similar to Case~13 with swapping user IDs.

\vspace{1mm}

\item Case~15~$\left( \casefifteen \right)$: In this case, ${\sf Rx}_1$ has received the transmitted bit of ${\sf Tx}_2$ while ${\sf Rx}_2$ has received the transmitted bit of ${\sf Tx}_1$, see Table~\ref{table:16cases}. In other words, the transmitted bit of ${\sf Tx}_2$ is available at ${\sf Rx}_1$ and is required by ${\sf Rx}_2$; while the transmitted bit of ${\sf Tx}_1$ is available at ${\sf Rx}_2$ and is required by ${\sf Rx}_1$. Therefore, we have transition from $Q_{i \rightarrow i}$ to $Q_{i \rightarrow i|\bar{i}}$, $i=1,2$.

\vspace{1mm}

\item Case~16: The transmitted bit of ${\sf Tx}_i$ remains in $Q_{i \rightarrow i}$, $i=1,2$.

\end{itemize}

Phase~$1$ goes on for 
\begin{align}
\left( 1 - q^2 \right)^{-1} m + m^{\frac{2}{3}}
\end{align}
time instants, and if at the end of this phase, either of the queues $Q_{i \rightarrow i}$ is not empty, we declare error type-\cnt and halt the transmission (we assume $m$ is chosen such that $m^{\frac{2}{3}} \in \mathbb{Z}$).

Assuming that the transmission is not halted, let $N_{i,{\sf C}_1}$, $N_{i \rightarrow j|\bar{j}}$, and $N_{i \rightarrow \{ 1,2\}}$ denote the number of bits in queues $Q_{i,{\sf C}_1}$, $Q_{i \rightarrow j|\bar{j}}$, and $Q_{i \rightarrow \{ 1,2\}}$ respectively at the end of the transitions, $i=1,2$, and $j = i,\bar{i}$. The transmission strategy will be halted and an error type-\cnt will occur, if any of the following events happens.
\begin{align}
\label{eq:errortypeI}
& N_{i,{\sf C}_1} > \mathbb{E}[N_{i,{\sf C}_1}] + m^{\frac{2}{3}} \overset{\triangle}= n_{i,{\sf C}_1}, \quad i=1,2; \nonumber \\
& N_{i \rightarrow j|\bar{j}} > \mathbb{E}[N_{i \rightarrow j|\bar{j}}] + m^{\frac{2}{3}} \overset{\triangle}= n_{i \rightarrow j|\bar{j}}, i=1,2,~j = i,\bar{i}; \nonumber \\
& N_{i \rightarrow \{ 1,2 \}} > \mathbb{E}[N_{i \rightarrow \{ 1,2 \}}] + m^{\frac{2}{3}} \overset{\triangle}= n_{i \rightarrow \{ 1,2 \}}, i=1,2.
\end{align}


%
%
%
%

From basic probability, we know that
{\small \begin{align}
\label{eq:expectedvalues}
& \mathbb{E}[N_{i,{\sf C}_1}] = \frac{\Pr\left( \mathrm{Case~1} \right)m}{1 - \sum_{i=9,10,13,16}{\Pr\left( \mathrm{Case~i} \right)}}  = (1-q^2)^{-1} p^4 m, \nonumber \\
& \mathbb{E}[N_{i \rightarrow i|\bar{i}}] = \frac{\sum_{j=14,15}{\Pr\left( \mathrm{Case~j} \right)}m}{1 - \sum_{i=9,10,13,16}{\Pr\left( \mathrm{Case~i} \right)}} = (1-q^2)^{-1} p q^2 m, \nonumber \\
& \mathbb{E}[N_{i \rightarrow \bar{i}|i}] = \frac{\Pr\left( \mathrm{Case~2} \right)m}{1 - \sum_{i=9,10,13,16}{\Pr\left( \mathrm{Case~i} \right)}} = (1-q^2)^{-1} p^3 q m, \nonumber \\
& \mathbb{E}[N_{i \rightarrow \{ 1,2 \}}] = \frac{\sum_{j=11,12}{\Pr\left( \mathrm{Case~j} \right)}m}{1 - \sum_{i=9,10,13,16}{\Pr\left( \mathrm{Case~i} \right)}} = (1-q^2)^{-1} p^2 q m. 
\end{align}}

Furthermore, we can show that the probability of errors of types I and II decreases exponentially with $m$. More precisely, we use Chernoff-Hoeffding bound\footnote{We consider a specific form of the Chernoff-Hoeffding bound~\cite{Hoeffding} described in~\cite{Chernoff}, which is simpler to use and is as follows. If $X_1,\ldots,X_r$ are $r$ independent random variables, and $M=\sum_{i=1}^r{X_i}$, then $Pr\left[ |M-\mathbb{E}\left[ M \right] | > \alpha \right] \leq 2 \exp \left( \frac{-\alpha^2}{4 \sum_{i=1}^r \mathrm{Var} \left( X_i \right)} \right)$.}, to bound the error probabilities of types I and II. For instance, to bound the probability of error type-I, we have
\begin{align}
& \Pr \left[ \mathrm{error~type \noindent - \noindent I} \right] \leq  \sum_{i=1}^2{\Pr \left[ Q_{i \rightarrow i} \mathrm{~is~not~empty} \right]} \nonumber \\
&~\leq 4 \exp \left( \frac{-m^{4/3}}{4 n (1-q^2) q^2} \right) \nonumber \\
&~= 4 \exp \left( \frac{-m^{4/3}}{4 (1-q^2) q^2 \left[ \left( 1 - q^2 \right)^{-1} m + m^{\frac{2}{3}} \right]} \right),
\end{align}
which decreases exponentially to zero as $m \rightarrow \infty$.

At the end of Phase $1$, we add $0$'s (if necessary) in order to make queues $Q_{i,{\sf C}_1}$, $Q_{i \rightarrow j|\bar{j}}$, and $Q_{i \rightarrow \{ 1,2\}}$ of size equal to $n_{i,{\sf C}_1}$, $n_{i \rightarrow j|\bar{j}}$, and $n_{i \rightarrow \{ 1,2\}}$ respectively as defined in (\ref{eq:errortypeI}), $i=1,2$, and $j = i,\bar{i}$. For the rest of this subsection, we assume that Phase 1 is completed and no error has occurred.

We now use the ideas described in Section~\ref{sec:oppideas1}, to further create bits of common interest. Depending on the value of $p$, we use different ideas. We break the rest of this subsection into two parts: $(1)$ $0.5 \leq p \leq \left( \sqrt{5} - 1 \right)/2$; and $(2)$ $\left( \sqrt{5} - 1 \right)/2 < p \leq 1$. In what follows, we first describe the rest of the achievability strategy for $0.5 \leq p \leq \left( \sqrt{5} - 1 \right)/2$. In particular, we demonstrate how to incorporate the ideas of Section~\ref{sec:oppideas1} to create bits of common interest in an optimal way.
\begin{itemize}

\item {\bf Type I} Combining bits in $Q_{i \rightarrow \bar{i}|i}$ and $Q_{i \rightarrow i|\bar{i}}$: Consider the bits that were transmitted in Cases $2$ and $14$, see Fig.~\ref{fig:2and14}. Observe that if we provide $a_1 \oplus a_2$ to \emph{both} receivers then ${\sf Rx}_1$ can decode bits $a_1$ and $a_2$, whereas ${\sf Rx}_2$ can decode bit $b_1$. Therefore, $a_1 \oplus a_2$ is a bit of common interest and can join $Q_{1 \rightarrow \{ 1,2 \}}$. Hence, as illustrated in Fig.~\ref{fig:XOR}, we can remove two bits in $Q_{1 \rightarrow 2|1}$ and $Q_{1 \rightarrow 1|2}$, by inserting their XOR in $Q_{1 \rightarrow \{ 1,2 \}}$, and we deliver this bit of common interest to both receivers during the second phase. Note that due to the symmetry of the channel, similar argument holds for $Q_{2 \rightarrow 1|2}$ and $Q_{2 \rightarrow 2|1}$.


\begin{figure}[ht]
\centering
\includegraphics[height = 2.5 cm]{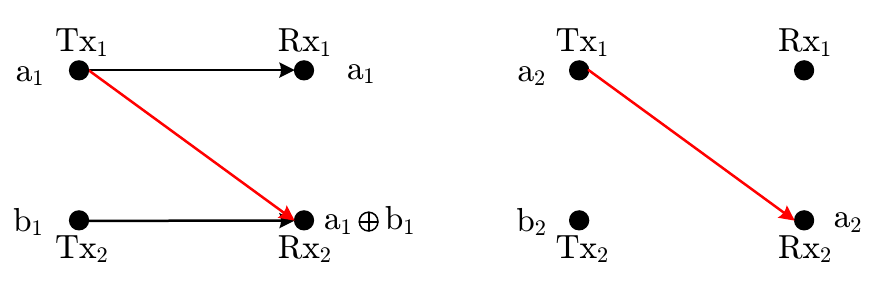}
\caption{Suppose at a time instant, transmitters one and two send out data bits $a_1$ and $b_1$ respectively, and later using Delayed-CSIT, transmitters figure out Case 2 occurred at that time. At another time instant, suppose transmitters one and two send out data bits $a_2$ and $b_2$ respectively, and later using Delayed-CSIT, transmitters figure out Case 14 occurred at that time. Now, bit $a_1 \oplus a_2$ available at ${\sf Tx}_1$ is useful for both receivers and it is a bit of common interest. Hence, $a_1 \oplus a_2$ can join $Q_{1 \rightarrow \{ 1,2 \}}$.\label{fig:2and14}}
\end{figure}

\begin{figure}[h]
\centering
\includegraphics[height = 2.75 cm]{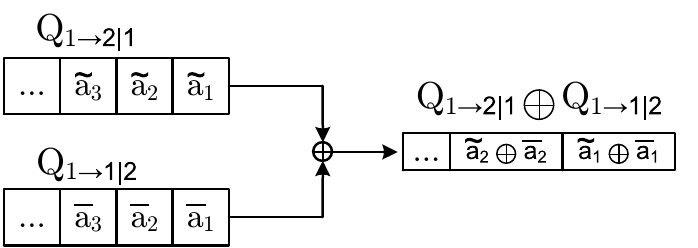}
\caption{Creating XOR of the bits in two different queues. We pick one bit from each queue and create the XOR of the two bits.\label{fig:XOR}}
\end{figure}

 
For $0.5 \leq p \leq \left( \sqrt{5} - 1 \right)/2$, we have $\mathbb{E}[N_{i \rightarrow \bar{i}|i}] \leq \mathbb{E}[N_{i \rightarrow i|\bar{i}}]$. Therefore, after this combination, queue $Q_{i \rightarrow \bar{i}|i}$ becomes empty and we have
\begin{align}
\mathbb{E}[N_{i \rightarrow i|\bar{i}}] - \mathbb{E}[N_{i \rightarrow \bar{i}|i}] =  \left( 1 - q^2 \right)^{-1} pq\left( q - p^2 \right) m
\end{align}
bits left in $Q_{i \rightarrow i|\bar{i}}$.

\item {\bf Type II} Combining the bits in $Q_{i,{\sf C}_1}$ and $Q_{i \rightarrow i|\bar{i}}$: Consider the bits that were transmitted in Cases $1$ and $15$, see Fig.~\ref{fig:1and15}. It is easy to see that providing $a_1 \oplus a_2$ and $b_1 \oplus b_2$ to both receivers is sufficient to decode their corresponding bits. For instance, ${\sf Rx}_1$ removes $b_2$ from $b_1 \oplus b_2$ to decode $b_1$, then uses $b_1$ to decode $a_1$ from $a_1 \oplus b_1$. Therefore, $a_1 \oplus a_2$ and $b_1 \oplus b_2$ are bits of common interest and can join $Q_{1 \rightarrow \{ 1,2 \}}$ and $Q_{2 \rightarrow \{ 1,2 \}}$ respectively. Hence, we can remove two bits in $Q_{i,{\sf C}_1}$ and $Q_{i \rightarrow i|\bar{i}}$, by inserting their XOR in $Q_{i \rightarrow \{ 1,2 \}}$, $i=1,2$, and then deliver this bit of common interest to both receivers during the second phase.


\begin{figure}[ht]
\centering
\includegraphics[height = 2.5 cm]{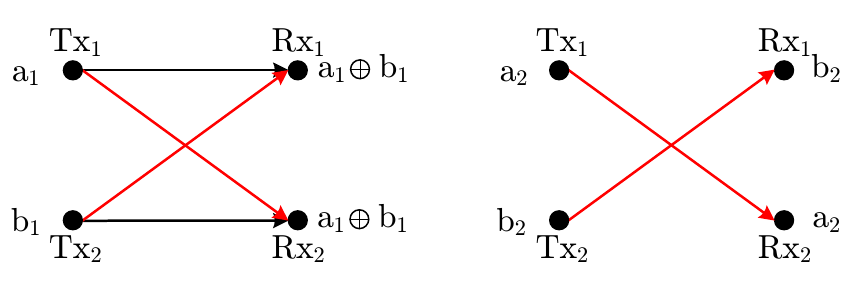}
\caption{Suppose at a time instant, transmitters one and two send out data bits $a_1$ and $b_1$ respectively, and later using Delayed-CSIT, transmitters figure out Case 1 occurred at that time. At another time instant, suppose transmitters one and two send out data bits $a_2$ and $b_2$ respectively, and later using Delayed-CSIT, transmitters figure out Case 15 occurred at that time. Now, bit $a_1 \oplus a_2$ available at ${\sf Tx}_1$ and bit $b_1 \oplus b_2$ available at ${\sf Tx}_2$ are useful for both receivers and they are bits of common interest. Therefore, bits $a_1 \oplus a_2$ and $b_1 \oplus b_2$ can join $Q_{1 \rightarrow \{ 1,2 \}}$ and $Q_{2 \rightarrow \{ 1,2 \}}$ respectively.\label{fig:1and15}}
\end{figure}


For $0.5 \leq p \leq \left( \sqrt{5} - 1 \right)/2$, we have $\left( 1 - q^2 \right)^{-1} pq\left( q - p^2 \right) m \leq \mathbb{E}[N_{i,{\sf C}_1}]$. Therefore after combining the bits, queue $Q_{i \rightarrow i|\bar{i}}$ becomes empty and we have 
\begin{align}
& \mathbb{E}[N_{i,{\sf C}_1}] + m^{\frac{2}{3}} - \left( \mathbb{E}[N_{i \rightarrow i|\bar{i}}] - \mathbb{E}[N_{i \rightarrow \bar{i}|i}] \right) \nonumber\\
&~= \left( 1 - q^2 \right)^{-1} p\left( p - q \right) m + m^{\frac{2}{3}}
\end{align}
bits left in $Q_{i,{\sf C}_1}$, $i=1,2$.

\end{itemize}

Finally, we need to describe what happens to the remaining $\left( 1 - q^2 \right)^{-1} p\left( p - q \right) m + m^{\frac{2}{3}}$ bits in $Q_{i,{\sf C}_1}$. As mentioned before, a bit in $Q_{i,{\sf C}_1}$ can be viewed as a bit of common interest by itself. For the remaining bits in $Q_{1,{\sf C}_1}$, we put the first half in $Q_{1 \rightarrow \{ 1,2 \}}$ (suppose $m$ is picked such that the remaining number of bits is even). Note that if these bits are delivered to ${\sf Rx}_2$, then ${\sf Rx}_2$ can decode the first half of the remaining bits in $Q_{2,{\sf C}_1}$ as well. Therefore, the first half of the bits in $Q_{2,{\sf C}_1}$ can join $Q_{2,F}$. 

\begin{figure*}[t]
\centering
\includegraphics[height = 3 cm]{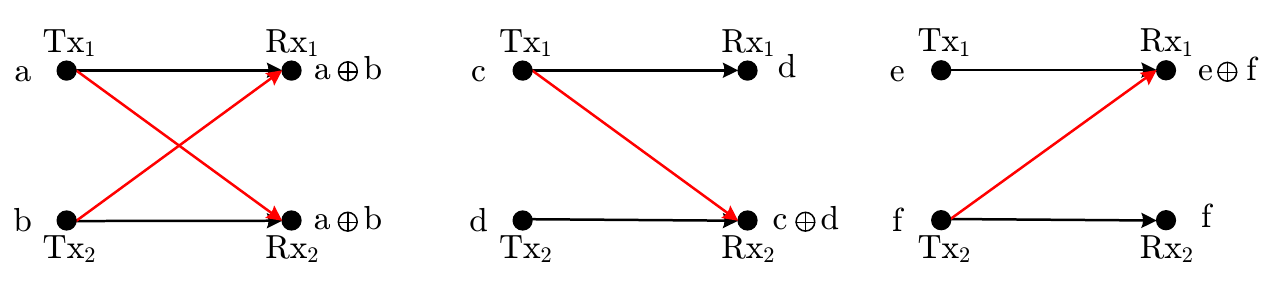}
\caption{Consider the bits transmitted in Cases 1,2, and 3. Now, bit $a \oplus c$ available at ${\sf Tx}_1$ and bit $b \oplus f$ available at ${\sf Tx}_2$ are useful for both receivers and they are bits of common interest. Therefore, bits $a \oplus c$ and $b \oplus f$ can join $Q_{1 \rightarrow \{ 1,2 \}}$ and $Q_{2 \rightarrow \{ 1,2 \}}$ respectively.\label{fig:123}}
\end{figure*}

Then, we put the second half of the remaining bits in $Q_{2,{\sf C}_1}$ in $Q_{2 \rightarrow \{ 1,2 \}}$. Similar to the argument presented above, the second half of the bits in $Q_{1,{\sf C}_1}$ join $Q_{1,F}$.

Hence at the end of Phase $1$, if the transmission is not halted, we have a total of 
\begin{align}
& (1-q^2)^{-1} \left[ \underbrace{p^2q}_{\mathrm{Cases~11~and~12}} + \underbrace{pq^2}_{\mathrm{XOR~opportunities}} \right. \nonumber \\
& \left. + 0.5 \underbrace{(p^4 - pq^2 + p^3q)}_{\mathrm{remaining~Case~1}}  \right] m + 2.5 m^{2/3} \nonumber \\
& \quad = \left( 1 - q^2 \right)^{-1} 0.5 p m + 2.5 m^{2/3}
\end{align}
number of bits in $Q_{1 \rightarrow \{1,2\}}$. Same result holds for $Q_{2 \rightarrow \{1,2\}}$.

This completes the description of Phase $1$ for $0.5 \leq p \leq \left( \sqrt{5} - 1 \right)/2$. For $\left( \sqrt{5} - 1 \right)/2 < p \leq 1$, we combine the bits as follows. Fot this range of $p$, after Phase~1, the number of bits in each queue is such that the mergings described above are not optimal and we have to rearrange them as decsribed below.
\begin{itemize}

\item {\bf Type I} Combining $Q_{i \rightarrow \bar{i}|i}$ and $Q_{i \rightarrow i|\bar{i}}$: We have already described this opportunity for $0.5 \leq p \leq \left( \sqrt{5} - 1 \right)/2$. We create the XOR of the bits in $Q_{1 \rightarrow 2|1}$ and $Q_{1 \rightarrow 1|2}$ and put the XOR of them in $Q_{1 \rightarrow \{ 1,2\}}$. Note that due to the symmetry of the channel, similar argument holds for $Q_{2 \rightarrow 1|2}$ and $Q_{2 \rightarrow 2|1}$.


For $\left( \sqrt{5} - 1 \right)/2 < p \leq 1$, $\mathbb{E}[N_{i \rightarrow i|\bar{i}}] \leq \mathbb{E}[N_{i \rightarrow \bar{i}|i}]$, $i=1,2$. Therefore after combining the bits, queue $Q_{i \rightarrow i|\bar{i}}$ becomes empty, and we have 
\begin{align}
\mathbb{E}[N_{i \rightarrow \bar{i}|i}] - \mathbb{E}[N_{i \rightarrow i|\bar{i}}] =  \left( 1 - q^2 \right)^{-1} pq\left( p^2 - q \right) m
\end{align}
bits left in $Q_{i \rightarrow \bar{i}|i}$, $i=1,2$.

\item {\bf Type III} Combining the bits in $Q_{i,{\sf C}_1}$ and $Q_{i \rightarrow \bar{i}|i}$: Consider the bits that were transmitted in Cases $1,2,$ and $3$, see Fig.~\ref{fig:123}. Now, we observe that providing $a \oplus c$ and $b \oplus f$ to both receivers is sufficient to decode their corresponding bits. For instance, ${\sf Rx}_1$ will have $a \oplus b$, $c$, $e \oplus f$, $a \oplus c$, and $b \oplus f$, from which it can recover $a,c,$ and $e$. Similar argument holds for ${\sf Rx}_2$. Therefore, $a \oplus c$ and $b \oplus f$ are bits of common interest and can join $Q_{1 \rightarrow \{ 1,2 \}}$ and $Q_{2 \rightarrow \{ 1,2 \}}$ respectively. Hence, we can remove two bits in $Q_{i,{\sf C}_1}$ and $Q_{i \rightarrow \bar{i}|i}$, by inserting their XORs in $Q_{i \rightarrow \{ 1,2 \}}$, $i=1,2$, and then deliver this bit of common interest to both receivers during the second phase.


For $\left( \sqrt{5} - 1 \right)/2 < p \leq 1$, we have $\left( 1 - q^2 \right)^{-1} pq\left( p^2 - q \right) m \leq \mathbb{E}[N_{i,{\sf C}_1}]$. Therefore after combining the bits, queue $Q_{i \rightarrow \bar{i}|i}$ becomes empty and we have 
\begin{align}
&\mathbb{E}[N_{i,{\sf C}_1}] + m^{\frac{2}{3}} - \left( \mathbb{E}[N_{i \rightarrow \bar{i}|i}] - \mathbb{E}[N_{i \rightarrow i|\bar{i}}] \right) \nonumber \\
&~= \left( 1 - q^2 \right)^{-1} \left( p^4 - p^3q + pq^2 \right) m + m^{\frac{2}{3}}
\end{align}
bits left in $Q_{i,{\sf C}_1}$.

\end{itemize}

We treat the remaining bits in $Q_{i,{\sf C}_1}$ as described before. Hence at the end of Phase $1$, if the transmission is not halted, we have a total of 
\begin{align}
& (1-q^2)^{-1} \left[ \underbrace{p^2q}_{\mathrm{Cases~11~and~12}} + \underbrace{p^3q}_{\mathrm{XOR~opportunities}} \right. \nonumber \\
& \left. + 0.5 \underbrace{(p^4 - p^3q + pq^2)}_{\mathrm{remaining~Case~1}}  \right] m + 2.5 m^{2/3} \nonumber \\
& \quad = \left( 1 - q^2 \right)^{-1} 0.5 p m + 2.5 m^{2/3}
\end{align}
number of bits in $Q_{1 \rightarrow \{1,2\}}$. Same result holds for $Q_{2 \rightarrow \{1,2\}}$. 

To summarize, at the end of Phase~1 assuming that the transmission is not halted, by using coding opportunities of types I, II, and III, we are only left with $\left( 1 - q^2 \right)^{-1} 0.5 p m + 2.5 m^{2/3}$ bits in queue $Q_{i \rightarrow \{1,2\}}$, $i=1,2$.

We now describe how to deliver the bits of common interest in Phase~2 of the transmission strategy. The problem resembles a network with two transmitters and two receivers where each transmitter ${\sf Tx}_i$ wishes to communicate an independent message $\hbox{W}_i$ to {\it both} receivers, $i=1,2$. The channel gain model is the same as described in Section~\ref{sec:problem}. We refer to this network as the two-multicast network as depicted in Fig.~\ref{fig:two-multicast}. We have the following result for this network.

\begin{figure}[ht]
\centering
\includegraphics[height = 3.5 cm]{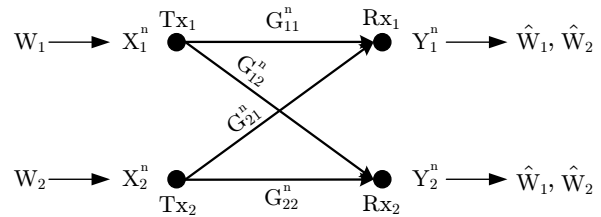}
\caption{Two-multicast network. Transmitter ${\sf Tx}_i$ wishes to reliably communicate message $\hbox{W}_i$ to both receivers, $i=1,2$. The capacity region with no, delayed, or instantaneous CSIT is the same.\label{fig:two-multicast}}
\end{figure}

\begin{lemma}
\label{lemma:multicast}
For the two-multicast network as described above, we have
\begin{equation} 
\mathcal{C}_{\mathrm{multicast}}^{\mathrm{No-CSIT}} = \mathcal{C}_{\mathrm{multicast}}^{\mathrm{DCSIT}} = \mathcal{C}_{\mathrm{multicast}}^{\mathrm{ICSIT}},
\end{equation}
and, we have
\begin{equation}
\label{}
\mathcal{C}_{\mathrm{multicast}}^{\mathrm{ICSIT}} =
\left\{ \begin{array}{ll}
\vspace{1mm} R_i \leq p, & i = 1,2,\\
R_1 + R_2 \leq 1 - q^2. &
\end{array} \right.
\end{equation}
\end{lemma}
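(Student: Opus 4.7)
The plan is to exploit the sandwich $\mathcal{C}^{\mathrm{No\text{-}CSIT}}_{\mathrm{multicast}} \subseteq \mathcal{C}^{\mathrm{DCSIT}}_{\mathrm{multicast}} \subseteq \mathcal{C}^{\mathrm{ICSIT}}_{\mathrm{multicast}}$: I will prove the outer bound assuming Instantaneous-CSIT (the strongest assumption) and the inner bound assuming No-CSIT (the weakest assumption). Equality of all three regions, as well as the explicit description in the statement, will then follow immediately.

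For the converse, I will apply Fano's inequality at each receiver, using that both receivers must decode both messages. For the individual rate $R_i$, I condition on $W_{\bar{i}}$; since in any of the three CSIT models the encoder of ${\sf Tx}_{\bar{i}}$ is a deterministic function of $W_{\bar{i}}$ and $G^n$, conditioning on $(W_{\bar{i}},G^n)$ determines $X_{\bar{i}}^n$ and removes the interference, giving
\begin{align}
nR_i \leq I(W_i;Y_{\bar{i}}^n\mid W_{\bar{i}},G^n)+n\epsilon_n \leq H(G_{i\bar{i}}^n X_i^n\mid G^n) \leq \sum_{t=1}^n H(G_{i\bar{i}}[t]X_i[t]\mid G_{i\bar{i}}[t]) \leq np,
\end{align}
where the last step uses that the entropy vanishes when $G_{i\bar{i}}[t]=0$ and is at most one bit when $G_{i\bar{i}}[t]=1$ (an event of probability $p$). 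For the sum rate, since $(W_1,W_2)$ is recoverable from $Y_i^n$ at either receiver,
\begin{align}
n(R_1+R_2) \leq I(W_1,W_2;Y_i^n\mid G^n)+n\epsilon_n \leq H(Y_i^n\mid G^n) \leq \sum_{t=1}^n H(Y_i[t]\mid G[t]) \leq n(1-q^2),
\end{align}
using that $Y_i[t]=0$ deterministically whenever $G_{1i}[t]=G_{2i}[t]=0$, an event of probability $q^2$.

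For achievability under No-CSIT, I will use independent random codebooks at the two transmitters with i.i.d. $\mathrm{Bernoulli}(1/2)$ entries, together with joint typicality decoding at each receiver. This turns the problem into a compound MAC with two destinations, each of which must recover both messages. With this input distribution, at receiver $i$ one checks directly that $H(Y_i\mid G)=1-q^2$ (the output is uniform on $\{0,1\}$ whenever at least one incoming link is active) and $I(X_j;Y_i\mid X_{\bar{j}},G)=p$ for $j=1,2$. Hence each receiver realizes the MAC region $\{R_1\le p,\ R_2\le p,\ R_1+R_2\le 1-q^2\}$, which is already the region claimed in the lemma; since the two MACs are identically distributed, a single common codebook is simultaneously decodable at both receivers and the compound-MAC achievability gives the result.

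The conceptual point, rather than a real obstacle, is the observation that no form of CSIT improves this region: the $\mathrm{Bernoulli}(1/2)$ product input is \emph{simultaneously} capacity-achieving for both receiver MACs in every channel state, so there is nothing for the encoders to adapt. This is what collapses the three regions into one, and once this is recognized the proof reduces to the two standard computations above.
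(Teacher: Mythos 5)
Your converse is essentially the paper's converse: Fano at one receiver for the sum rate (using that $Y_i^n$ is a deterministic function of $W_1,W_2,G^n$), and Fano with conditioning on the other message for the individual rates. The only cosmetic difference is that you bound $R_i$ via the cross receiver $Y_{\bar i}^n$ while the paper (Appendix~B) uses the direct receiver; both links are $\mathcal{B}(p)$ so the bound $np$ is the same either way.

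Your achievability, however, takes a genuinely different route. The paper proves achievability of the corner point $(p,pq)$ by an explicit successive--cancellation scheme built from binary erasure codes \`{a} la Elias: ${\sf Rx}_1$ first decodes $\hbox{W}_2$ over the ``effective'' BEC from ${\sf Tx}_2$ (success probability $pq$, the event $G_{11}[t]=0,\,G_{21}[t]=1$), strips $X_2^n$, then decodes $\hbox{W}_1$ over a BEC of success probability $p$; the full region is then obtained by the symmetric corner point and time sharing. You instead appeal to the compound MAC random-coding theorem with i.i.d.\ $\mathrm{Bernoulli}(1/2)$ inputs and joint-typicality decoding, and verify that $I(X_j;Y_i\mid X_{\bar j},G)=p$ and $I(X_1,X_2;Y_i\mid G)=1-q^2$ for both $i$, so the two per-receiver MAC pentagons coincide and the intersection is exactly the claimed region. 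Your argument is cleaner and makes explicit the conceptual reason the three CSIT regions collapse --- a single non-adaptive input distribution is simultaneously MAC-capacity-achieving at both receivers in every state --- whereas the paper's erasure-code scheme is more operational and aligns with how the bits of common interest are actually packed and delivered in Phase~2 of the main achievability. Both are correct proofs of the lemma.
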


This result basically shows that the capacity region of the two-multicast network described above is equal to the capacity region of the multiple-access channel formed at either of the receivers. The proof of Lemma~\ref{lemma:multicast} is presented in Appendix~\ref{Appendix:multicast}.

\noindent {\bf Phase 2} [transmitting bits of common interest]: In this phase, we deliver the bits in $Q_{1 \rightarrow \{ 1,2 \}}$ and $Q_{2 \rightarrow \{ 1,2 \}}$ using the transmission strategy for the two-multicast problem. More precisely, the bits in $Q_{i \rightarrow \{ 1,2 \}}$ will be considered as the message of ${\sf Tx}_i$ and they will be encoded as in the achievability scheme of Lemma~\ref{lemma:multicast}, $i=1,2$. Fix $\epsilon, \delta > 0$, from Lemma~\ref{lemma:multicast}, we know that rate tuple $$\left( R_1, R_2 \right) = \frac{1}{2}\left( (1-q^2) - \delta, (1-q^2) - \delta \right)$$ is achievable with decoding error probability less than or equal to $\epsilon$. Therefore, transmission of the bits in $Q_{1 \rightarrow \{ 1,2 \}}$ and $Q_{2 \rightarrow \{ 1,2 \}}$, will take
\begin{align} 
t_{\mathrm{total}} = \frac{\left( 1 - q^2 \right)^{-1} p m + 5 m^{2/3}}{(1-q^2) - \delta} \raisebox{2pt}{.}
\end{align}

Therefore, the total transmssion time of our two-phase achievability strategy is equal to
\begin{align}
(1-q^2)^{-1} m + m^{\frac{2}{3}} + \frac{\left( 1 - q^2 \right)^{-1} p m + 5 m^{2/3}}{(1-q^2) - \delta} \raisebox{2pt}{,}
\end{align}
hence, if we let $\epsilon,\delta \rightarrow 0$ and $m \rightarrow \infty$, the decoding error probability of delivering bits of common interest goes to zero, and we achieve a symmetric sum-rate of 
\begin{align}
R_1 = R_2 = \lim_{\substack{\epsilon,\delta \rightarrow 0 \\ m \rightarrow \infty}}{\frac{m}{t_{\mathrm{total}}}} = \frac{(1-q^2)}{1+(1-q^2)^{-1}p} \raisebox{2pt}{.}
\end{align}

This completes the achievability proof of point $A$ for $0.5 \leq p \leq 1$.

%
%
%
%
%

\subsection{Overview of the Achievability Strategy for Corner Point $C$}

We now provide an overview of the achievability strategy for corner point $C$ depicted in Fig.~\ref{fig:regionHalf} for $\left( 3 - \sqrt{5} \right)/2 < p \leq 1$, \emph{i.e.}
\begin{align}
\left( R_1, R_2 \right) = \left( pq(1+q), p \right) \raisebox{2pt}{,}
\end{align}
and we postpone the detailed proof to Appendix~\ref{Appendix:cornerdelayed}.

Compared to the achievability strategy of the sum-rate point, the challenges in achieving the other corner points arise from the asymmetricity of the rates. At this corner point, while ${\sf Tx}_2$ (the primary user) communicates at full rate of $p$, ${\sf Tx}_1$ (the secondary user) communicates at a lower rate and tries to coexist with the primary user. The achievability strategy is based on the following two principles.
\begin{enumerate}

\item If the secondary user creates interference at the primary receiver, it is the secondary user's responsibility to resolve this interference;

\item For the achievability of the optimal sum-rate point $A$ (see Fig.~\ref{fig:regionHalf}(b)), the bits of common interest were transmitted such that both receivers could decode them. However, for corner point $C$, when the primary receiver obtains a bit of common interest, we revise the coding scheme in a way that favors the primary receiver.

\end{enumerate}

Our transmission strategy consists of five phases as summarized below.
\begin{itemize}

\item Phase 1 [uncategorized transmission]: This phase is similar to Phase 1 of the achievability of the optimal sum-rate point $A$. The main difference is due to the fact that the transmitters have unequal number of bits at the beginning. In Phase $1$, ${\sf Tx}_1$ (the secondary user) transmits all its initial bits while ${\sf Tx}_2$ (the primary user) only transmits part of its initial bits. Transmitter two postpones the transmission of its remaining bits to Phase 3.

\item Phase 2 [updating status of the bits transmitted when either of the Cases $7$ or $8$ ($11$ or $12$) occurred]: For the achievability of optimal sum-rate point $A$, we transferred the transmitted bits of ${\sf Tx}_2$ (${\sf Tx}_1$) to the two-multicast sub-problem by viewing them as bits of common interest. However, this scheme turns out to be suboptimal for corner point $C$. In this case, we retransmit these bits during Phase~2 and update their status based on the channel realization at the time of transmission. Phase 2 provides coding opportunities that we exploit in Phases 4 and 5.

\item Phase 3 [uncategorized transmission vs interference management]: In this phase, the primary user transmits the remaining initial bits while the secondary user tries to resolve as much interference as it can at the primary receiver. To do so, the secondary user sends the bits that caused interference at the primary receiver during Phase 1, at a rate low enough such that both receivers can decode and remove them regardless of what the primary transmitter does. Note that $pq$ of the time, each receiver gets interference-free signal from the secondary transmitter, hence, the secondary transmitter can take advantage of these time instants to deliver its bits during Phase~3. 

\item Phases 4 and 5 [delivering interference-free bits and interference management]: In the final phases, each transmitter has two main objectives: $(1)$ communicating the bits required by its own receiver but available at the unintended receiver; and $(2)$ mitigating interference at the unintended receiver. This task can be accomplished by creating the XOR of the bits similar to coding type-I described in Section~\ref{sec:opportunities} with a modification: we first encode these bits and then create the XOR of the encoded bits. Moreover, the balance of the two objectives is different between the primary user and the secondary user. 

\end{itemize}

As mentioned before, the detailed proof of the achievability for corner point $C$ is provided in Appendix~\ref{Appendix:cornerdelayed}. In the following section, we describe the converse proof for the two-user BFIC with Delayed-CSIT.




\section{Converse Proof of Theorem~\ref{THM:IC-DelayedCSIT}~[Delayed-CSIT]}
\label{sec:outerHalf}


In this section, we provide the converse proof for Theorem~\ref{THM:IC-DelayedCSIT}. As mentioned in Remark~\ref{remark:OFBConverse}, The outer-bound on the capacity region with only Delayed-CSIT~(\ref{eq:DelayedNSIregion}) is in fact the intersection of the outer-bounds on the individual rates (\emph{i.e.} $R_i \leq p$, $i=1,2$) and the capacity region with Delayed-CSIT and OFB~(\ref{eq:capacity-FB}). Therefore, the converse proof that we will later present in Section~\ref{sec:conversedelayedhalfFB} for the case of Delayed-CSIT and OFB suffices. However, specific challenges arise when OFB is present and careful considerations must be taken into account. Here, we independently present the converse proof of Theorem~\ref{THM:IC-DelayedCSIT} to highlight the key techniques without worrying about the details needed for the case of OFB.

We first present the Entropy Leakage Lemma that plays a key role in deriving the converse. Consider the scenario where a transmitter is connected to two receivers through binary fading channels as in Fig.~\ref{fig:portionHalf}. Suppose $G_1[t]$ and $G_2[t]$ are distributed as i.i.d. Bernoulli RVs (\emph{i.e.} $G_i[t] \overset{d}\sim \mathcal{B}(p)$), $i=1,2$. In this channel the received signals are given as
\begin{align}
Y_i[t] = G_i[t] X[t], \qquad i = 1,2,
\end{align}
where $X[t]$ is the transmit signal at time $t$. We have the following lemma.

\begin{figure}[ht]
\centering
\includegraphics[height = 3.5cm]{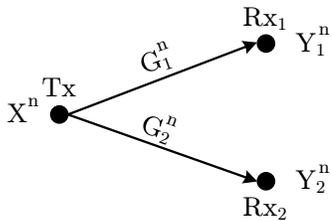}
\caption{A transmitter connected to two receivers through binary fading channels. Using Delayed-CSIT, the transmitter can privilege reveiver one to receiver two. Lemma~\ref{lemma:portion} formalizes this privilege.\label{fig:portionHalf}}
\end{figure}

\begin{lemma}
\label{lemma:portion}
{\bf [Entropy Leakage]} For the channel described above with Delayed-CSIT and for \emph{any} input distribution, we have
\begin{align}
H\left( Y_2^n | G^n \right) \geq \frac{1}{2-p} H\left( Y_1^n | G^n \right).
\end{align}
\end{lemma}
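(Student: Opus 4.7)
The plan is to prove the inequality by computing three related quantities---$H(Y_1^n|G^n)$, $H(Y_2^n|G^n)$, and the joint $H(Y_1^n,Y_2^n|G^n)$---via per-symbol chain rules, and then comparing them. Under delayed CSIT, $X[t]$ is a function of $(W,G^{t-1})$, so the present channel state $G[t]=(G_1[t],G_2[t])$ is independent of the tuple $(X[t],Y_1^{t-1},Y_2^{t-1},G^{t-1})$. For each $t$, conditioning on $(Y_1^{t-1},G^n)$, the output $Y_1[t]=G_1[t]X[t]$ is zero when $G_1[t]=0$ and equals $X[t]$ when $G_1[t]=1$, so
\begin{equation*}
H(Y_1[t]\mid Y_1^{t-1},G^n)=p\,H(X[t]\mid Y_1^{t-1},G^{t-1}).
\end{equation*}
Writing $A_t=H(X[t]\mid Y_1^{t-1},G^{t-1})$ and summing yields $H(Y_1^n|G^n)=p\sum_t A_t$; symmetrically, $H(Y_2^n|G^n)=p\sum_t B_t$ with $B_t=H(X[t]\mid Y_2^{t-1},G^{t-1})$.

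Second, I would expand the joint entropy the same way. The pair $(Y_1[t],Y_2[t])=(G_1[t]X[t],G_2[t]X[t])$ is identically zero exactly when both channels are off (probability $q^2$) and otherwise reveals $X[t]$ completely, so
\begin{equation*}
H(Y_1^n,Y_2^n\mid G^n)=(1-q^2)\sum_t C_t,
\end{equation*}
where $C_t=H(X[t]\mid Y_1^{t-1},Y_2^{t-1},G^{t-1})$.

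Third, the chain rule $H(Y_2^n\mid Y_1^n,G^n)=H(Y_1^n,Y_2^n|G^n)-H(Y_1^n|G^n)\ge 0$ combined with the factorization $1-q^2=p(2-p)$ rearranges to $\sum_t A_t\le (2-p)\sum_t C_t$. Since dropping the conditioning on $Y_1^{t-1}$ cannot decrease entropy, $C_t\le B_t$ for every $t$, and hence $\sum_t A_t\le (2-p)\sum_t B_t$; multiplying by $p$ gives the claimed $H(Y_1^n|G^n)\le (2-p)\,H(Y_2^n|G^n)$. The main subtlety will be cleanly justifying the per-symbol factorizations, since one must check that the channel-state indicator and the conditional-entropy factor are independent under the relevant joint distribution; this is precisely where the delayed-CSIT hypothesis (which prevents $X[t]$ from depending on $G[t]$) enters.
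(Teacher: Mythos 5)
Your proposal is correct and is essentially the paper's own argument, just reorganized: the paper bounds $H(Y_2^n|G^n)$ per symbol via $p\,H(X[t]\mid Y_2^{t-1},G^t)\ge p\,H(X[t]\mid Y_1^{t-1},Y_2^{t-1},G^t)=\tfrac{p}{1-q^2}H(Y_1[t],Y_2[t]\mid Y_1^{t-1},Y_2^{t-1},G^t)$ and then drops $Y_2^n$ from the joint, which are precisely your two ingredients $C_t\le B_t$ and $H(Y_1^n|G^n)\le H(Y_1^n,Y_2^n|G^n)$ applied in the opposite order. The per-symbol factorizations $H(Y_i^n|G^n)=p\sum_t H(X[t]\mid Y_i^{t-1},G^{t-1})$ and $H(Y_1^n,Y_2^n|G^n)=(1-q^2)\sum_t H(X[t]\mid Y_1^{t-1},Y_2^{t-1},G^{t-1})$ are justified exactly as you anticipate, by the delayed-CSIT independence of $X[t]$ from $(G[t],\ldots,G[n])$ given $G^{t-1}$.
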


\begin{remark}
Note that with No-CSIT, from the transmitter's point of view, the two receivers are identical and it cannot favor one over the other and as a result, we have $H\left( Y_2^n | G^n \right) = H\left( Y_1^n | G^n \right)$. With Instantaneous-CSIT this ratio can become zero\footnote{This can be done by simply remaining silent whenever $G_2[t] = 1$.}. Therefore, this lemma captures the effect of Delayed-CSIT on the entropy of the received signals at the two receivers.
\end{remark}

\begin{proof} 
For time instant $t$ where $1 \leq t \leq n$, we have
\begin{align}
& H\left( Y_2[t] | Y_2^{t-1}, G^t \right) \nonumber \\
& \quad = p H\left( X[t] | Y_2^{t-1}, G_2[t] = 1, G^{t-1} \right) \nonumber \\
& \quad \overset{(a)}= p H\left( X[t] | Y_2^{t-1}, G^t \right) \nonumber \\
& \quad \overset{(b)}\geq p H\left( X[t] | Y_1^{t-1},Y_2^{t-1}, G^t \right) \nonumber \\
& \quad \overset{(c)}= \frac{p}{1-q^2} H\left( Y_1[t], Y_2[t] | Y_1^{t-1},Y_2^{t-1}, G^t \right),
\end{align}
where $(a)$ holds since $X[t]$ is independent of the channel realization at time instant $t$; $(b)$ follows from the fact that conditioning reduces entropy; and $(c)$ follows from the fact that $\Pr\left[ G_1[t] = G_2[t] = 0 \right] = q^2$. Therefore, we have 
\begin{align}
& \sum_{t=1}^n{H\left( Y_2[t] | Y_2^{t-1}, G^t \right)} \nonumber \\
&~\geq \frac{1}{2-p} \sum_{t=1}^n{H\left( Y_1[t], Y_2[t] | Y_1^{t-1}, Y_2^{t-1}, G^t \right)},
\end{align}
and since the transmit signals at time instant $t$ are independent from the channel realizations in future time instants, we have
\begin{align}
& \sum_{t=1}^n{H\left( Y_2[t] | Y_2^{t-1}, G^n \right)} \nonumber \\
&~\geq \frac{1}{2-p} \sum_{t=1}^n{H\left( Y_1[t], Y_2[t] | Y_1^{t-1}, Y_2^{t-1}, G^n \right)},
\end{align}
hence, we get
\begin{align}
H\left( Y_2^n | G^n \right) \geq \frac{1}{2-p} H\left( Y_1^n, Y_2^n | G^n \right) \geq \frac{1}{2-p} H\left( Y_1^n | G^n \right).
\end{align}

This completes the proof of the lemma.
\end{proof}

We now derive the converse for Theorem~\ref{THM:IC-DelayedCSIT}. The outer-bound on $R_i$ is the same under no, delayed, and instantaneous CSIT, and we present it in Appendix~\ref{sec:ConvInst}. In this section, we provide the proof of
\begin{align}
R_i + (1+q) R_{\bar{i}} \leq p(1+q)^2, \qquad i = 1,2.
\end{align}

By symmetry, it is sufficient to prove it for $i=1$. Let $\beta = (1+q)$, and suppose rate tuple $\left( R_1, R_2 \right)$ is achievable. Then we have
\begin{align}
n &\left( R_1 + \beta R_2 \right) = H(W_1) + \beta H(W_2) \nonumber \\
& \overset{(a)}= H(W_1|W_2, G^n) + \beta H(W_2|G^n) \nonumber \\
& \overset{(\mathrm{Fano})}\leq I(W_1;Y_1^n|W_2,G^n) + \beta I(W_2;Y_2^n|G^n) + n \epsilon_n \nonumber \\
& = H(Y_1^n|W_2,G^n) - \underbrace{H(Y_1^n|W_1,W_2,G^n)}_{=~0} \nonumber 
\end{align}
\begin{align}
& \quad + \beta H(Y_2^n|G^n) - \beta H(Y_2^n|W_2,G^n) + n \epsilon_n \nonumber \\
& \overset{(b)}= \beta H(Y_2^n|G^n) + H(Y_1^n|W_2,X_2^n,G^n) \nonumber \\
& \quad - \beta H(Y_2^n|W_2,X_2^n,G^n) + n \epsilon_n \nonumber \\
& = \beta H(Y_2^n|G^n) + H(G_{11}^nX_1^n|W_2,X_2^n,G^n) \nonumber \\
& \quad - \beta H(G_{12}^nX_1^n|W_2,X_2^n,G^n) + n \epsilon_n \nonumber \\
& \overset{(c)}= \beta H(Y_2^n|G^n) + H(G_{11}^nX_1^n|W_2,G^n) \nonumber \\
& \quad - \beta H(G_{12}^nX_1^n|W_2,G^n) + n \epsilon_n \nonumber \\
& \overset{(d)}= \beta H(Y_2^n|G^n) + H(G_{11}^nX_1^n|G^n) \nonumber \\
& \quad - \beta H(G_{12}^nX_1^n|G^n) + n \epsilon_n \nonumber \\
& \overset{\textrm{Lemma}~\ref{lemma:portion}}\leq \beta H(Y_2^n|G^n) + n \epsilon_n \nonumber \\
& = \beta \sum_{t=1}^n{H(Y_2[t]|Y_2^{t-1},G^n)} + n \epsilon_n \nonumber \\
& \overset{(e)}\leq \beta \sum_{t=1}^n{H(Y_2[t]|G^n)} + n \epsilon_n \nonumber \\
& \overset{(f)}\leq n \beta (1-q^2) + \epsilon_n = n p(1+q)^2 + n \epsilon_n.
\end{align}
where $(a)$ holds since $\hbox{W}_1$, $\hbox{W}_2$ and $G^n$ are mutually independent; $(b)$ and $(c)$ hold since $X_2^n$ is a deterministic function of $W_2$ and $G^n$; $(d)$ follows from
\begin{align}
\label{eq:addX2}
& 0 \leq H(G_{11}^nX_1^n|G^n) - H(G_{11}^nX_1^n|W_2,G^n) \nonumber \\
& = I\left( G_{11}^nX_1^n; W_2|G^n\right) \leq I\left( W_1,G_{11}^nX_1^n; W_2|G^n\right) \nonumber \\
& = \underbrace{I\left( W_1 ; W_2|G^n\right)}_{=~0 \mathrm{~since~W_1 \perp W_2 \perp G^n}} + \underbrace{I\left( G_{11}^nX_1^n; W_2| W_1, G^n\right)}_{=~0 \mathrm{~since~X_1^n=f_1(W_1,~G^n)}} = 0,
\end{align}
which implies $H(G_{11}^nX_1^n|G^n) = H(G_{11}^nX_1^n|W_2,G^n)$, and similarly $H(G_{12}^nX_1^n|G^n) = H(G_{12}^nX_1^n|W_2,G^n)$; $(e)$ is true since conditioning reduces entropy; and $(f)$ holds since the probability that at least one of the links connected to ${\sf Rx}_2$ is equal to $1$ at each time instant is $(1-q^2)$. Dividing both sides by $n$ and let $n \rightarrow \infty$, we get
\begin{align}
R_1 + (1+q) R_2 \leq p(1+q)^2.
\end{align}

This completes the converse proof for Theorem~\ref{THM:IC-DelayedCSIT}.



\section{Achievability Proof of Theorem~\ref{THM:IC-DCSITFB}~[Delayed-CSIT and OFB]}
\label{sec:AchDelayedFB}


We now focus on the effect of the output feedback in the presence of Delayed-CSIT. In particular, we demonstrate how output feedback can be utilized to further improve the achievable rates. The capacity region of the two-user BFIC with Delayed-CSIT and OFB is given by
\begin{align}
& \mathcal{C}^{\mathrm{DCSIT,OFB}} = \\
& \left\{ R_1, R_2 \in \mathbb{R}^+~s.t.~R_i + (1+q) R_{\bar{i}} \leq p(1+q)^2,~i=1,2 \right\}, \nonumber 
\end{align}
and is depicted in Fig.~\ref{fig:region-FB}.

\begin{figure}[ht]
\centering
\includegraphics[height = 4 cm]{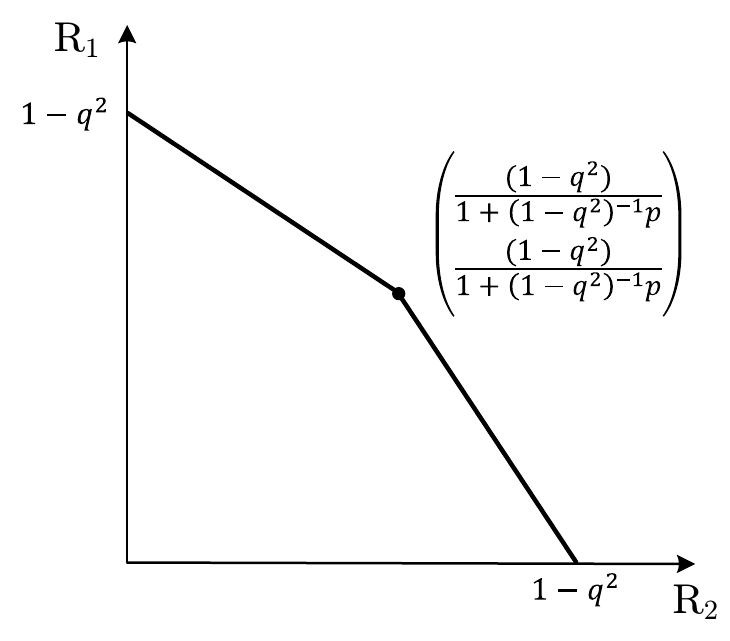}
\caption{Capacity region of the two-user BFIC with Delayed-CSIT and output feedback.\label{fig:region-FB}}
\end{figure}

The achievability strategy of the corner points $\left( 1- q^2, 0 \right)$ and $\left( 0, 1- q^2 \right)$, is based on utilizing the additional communication paths created by the means of the output feedback links, \emph{e.g.}, $${\sf Tx}_1 \rightarrow {\sf Rx}_2 \rightarrow {\sf Tx}_2 \rightarrow {\sf Rx}_1,$$
and is presented in Appendix~\ref{Appendix:cornerDelayedOFB}. Here, we only describe the transmission strategy for the sum-rate point, \emph{i.e.}
\begin{align}
\label{eq:cornerFB}
R_1 = R_2 = \frac{(1-q^2)}{1+(1-q^2)^{-1}p} \raisebox{2pt}{.}
\end{align}

Let the messages of transmitters one and two be denoted by $\hbox{W}_1 = a_1,a_2,\ldots,a_m$, and $\hbox{W}_2 = b_1,b_2,\ldots,b_m$, respectively, where data bits $a_i$'s and $b_i$'s are picked uniformly and independently from $\{ 0,1 \}$, $i=1,\ldots,m$. We show that it is possible to communicate these bits in 
\begin{align}
n = \left( 1 - q^2 \right)^{-1}m + \left( 1 - q^2 \right)^{-2} pm + O\left( m^{2/3}\right)
\end{align}
time instants with vanishing error probability (as $m \rightarrow \infty$). Therefore achieving the rates given in (\ref{eq:cornerFB}) as $m \rightarrow \infty$. Our transmission strategy consists of two phases as described below.

\noindent {\bf Phase 1} [uncategorized transmission]: This phase is identical to Phase 1 of Section~\ref{sec:achallhalf}. At the beginning of the communication block, we assume that the bits at ${\sf Tx}_i$ are in queue $Q_{i \rightarrow i}$, $i=1,2$. At each time instant $t$, ${\sf Tx}_i$ sends out a bit from $Q_{i \rightarrow i}$, and this bit will either stay in the initial queue or transition to a new queue will take place. The transitions are identical to what we have already described in Table~\ref{table:16cases}, therefore, we are not going to repeat them here. Phase~$1$ goes on for 
\begin{align}
\left( 1 - q^2 \right)^{-1} m + m^{\frac{2}{3}}
\end{align}
time instants and if at the end of this phase, either of the queues $Q_{1 \rightarrow 1}$ or $Q_{2 \rightarrow 2}$ is not empty, we declare error type-I and halt the transmission.

The transmission strategy will be halted and an error type-II will occur, if any of the following events happens.
\begin{align}
\label{eq:errortypeIOFB}
& N_{i,{\sf C}_1} > \mathbb{E}[N_{i,{\sf C}_1}] + m^{\frac{2}{3}} \overset{\triangle}= n_{i,{\sf C}_1}, \quad i=1,2; \nonumber \\
& N_{i \rightarrow j|\bar{j}} > \mathbb{E}[N_{i \rightarrow j|\bar{j}}] + m^{\frac{2}{3}} \overset{\triangle}= n_{i \rightarrow j|\bar{j}}, i=1,2,~j = i,\bar{i}; \nonumber \\
& N_{i \rightarrow \{ 1,2 \}} > \mathbb{E}[N_{i \rightarrow \{ 1,2 \}}] + m^{\frac{2}{3}} \overset{\triangle}= n_{i \rightarrow \{ 1,2 \}}, i=1,2.
\end{align}
%
%
%
%
%

From basic probability, we know that
{\small \begin{align}
& \mathbb{E}[N_{i,{\sf C}_1}] = \frac{\Pr\left( \mathrm{Case~1} \right)m}{1 - \sum_{i=9,10,13,16}{\Pr\left( \mathrm{Case~i} \right)}} = (1-q^2)^{-1} p^4 m, \nonumber
\end{align}}
{\small \begin{align} 
\label{eq:expectedvaluesOFB}
& \mathbb{E}[N_{i \rightarrow i|\bar{i}}] = \frac{\sum_{j=14,15}{\Pr\left( \mathrm{Case~j} \right)}m}{1 - \sum_{i=9,10,13,16}{\Pr\left( \mathrm{Case~i} \right)}} = (1-q^2)^{-1} p q^2 m, \nonumber \\
& \mathbb{E}[N_{i \rightarrow \bar{i}|i}] = \frac{\Pr\left( \mathrm{Case~2} \right)m}{1 - \sum_{i=9,10,13,16}{\Pr\left( \mathrm{Case~i} \right)}} = (1-q^2)^{-1} p^3 q m, \nonumber \\
& \mathbb{E}[N_{i \rightarrow \{ 1,2 \}}] = \frac{\sum_{j=11,12}{\Pr\left( \mathrm{Case~j} \right)}m}{1 - \sum_{i=9,10,13,16}{\Pr\left( \mathrm{Case~i} \right)}} = (1-q^2)^{-1} p^2 q m.
\end{align}}

Using Chernoff-Hoeffding bound, we can show that the probability of errors of types I and II decreases exponentially with $m$.

At the end of Phase $1$, we add $0$'s (if necessary) in order to make queues $Q_{i,{\sf C}_1}$, $Q_{i \rightarrow j|\bar{j}}$, and $Q_{i \rightarrow \{ 1,2\}}$ of size equal to $n_{i,{\sf C}_1}$, $n_{i \rightarrow j|\bar{j}}$, and $n_{i \rightarrow \{ 1,2\}}$ respectively as defined in (\ref{eq:errortypeIOFB}), $i=1,2$, and $j = i,\bar{i}$. For the rest of this subsection, we assume that Phase 1 is completed and no error has occurred. We now use the ideas described in Section~\ref{sec:opportunities} for output feedback, to further create bits of common interest.

$\bullet$ Updating the status of bits in $Q_{i,{\sf C}_1}$ to bits of common interest: A bit in $Q_{i,{\sf C}_1}$ can be considered as a bit of common interest. Also note that it is sufficient to deliver only one of the two bits transmitted simultaneously during Case~1. Therefore, ${\sf Tx}_1$ updates the status of the first half of the bits in $Q_{1,{\sf C}_1}$ to $Q_{1 \rightarrow \{ 1,2 \}}$, whereas ${\sf Tx}_2$ updates the status of the second half of the bits in $Q_{2,{\sf C}_1}$ to $Q_{2 \rightarrow \{ 1,2 \}}$. Hence, after updating the status of bits in $Q_{i,{\sf C}_1}$, we have
\begin{align}
(1-q^2)^{-1} \left[ p^2 q + \frac{1}{2} p^4 \right] m + \frac{3}{2} m^{\frac{2}{3}}
\end{align}
bits in $Q_{i \rightarrow \{ 1,2 \}}$, $i=1,2$.

$\bullet$ Upgrading ICs with side information to a two-multicast problem using OFB: Note that through the output feedback links, each transmitter has access to the transmitted bits of the other user during Phase 1. As described above, there are $\mathbb{E}[N_{i \rightarrow i|\bar{i}}] + m^{\frac{2}{3}}$ bits in $Q_{i \rightarrow i|\bar{i}}$ at the end of Phase 1. Now, ${\sf Tx}_1$ creates the XOR of the first half of the bits in $Q_{1 \rightarrow 1|2}$ and $Q_{2 \rightarrow 2|1}$ and updates the status of the resulting bits to $Q_{1 \rightarrow \{ 1,2 \}}$. Note that as described in Example~6 of Section~\ref{sec:opportunities}, the XOR of these bits is a bit of common interest. On the other hand, ${\sf Tx}_2$ creates the XOR of the second half of the bits in $Q_{1 \rightarrow 1|2}$ and $Q_{2 \rightarrow 2|1}$ and updates the status of the resulting bits to $Q_{2 \rightarrow \{ 1,2 \}}$. Thus, we have 
\begin{align}
(1-q^2)^{-1} \left[ p^2 q + \frac{1}{2} p^4 + \frac{1}{2} p q^2  \right] m + 2 m^{\frac{2}{3}}
\end{align}
bits in $Q_{i \rightarrow \{ 1,2 \}}$, $i=1,2$.

$\bullet$ Upgrading ICs with side information and swapped receivers to a two-multicast problem using OFB: As described above, there are $\mathbb{E}[N_{i \rightarrow \bar{i}|i}] + m^{\frac{2}{3}}$ bits in $Q_{i \rightarrow \bar{i}|i}$, ${\sf Tx}_1$ creates the XOR of the first half of the bits in $Q_{1 \rightarrow 2|1}$ and $Q_{2 \rightarrow 1|2}$ and updates the status of the resulting bits to $Q_{1 \rightarrow \{ 1,2 \}}$. Note that as described in Example~7 of Section~\ref{sec:opportunities}, the XOR of these bits is a bit of common interest. On the other hand, ${\sf Tx}_2$ creates the XOR of the second half of the bits in $Q_{1 \rightarrow 2|1}$ and $Q_{2 \rightarrow 1|2}$ and updates the status of the resulting bits to $Q_{2 \rightarrow \{ 1,2 \}}$. Hence, we have 
\begin{align}
& (1-q^2)^{-1} \left[ p^2 q + \frac{1}{2} p^4 + \frac{1}{2} p q^2 + \frac{1}{2} p^3 q \right] m + \frac{5}{2} m^{\frac{2}{3}} \nonumber \\
&~= (1-q^2)^{-1} \frac{p}{2} m + \frac{5}{2} m^{\frac{2}{3}}
\end{align}
bits in $Q_{i \rightarrow \{ 1,2 \}}$, $i=1,2$. This completes the description of Phase $1$.

\noindent {\bf Phase 2} [transmitting bits of common interest]: In this phase, we deliver the bits in $Q_{1 \rightarrow \{ 1,2 \}}$ and $Q_{2 \rightarrow \{ 1,2 \}}$ using the transmission strategy for the two-multicast problem. More precisely, the bits in $Q_{i \rightarrow \{ 1,2 \}}$ will be considered as the message of ${\sf Tx}_i$ and they will be encoded as in the achievability scheme of Lemma~\ref{lemma:multicast}, $i=1,2$. Fix $\epsilon, \delta > 0$, from Lemma~\ref{lemma:multicast} we know that the rate tuple $$\left( R_1, R_2 \right) = \frac{1}{2}\left( (1-q^2) - \delta/2, (1-q^2) - \delta/2 \right)$$ is achievable with decoding error probability less than or equal to $\epsilon$. Therefore, transmission of the bits in $Q_{1 \rightarrow \{ 1,2 \}}$ and $Q_{2 \rightarrow \{ 1,2 \}}$ requires
\begin{align} 
t_{\mathrm{total}} = \frac{\left( 1 - q^2 \right)^{-1} p m + 5 m^{2/3}}{(1-q^2) - \delta}
\end{align}
time instants. Therefore, the total transmission time of our two-phase achievability strategy is equal to
\begin{align}
(1-q^2)^{-1} m + m^{\frac{2}{3}} + \frac{\left( 1 - q^2 \right)^{-1} p m + 5 m^{2/3}}{(1-q^2) - \delta} \raisebox{2pt}{.}
\end{align}

The probability that the transmission strategy halts at any point can be bounded by the summation of error probabilities of types I and II, and the probability that an error occurs in decoding the encoded bits. This probability approaches zero for $\epsilon, \delta \rightarrow 0$ and $m \rightarrow \infty$.

Hence, if we let $\epsilon,\delta \rightarrow 0$ and $m \rightarrow \infty$, the decoding error probability goes to zero, and we achieve a symmetric sum-rate of 
\begin{align}
R_1 = R_2 = \lim_{\substack{\epsilon,\delta \rightarrow 0 \\ m \rightarrow \infty}}{\frac{m}{t_{\mathrm{total}}}} = \frac{(1-q^2)}{1+(1-q^2)^{-1}p}  \raisebox{2pt}{.}
\end{align}



\section{Converse Proof of Theorem~\ref{THM:IC-DCSITFB}~[Delayed-CSIT and OFB]}
\label{sec:conversedelayedhalfFB}


In this section, we prove the converse for Theorem~\ref{THM:IC-DCSITFB}. Suppose rate tuple $\left( R_1, R_2 \right)$ is achievable, then by letting $\beta = 1+q$, we have
\begin{align}
n & \left( R_1 + \beta R_2 \right) = H(W_1) + \beta H(W_2) \nonumber \\
& \overset{(a)}= H(W_1|W_2,G^n) + \beta H(W_2|G^n) \nonumber \\
& \overset{\mathrm{Fano}}\leq I(W_1;Y_1^n|W_2,G^n) + \beta I(W_2;Y_2^n|G^n) + n \epsilon_n \nonumber \\
& \leq I(W_1;Y_1^n,Y_2^n|W_2,G^n) + \beta I(W_2;Y_2^n|G^n) + n \epsilon_n \nonumber \\
& = H(Y_1^n,Y_2^n|W_2,G^n) - \underbrace{H(Y_1^n,Y_2^n|W_1,W_2,G^n)}_{=~0} \nonumber \\ 
&~+ \beta H(Y_2^n|G^n) - \beta H(Y_2^n|W_2,G^n) + n \epsilon_n \nonumber 
\end{align}
\begin{align}
\label{eq:converseDelayedOFB}
& = \beta H(Y_2^n|G^n) + H(Y_1^n,Y_2^n|W_2,G^n) \nonumber \\
&~- \beta H(Y_2^n|W_2,G^n) + n \epsilon_n \nonumber \\
& = \beta H(Y_2^n|G^n) + \sum_{t=1}^n{H(Y_1[t],Y_2[t]|W_2,Y_1^{t-1},Y_2^{t-1},G^n)} \nonumber \\
&~- \beta \sum_{t=1}^n{H(Y_2[t]|W_2,Y_2^{t-1},G^n)} + n \epsilon_n \nonumber \\
& \overset{(b)}\leq \beta H(Y_2^n|G^n) \nonumber \\
&~+ \sum_{t=1}^n{H(Y_1[t],Y_2[t]|W_2,Y_1^{t-1},Y_2^{t-1},X_2^t,G^n)} \nonumber \\
&~- \beta \sum_{t=1}^n{H(Y_2[t]|W_2,Y_2^{t-1},X_2^t,G^n)} + n \epsilon_n \nonumber \\
& = \beta H(Y_2^n|G^n) + \sum_{t=1}^n H\left(G_{11}[t]X_1[t],G_{12}[t]X_1[t]|W_2, \right. \nonumber \\
& \qquad \qquad \left. G_{11}^{t-1}X_1^{t-1},G_{12}^{t-1}X_1^{t-1},X_2^t,G^n \right) \nonumber \\
& \quad - \beta \sum_{t=1}^n{H(G_{12}[t]X_1[t]|W_2,G_{12}^{t-1}X_1^{t-1},X_2^t,G^n)} + n \epsilon_n \nonumber \\
& \overset{(c)}= \beta H(Y_2^n|G^n) + \sum_{t=1}^n H\left(G_{11}[t]X_1[t],G_{12}[t]X_1[t]|W_2, \right. \nonumber\\
& \qquad \qquad \left. G_{11}^{t-1}X_1^{t-1},G_{12}^{t-1}X_1^{t-1},X_2^t,G^t \right) \nonumber \\
& \quad - \beta \sum_{t=1}^n{H(G_{12}[t]X_1[t]|W_2,G_{12}^{t-1}X_1^{t-1},X_2^t,G^t)} + n \epsilon_n \nonumber \\
& \overset{(d)}\leq \beta H(Y_2^n|G^n) + n \epsilon_n \nonumber \\
& \leq p(1+q)^2 n + n \epsilon_n,
\end{align}
where $(a)$ holds since the channel gains and the messages are mutually independent; $(b)$ follows from the fact that $X_2^t$ is a deterministic function of $\left( W_2,Y_2^{t-1} \right)$\footnote{We have also added $Y_1^{t-1}$ in the condition for the scenario in which output feedback links are available from each receiver to both transmitters.} and the fact that conditioning reduces entropy; $(c)$ follows from the fact that condition on $\hbox{W}_2$, $X_1^{t-1}$, $X_2^t$, $X_1[t]$ is independent of the channel realization at future time instants, hence, we can replace $G^n$ by $G^t$; and $(d)$ follows from Lemma~\ref{lemma:OFBHalf} below. Dividing both sides by $n$ and let $n \rightarrow \infty$, we get 
\begin{align}
R_1 + (1+q) R_2 \leq p(1+q)^2.
\end{align}

Similarly, we can get $(1+q) R_1 + R_2 \leq p(1+q)^2$.

\begin{lemma}
\label{lemma:OFBHalf}
\begin{align}
& \sum_{t=1}^n{H(G_{12}[t]X_1[t]|W_2,G_{12}^{t-1}X_1^{t-1},X_2^t,G^t)} \nonumber \\
& \quad \geq \frac{1}{2-p} \sum_{t=1}^n H\left(G_{11}[t]X_1[t],G_{12}[t]X_1[t]|W_2, \right. \nonumber \\
& \qquad \qquad \left. G_{11}^{t-1}X_1^{t-1},G_{12}^{t-1}X_1^{t-1},X_2^t,G^t \right).
\end{align}
\end{lemma}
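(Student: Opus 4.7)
The plan is to establish the inequality termwise by mimicking the proof of Lemma~\ref{lemma:portion}, viewing $G_{11}[t]X_1[t]$ and $G_{12}[t]X_1[t]$ as the two ``receivers'' of a single transmitter ${\sf Tx}_1$, while carrying the extra conditioning variables $W_2$, $X_2^t$, and past side information inertly on both sides. The arithmetic identity that makes everything close cleanly is $(1-q^2)/(2-p) = p(1+q)/(1+q) = p$, which is the ratio we will need to match.

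First I would simplify the left-hand side at time $t$. Since $G_{12}[t]\sim\mathcal{B}(p)$ and $X_1[t]$ is a function of $(W_1, Y_1^{t-1}, G^{t-1})$, so $X_1[t]$ is independent of $G[t]$ given the conditioning, splitting on $G_{12}[t]$ yields
\begin{equation*}
H\bigl(G_{12}[t]X_1[t]\mid W_2,G_{12}^{t-1}X_1^{t-1},X_2^t,G^t\bigr) = p\,H\bigl(X_1[t]\mid W_2,G_{12}^{t-1}X_1^{t-1},X_2^t,G^t\bigr).
\end{equation*}
Next I would simplify the right-hand side analogously: the pair $(G_{11}[t]X_1[t], G_{12}[t]X_1[t])$ is deterministically zero on the event $\{G_{11}[t]=G_{12}[t]=0\}$ (probability $q^2$), and otherwise recovers $X_1[t]$ exactly, which gives
\begin{equation*}
H\bigl(G_{11}[t]X_1[t],G_{12}[t]X_1[t]\mid W_2,G_{11}^{t-1}X_1^{t-1},G_{12}^{t-1}X_1^{t-1},X_2^t,G^t\bigr) = (1-q^2)\,H\bigl(X_1[t]\mid W_2,G_{11}^{t-1}X_1^{t-1},G_{12}^{t-1}X_1^{t-1},X_2^t,G^t\bigr).
\end{equation*}

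The final step is to compare the two $X_1[t]$-entropies: the RHS version conditions on strictly more variables (the extra block $G_{11}^{t-1}X_1^{t-1}$), so conditioning reduces entropy gives
\begin{equation*}
H\bigl(X_1[t]\mid W_2,G_{12}^{t-1}X_1^{t-1},X_2^t,G^t\bigr) \geq H\bigl(X_1[t]\mid W_2,G_{11}^{t-1}X_1^{t-1},G_{12}^{t-1}X_1^{t-1},X_2^t,G^t\bigr).
\end{equation*}
Multiplying by $p$ and invoking the identity $p = (1-q^2)/(2-p)$ converts this into the per-time-instant version of the claimed inequality; summing over $t=1,\dots,n$ finishes the proof.

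The main obstacle is the bookkeeping around what may legitimately be ``taken out'' of the conditioning. Specifically, I need to justify that $X_1[t]$ is conditionally independent of $G_{11}[t]$ and $G_{12}[t]$ given the listed past (so that conditioning on particular realizations of the current cross-link gains does not alter the conditional distribution of $X_1[t]$), and that the presence of $X_2^t$ in the conditioning does not cause trouble because $X_2^t$ is determined by $(W_2, Y_2^{t-1}, G^{t-1})$ in the OFB model and is thus measurable with respect to variables already accounted for. Once these measurability/independence checks are verified, the argument reduces to a conditional version of Lemma~\ref{lemma:portion} applied time instant by time instant.
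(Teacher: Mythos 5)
Your proposal is correct and matches the paper's proof essentially step for step: both factor out the current channel gains from each side using the independence of $X_1[t]$ from $G[t]$ given the past, invoke the identity $p/(1-q^2)=1/(2-p)$, and close with conditioning reduces entropy on the extra block $G_{11}^{t-1}X_1^{t-1}$. The only cosmetic difference is that you reduce both sides to conditional entropies of $X_1[t]$ before comparing, while the paper keeps the right-hand side as the joint entropy of $(G_{11}[t]X_1[t],G_{12}[t]X_1[t])$ when it applies conditioning reduces entropy.
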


\begin{remark}
Lemma~\ref{lemma:OFBHalf} is the counterpart of Lemma~\ref{lemma:portion} when Output Feedback is available. Note that in the condition we have $X_2^t$, and due to the presence of output feedback the proof is different than that of Lemma~\ref{lemma:portion}.
\end{remark}

\begin{proof}
%
%
We have
%
\begin{align}
& H(G_{12}[t]X_1[t]|W_2,G_{12}^{t-1}X_1^{t-1},X_2^t,G^t) \nonumber \\
& \quad = p H(X_1[t]|G_{12}[t] = 1,W_2,G_{12}^{t-1}X_1^{t-1},X_2^t,G^{t-1}) \nonumber \\
& \quad \overset{(a)}= p H(X_1[t]|W_2,G_{12}^{t-1}X_1^{t-1},X_2^t,G^{t-1}) \nonumber \\
& \quad \overset{(b)}= p H(X_1[t]|W_2,G_{12}^{t-1}X_1^{t-1},X_2^t,G^{t}) \nonumber \\
& \quad = \frac{p}{1-q^2} H\left(G_{11}[t]X_1[t], G_{12}[t]X_1[t]|W_2, \right. \nonumber \\
& \qquad \qquad \left. G_{12}^{t-1}X_1^{t-1},X_2^t,G^t \right) \nonumber \\
& \quad \overset{(c)}\geq \frac{1}{2-p} H\left(G_{11}[t]X_1[t], G_{12}[t]X_1[t]|W_2, \right. \nonumber \\
& \qquad \qquad \left. G_{11}^{t-1}X_1^{t-1},G_{12}^{t-1}X_1^{t-1},X_2^t,G^t \right),
\end{align}
where $(a)$ and $(b)$ follow from the fact that condition on $\hbox{W}_2$, $G_{12}^{t-1}X_1^{t-1}$, $X_2^t$ and $G^n$, $X_1[t]$ is independent of the channel realization at time $t$; and $(c)$ holds since conditioning reduces entropy.

Therefore, we have 
\begin{align*}
& \sum_{t=1}^n{H(G_{12}[t]X_1[t]|W_2,G_{12}^{t-1}X_1^{t-1},X_2^t,G^t)} \nonumber \\
& \quad \geq \frac{1}{2-p} \sum_{t=1}^n H\left(G_{11}[t]X_1[t],G_{12}[t]X_1[t]|W_2, \right. \nonumber \\
& \qquad \qquad \left. G_{11}^{t-1}X_1^{t-1},G_{12}^{t-1}X_1^{t-1},X_2^t,G^t \right).
\end{align*}
\end{proof}


In the following two sections, we consider the last scenario we are interested in, \emph{i.e.} Instantaneous-CSIT and OFB, and we provide the proof of Theorem~\ref{THM:IC-ICSITFB}. First, we present the achievability strategy, and we demonstrate how OFB can enhance our achievable rate region. We then present the converse proof.


\section{Achievability Proof of Theorem~\ref{THM:IC-ICSITFB}~[Instantaneous-CSIT and OFB]}
\label{sec:AchInstFB}


In this section, we describe our achievability strategy for the case of Instantaneous-CSIT and output feedback. Note that in this scenario, although transmitters have instantaneous knowledge of the channel state information, the output signals are available at the transmitters with unit delay. We first provide a brief overview of our scheme.

\subsection{Overview} 

By symmetry, it suffices to describe the achievability scheme for corner point $$\left( R_1, R_2 \right) = \left( 1 - q^2, pq \right),$$
as depicted in Fig.~\ref{fig:region-InstFB}. Similarly, we can achieve corner point $\left( R_1, R_2 \right) = \left( pq, 1 - q^2 \right)$, and therefore by time sharing, we can achieve the region.

\begin{figure}[ht]
\centering
\includegraphics[height = 4 cm]{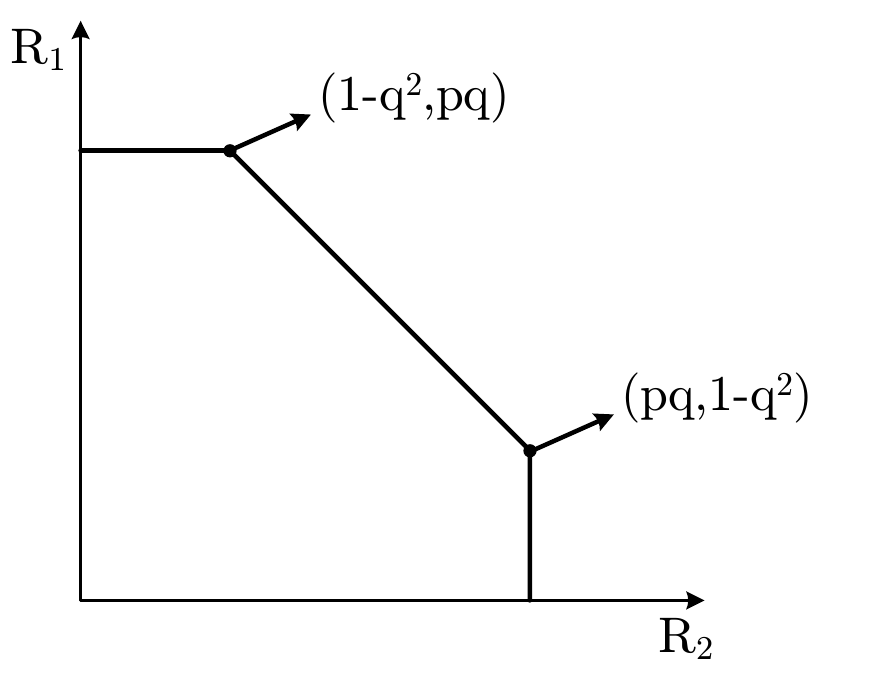}
\caption{Two-user Binary Fading IC: capacity region with Instantaneous-CSIT and output feedback. By symmetry, it suffices to describe the achievability scheme for corner point $\left( R_1, R_2 \right) = \left( 1 - q^2, pq \right)$. \label{fig:region-InstFB}}
\end{figure}

Our achievability strategy is carried on over $b+1$ communication blocks, each block with $n$ time instants. Transmitters communicate fresh data bits in the first $b$ blocks and the final block is to help the receivers decode their corresponding bits. At the end, using our scheme, we achieve a rate tuple arbitrary close to $\frac{b}{b+1} \left( 1 - q^2, pq \right)$ as $n \rightarrow \infty$. Finally letting $b \rightarrow \infty$, we achieve the desired rate tuple.

\subsection{Achievability Strategy} Let $\hbox{W}^j_i$ be the message of transmitter $i$ in block $j$, $i=1,2,$ and $j=1,2,\ldots,b$. Moreover, let $\hbox{W}^j_1 = a^j_1, a^j_2, \ldots, a^j_{m}$, and $\hbox{W}^j_2 = b^j_1, b^j_2, \ldots, b^j_{m_2}$, for $j=1,2,\ldots,b$, where data bits $a^j_i$'s and $b^j_i$'s are picked uniformly and independently from $\{ 0,1 \}$, $i=1,2,\ldots,m$, and 
\begin{align}
m_2 = \frac{q}{1+q} m.
\end{align}

We also set $n = m/(1-q^2) + m^{2/3}$, where $n$ is the length of each block. 

\vspace{1mm}

{\bf Achievability strategy for block $1$}: In the first communication block, at each time instant $t$, if at least one of the outgoing links from ${\sf Tx}_1$ is on, then it sends one of its initial $m$ bits that has not been transmitted before (note that this happens with probability $( 1 - q^2 )$). On the other hand, ${\sf Tx}_2$ communicates a new bit (a bit that has not been transmitted before) if the link to its receiver is on and it does not interfere with receiver one (\emph{i.e.} $G_{22}[t] = 1$ and $G_{21}[t] = 0$). In other words, ${\sf Tx}_2$ communicates a new bit if either one of Cases $2,4,9,$ or $11$ in Table~\ref{table:16cases} occurs (note that this happens with probability $pq$). 

The first block goes on for $n$ time instants. If at the end of the first block, there exists a bit at either of the transmitters that has not yet been transmitted, we consider it as error type-I and halt the transmission.


Assuming that the transmission in not halted, using output feedback links, transmitter two has access to the bits of transmitter one communicated in the first block. In particular, ${\sf Tx}_2$ has access to the bits of ${\sf Tx}_1$ transmitted in Cases $2,11,12,14,$ and $15$ during block $1$. Note that the bits communicated in Cases $11,12,14,$ and $15$ from ${\sf Tx}_1$ have to be provided to ${\sf Rx}_1$. However, the bits communicated in Case $2$ from ${\sf Tx}_1$ are already available at ${\sf Rx}_1$ but needed at ${\sf Rx}_2$, see Fig.~\ref{fig:case2FB}. Transmitter two will provide such bits to ${\sf Rx}_2$ in the following communication block.

\begin{figure}[ht]
\centering
\includegraphics[height = 3 cm]{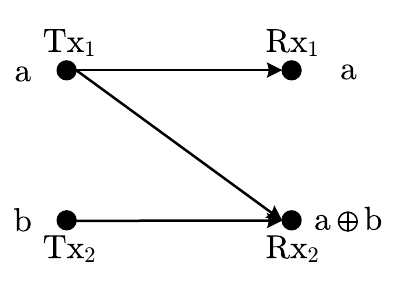}
\caption{The bit communicated in Case $2$ from ${\sf Tx}_1$ is already available at ${\sf Rx}_1$ but it is needed at ${\sf Rx}_2$. Transmitter two learns this bit through the feedback channel and will provide it to ${\sf Rx}_2$ in the following communication block.\label{fig:case2FB}}
\end{figure}

Now, ${\sf Tx}_2$ transfers the bits of ${\sf Tx}_1$ communicated in Cases $2,11,12,14,$ and $15$, during the first communication block to queues $Q^1_{1,{\sf C}_2}$, $Q^1_{1,{\sf C}_{11}}$, $Q^1_{1,{\sf C}_{12}}$, $Q^1_{1,{\sf C}_{14}}$, and $Q^1_{1,{\sf C}_{15}}$ respectively.

\vspace{1mm}

Let random variable $N^1_{1,{\sf C}_{\ell}}$ denote the number of bits in $Q^1_{1,{\sf C}_{\ell}}$, $\ell = 2,11,12,14,15$. Since transition of a bit to this state is distributed as independent Bernoulli RV, upon completion of block $1$, we have 
\begin{align}
& \mathbb{E}[N^1_{1,{\sf C}_{\ell}}] = \frac{\Pr\left( \mathrm{Case~\ell} \right)}{1 - \sum_{i=9,10,13,16}{\Pr\left( \mathrm{Case~i} \right)}} m  \\
&~ = (1-q^2)^{-1} \Pr\left( \mathrm{Case~\ell} \right) m, \qquad \ell = 2,11,12,14,15. \nonumber
\end{align}

If the event $\left[ N^1_{1,{\sf C}_{\ell}} \geq \mathbb{E}[N^1_{1,{\sf C}_{\ell}}] + m^{\frac{2}{3}} \right]$ occurs, we consider it as error type-II and we halt the transmission. At the end of block $1$, we add $0$'s (if necessary) to $Q^1_{1,{\sf C}_{\ell}}$ so that the total number of bits is equal to $\mathbb{E}[N^1_{1,{\sf C}_{\ell}}] + m^{\frac{2}{3}}$. Furthermore, using Chernoff-Hoeffding bound, we can show that the probability of errors of types I and II decreases exponentially with $m$.

\vspace{1mm}

{\bf Achievability strategy for block $j,$ $j=2,3,\ldots,b$}: The transmission strategy for ${\sf Tx}_1$ is the same as block $1$ for the first $b$ blocks (all but the last block). In other words, at time instant $t$, ${\sf Tx}_1$ transmits one of its initial $m$ bits (that has not been transmitted before) if at least one of its outgoing links is on. On the other hand, ${\sf Tx}_2$ communicates $\hbox{W}^2_2$ using similar strategy as the first block, \emph{i.e.} ${\sf Tx}_2$ communicates a new bit if either one of Cases $2,4,9,$ or $11$ occurs. 

Transmitter two transfers the bits communicated in Cases $2,11,12,14,$ and $15$, during communication block $j$ to queues $Q^j_{1,{\sf C}_2}$, $Q^j_{1,{\sf C}_{11}}$, $Q^j_{1,{\sf C}_{12}}$, $Q^j_{1,{\sf C}_{14}}$, and $Q^j_{1,{\sf C}_{15}}$ respectively.

\vspace{1mm}

Moreover, at time instant $t$, 
\begin{itemize}
\item if Case $3$ occurs, ${\sf Tx}_2$ sends one of the bits from $Q^{j-1}_{1,{\sf C}_2}$ and removes it from this queue since it has been delivered successfully to ${\sf Rx}_2$, see Fig.~\ref{fig:case2-3FB}. If Case $3$ occurs and $Q^{j-1}_{1,{\sf C}_2}$ is empty, ${\sf Tx}_2$ remains silent; 
\begin{figure}[ht]
\centering
\includegraphics[height = 3 cm]{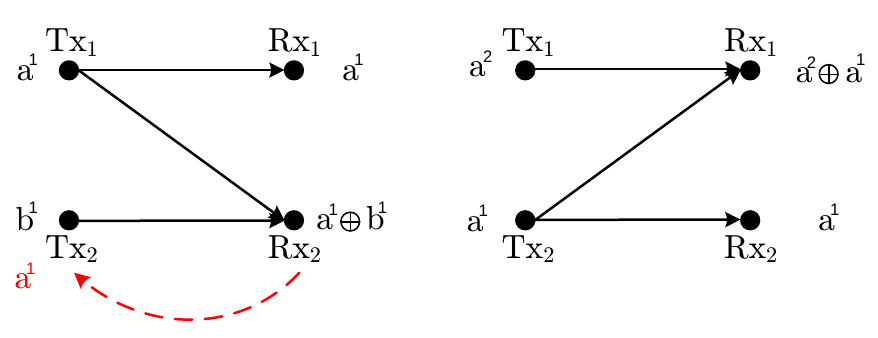}
\caption{In block $j$ when Case $3$ occurs, ${\sf Tx}_2$ retransmits the bit of ${\sf Tx}_1$ communicated in Case $2$ during block $j-1$. Note that this bit does not cause interference at ${\sf Rx}_1$ and it is needed at ${\sf Rx}_2$.\label{fig:case2-3FB}}
\end{figure}

\item if Case $10$ occurs, ${\sf Tx}_2$ sends one of the bits from $Q^{j-1}_{1,{\sf C}_{11}}$ and removes it from this queue, see Fig.~\ref{fig:case10-11FB}. If Case $10$ occurs and $Q^{j-1}_{1,{\sf C}_{11}}$ is empty, ${\sf Tx}_2$ remains silent;
\begin{figure}[ht]
\centering
\includegraphics[height = 3 cm]{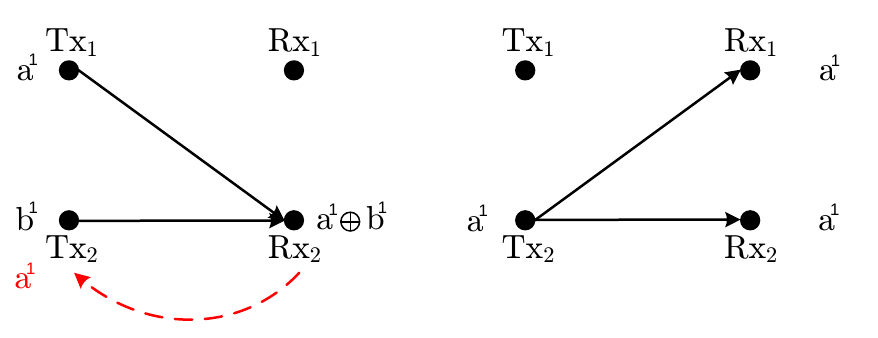}
\caption{In block $j$ when Case $10$ occurs, ${\sf Tx}_2$ retransmits the bit of ${\sf Tx}_1$ communicated in Case $11$ during block $j-1$. Note that this bit is needed at both receivers.\label{fig:case10-11FB}}
\end{figure}

\item if Case $12$ occurs, it sends one of the bits from $Q^{j-1}_{1,{\sf C}_{12}}$ and removes it from this queue. If Case $12$ occurs and $Q^{j-1}_{1,{\sf C}_{12}}$ is empty, ${\sf Tx}_2$ remains silent;

\item if Case $13$ occurs, it sends one of the bits from $Q^{j-1}_{1,{\sf C}_{14}}$ and removes it from this queue. If Case $13$ occurs and $Q^{j-1}_{1,{\sf C}_{14}}$ is empty, ${\sf Tx}_2$ remains silent;

\item if Case $15$ occurs, it sends one of the bits from $Q^{j-1}_{1,{\sf C}_{15}}$ and removes it from this queue. If Case $15$ occurs and $Q^{j-1}_{1,{\sf C}_{15}}$ is empty, ${\sf Tx}_2$ remains silent.
\end{itemize}

If at the end of block $j$, there exists a bit at either of the transmitters that has not yet been transmitted, or any of the queues $Q^{j-1}_{1,{\sf C}_2}$, $Q^{j-1}_{1,{\sf C}_{11}}$, $Q^{j-1}_{1,{\sf C}_{12}}$, $Q^{j-1}_{1,{\sf C}_{14}}$, or $Q^{j-1}_{1,{\sf C}_{15}}$ is not empty, we consider this event as error type-I and halt the transmission.

Assuming that the transmission is not halted, let random variable $N^j_{1,{\sf C}_{\ell}}$ denote the number of bits in $Q^j_{1,{\sf C}_{\ell}}$, $\ell = 2,11,12,14,15$. From basic probability, we have
\begin{align}
& \mathbb{E}[N^j_{1,{\sf C}_{\ell}}] = \frac{\Pr\left( \mathrm{Case~\ell} \right)}{1 - \sum_{i=9,10,13,16}{\Pr\left( \mathrm{Case~i} \right)}} m  \\
&~= (1-q^2)^{-1} \Pr\left( \mathrm{Case~\ell} \right) m, \qquad \ell = 2,11,12,14,15. \nonumber
\end{align}

If the event $\left[ N^j_{1,{\sf C}_{\ell}} \geq \mathbb{E}[N^j_{1,{\sf C}_{\ell}}] + m^{\frac{2}{3}} \right]$ occurs, we consider it as error type-II and we halt the transmission. At the end of block $1$, we add $0$'s (if necessary) to $Q^j_{1,{\sf C}_{\ell}}$ so that the total number of bits is equal to $\mathbb{E}[N^j_{1,{\sf C}_{\ell}}] + m^{\frac{2}{3}}$. Using Chernoff-Hoeffding bound, we can show that the probability of errors of types I and II and decreases exponentially with $m$.

\vspace{1mm}

{\bf Achievability strategy for block $b+1$}: Finally in block $b+1$, no new data bit is transmitted (\emph{i.e.} $\hbox{W}^{b+1}_1, \hbox{W}^{b+1}_2 = 0$), and ${\sf Tx}_2$ only communicates the bits of ${\sf Tx}_1$ communicated in the previous block in Cases $2,11,12,14,$ and $15$ as described above. If at the end of block $b+1$, any of the queues $Q^{b}_{1,{\sf C}_2}$, $Q^{b}_{1,{\sf C}_{11}}$, $Q^{b}_{1,{\sf C}_{12}}$, $Q^{b}_{1,{\sf C}_{14}}$, or $Q^{b}_{1,{\sf C}_{15}}$ is not empty, we consider this event as error type-I and halt the transmission.

\vspace{1mm}

The probability that the transmission strategy halts at the end of each block can be bounded by the summation of error probabilities of types I and II. Using Chernoff-Hoeffding bound, we can show that the probability that the transmission strategy halts at any point approaches zero as $m \rightarrow \infty$.

\subsection{Decoding} At the end of block $j+1$, ${\sf Rx}_1$ has acces to $\hbox{W}^j_1$ with no interference, $j=1,2,\ldots,b$. At the end of block $b+1$, ${\sf Rx}_2$ uses the bits communicated in Cases $3$ and $10$ from ${\sf Tx}_2$ to cancel out the interference it has received from ${\sf Tx}_1$ during the previous block in Cases $2$ and $11$. Therefore, at the end of block $b+1$, ${\sf Rx}_1$ has access to $\hbox{W}^b_2$ with no interference. Then, ${\sf Rx}_2$ follows the same strategy for blocks $b$ and $b-1$. Therefore, using similar idea, ${\sf Rx}_2$ uses backward decoding to cancel out interference in the previous blocks to decode all messages.

Now, since each block has $n = m/(1-q^2) + m^{2/3}$ time instants and the probability that the transmission strategy halts at any point approaches zero for $m \rightarrow \infty$, we achieve a rate tuple 
\begin{align}
\frac{b}{b+1} \left( 1 - q^2, pq \right),
\end{align}
as $m \rightarrow \infty$. Finally letting $b \rightarrow \infty$, we achieve the desired rate tuple.


\section{Converse Proof of Theorem~\ref{THM:IC-ICSITFB}~[Instantaneous-CSIT and OFB]}
\label{sec:ConvInstFB}


To derive the outer-bound on individual rates, we have
\begin{align}
n R_1 & = H(W_1) \overset{(a)}= H(W_1|G^n) \nonumber \\
& \overset{(\mathrm{Fano})}\leq I(W_1;Y_1^n|G^n) + n \epsilon_n \nonumber \\
& = H(Y_1^n|G^n) - H(Y_1^n|W_1,G^n) + n \epsilon_n \nonumber \\
& \leq H(Y_1^n|G^n)+ n \epsilon_n \nonumber \\
& \leq  (1-q^2) n + n \epsilon_n,
\end{align}
where $\epsilon_n \rightarrow 0$ as $n \rightarrow \infty$; $(a)$ holds since message $\hbox{W}_1$ is independent of $G^n$. Similarly, we have
\begin{align}
n R_2 \leq (1-q^2) n + n \epsilon_n,
\end{align}
dividing both sides by $n$ and let $n \rightarrow \infty$, we have 
\begin{equation}
\label{}
\left\{ \begin{array}{ll}
\vspace{1mm} R_1 \leq 1 - q^2, & \\
R_2 \leq 1 - q^2. & 
\end{array} \right.
\end{equation}

The outer-bound on $R_1+R_2$, \emph{i.e.}
\begin{align}
R_1 + R_2 \leq 1 - q^2 + p q,
\end{align}
can be obtained as follows.
\begin{align}
n & (R_1 + R_2 - 2 \epsilon_n ) \nonumber \\
& \overset{(a)}\leq H(W_1|W_2,G^n) + H(W_2|G^n) \nonumber \\
& \overset{\mathrm{Fano}}\leq I(W_1;Y_1^n|W_2,G^n) + I(W_2;Y_2^n|G^n) \nonumber \\
& = H(Y_1^n|W_2,G^n) - \underbrace{H(Y_1^n|W_1,W_2,G^n)}_{=~0} + I(W_2;Y_2^n|G^n) \nonumber \\
& = H(Y_1^n|W_2,G^n) + H(Y_2^n|G^n) - H(Y_2^n|W_2,G^n) \nonumber \\
& = H(Y_1^n|W_2,G^n) + H(Y_2^n|G^n) - \left[ H(Y_1^n,Y_2^n|W_2,G^n) \right. \nonumber \\
& \qquad \left. - H(Y_1^n|Y_2^n,W_2,G^n) \right] \nonumber \\
& = H(Y_1^n|Y_2^n,W_2,G^n) + H(Y_2^n|G^n) \nonumber \\
& \overset{(b)}= H(Y_2^n|G^n) \nonumber \\
&~+ \sum_{t=1}^n{H(Y_1[t]|W_2,Y_2^n,Y_1^{t-1},X_2^t,G_{12}^t X_1^t,G^n)} \nonumber
\end{align}
\begin{align}
\label{eq:sumrateInstFB}
& \overset{(c)}\leq H(Y_2^n|G^n) + H(Y_1^n|G_{12}^n X_1^n,G_{21}^n X_2^n,G^n) \nonumber \\
& \overset{(d)}\leq \sum_{t=1}^n{H(Y_2[t]|G^n)} \nonumber \\
&~+ \sum_{t=1}^n{H(Y_1[t]|G_{12}[t] X_1[t],G_{21}[t] X_2[t],G^n)} \nonumber \\
& \overset{(e)}\leq \left( 1 - q^2 \right) n + p q n,
\end{align}
where $\epsilon_n \rightarrow 0$ as $n \rightarrow \infty$; and $(a)$ follows from the fact that the messages and $G^n$ are mutually independent; $(b)$ holds since $X_2^t$ is a function of $W_2,Y_1^{t-1},Y_2^{t-1},$ and $G^t$; $(c)$ and $(d)$ follow from the fact that conditioning reduces entropy; and $(e)$ holds since 
\begin{align}
& H(Y_2[t]|G^n) \leq 1 - q^2, \nonumber\\
& H(Y_1[t]|G_{12}[t] X_1[t],G_{21}[t] X_2[t],G^n) \leq pq.
\end{align}

Dividing both sides by $n$ and let $n \rightarrow \infty$, we get 
\begin{align}
R_1 + R_2 \leq 1 - q^2 + p q.
\end{align}

We also note this outer-bound on $R_1+R_2$ can be also applied to the case of Instantaneous-CSIT and no output feedback (\emph{i.e.} Theorem~\ref{THM:IC-InstCSIT}).



\section{Extension to the Non-Homogeneous Setting}
\label{sec:extension}


In this section, we discuss the extension of our results to the non-homogeneous case. More precisely, we consider the two-user Binary Fading Interference Channel of Section~\ref{sec:problem} where 
\begin{align}
G_{ii}[t] \overset{d}\sim \mathcal{B}(p_d), \qquad G_{i\bar{i}}[t] \overset{d}\sim \mathcal{B}(p_c),
\end{align}
for $0 \leq p_d,p_c \leq 1$, $\bar{i} = 3 - i$, and $i = 1,2$. We define $q_d = 1 - p_d$ and $q_c = 1 - p_c$. We study the non-homogeneous BFIC in two settings: $(1)$ Delayed-CSIT and output feedback; and $(2)$ Delayed-CSIT (and no output feedback). For the case of Delayed-CSIT and output feedback, we fully characterize the capacity region as follows.

 
\begin{theorem}
\label{THM:IC-DelayedCSIT-OFB-Symmetric}
{\bf [Capacity Region with Delayed-CSIT, OFB, and Non-Homogeneous Channel Gains]} The capacity region of the two-user Binary Fading IC with Delayed-CSIT and output feedback, $\mathcal{C}^{\mathrm{DCSIT,OFB}}\left( p_d,p_c \right)$ is given by
\begin{align}
\label{eq:capacity-FB-Sym}
& \mathcal{C}^{\mathrm{DCSIT,OFB}}(p_d,p_c) = \left\{ R_1, R_2 \in \mathbb{R}^+~s.t. \right. \nonumber \\
& \left.~p_c R_i + \left( 1 - q_dq_c \right) R_{\bar{i}} \leq \left( 1 - q_dq_c \right)^2,~i=1,2 \right\}.
\end{align}
\end{theorem}


\begin{proof}
We first prove the converse. The converse proof follows similar steps as the case of the homogeneous setting described in Scetion~\ref{sec:conversedelayedhalfFB} for Theorem~\ref{THM:IC-DCSITFB}. Set\footnote{For $p_c = 0$ the result is trivial, so we assume that $\beta$ is well defined.}
\begin{align}
\beta = \frac{\left( 1 - q_dq_c \right)}{p_c}.
\end{align}

We have
\begin{align}
\label{eq:converseDelayedOFBnonhomo}
n & \left( R_1 + \beta R_2 \right) = H(W_1) + \beta H(W_2) \nonumber \\
& \overset{(a)}= H(W_1|W_2,G^n) + \beta H(W_2|G^n) \nonumber \\
& \overset{\mathrm{Fano}}\leq I(W_1;Y_1^n|W_2,G^n) + \beta I(W_2;Y_2^n|G^n) + n \epsilon_n \nonumber \\
& \leq I(W_1;Y_1^n,Y_2^n|W_2,G^n) + \beta I(W_2;Y_2^n|G^n) + n \epsilon_n \nonumber \\
& = H(Y_1^n,Y_2^n|W_2,G^n) - \underbrace{H(Y_1^n,Y_2^n|W_1,W_2,G^n)}_{=~0} \nonumber \\
&~+ \beta H(Y_2^n|G^n) - \beta H(Y_2^n|W_2,G^n) + n \epsilon_n \nonumber \\
& = \beta H(Y_2^n|G^n) + H(Y_1^n,Y_2^n|W_2,G^n) \nonumber \\
&~- \beta H(Y_2^n|W_2,G^n) + n \epsilon_n \nonumber \\
& = \beta H(Y_2^n|G^n) + \sum_{t=1}^n{H(Y_1[t],Y_2[t]|W_2,Y_1^{t-1},Y_2^{t-1},G^n)} \nonumber \\
&~- \beta \sum_{t=1}^n{H(Y_2[t]|W_2,Y_2^{t-1},G^n)} + n \epsilon_n \nonumber \\
& \overset{(b)}\leq \beta H(Y_2^n|G^n) \nonumber \\
&~+ \sum_{t=1}^n{H(Y_1[t],Y_2[t]|W_2,Y_1^{t-1},Y_2^{t-1},X_2^t,G^n)} \nonumber \\
&~- \beta \sum_{t=1}^n{H(Y_2[t]|W_2,Y_2^{t-1},X_2^t,G^n)} + n \epsilon_n \nonumber \\
& = \beta H(Y_2^n|G^n) + \sum_{t=1}^n H\left(G_{11}[t]X_1[t],G_{12}[t]X_1[t]|W_2, \right. \nonumber\\
& \qquad \qquad \left. G_{11}^{t-1}X_1^{t-1},G_{12}^{t-1}X_1^{t-1},X_2^t,G^n \right) \nonumber \\
& \quad - \beta \sum_{t=1}^n{H(G_{12}[t]X_1[t]|W_2,G_{12}^{t-1}X_1^{t-1},X_2^t,G^n)} + n \epsilon_n \nonumber \\
& \overset{(c)}= \beta H(Y_2^n|G^n) + \sum_{t=1}^n H\left(G_{11}[t]X_1[t],G_{12}[t]X_1[t]|W_2, \right. \nonumber \\
& \qquad \qquad \left. G_{11}^{t-1}X_1^{t-1},G_{12}^{t-1}X_1^{t-1},X_2^t,G^t \right) \nonumber \\
& \quad - \beta \sum_{t=1}^n{H(G_{12}[t]X_1[t]|W_2,G_{12}^{t-1}X_1^{t-1},X_2^t,G^t)} + n \epsilon_n \nonumber \\
& \overset{(d)}\leq \beta H(Y_2^n|G^n) + n \epsilon_n \nonumber \\
& \leq \frac{\left( 1 - q_dq_c \right)^2}{p_c} n + n \epsilon_n,
\end{align}
where $(a)$ holds since the channel gains and the messages are mutually independent; $(b)$ follows from the fact that $X_2^t$ is a deterministic function of $\left( W_2,Y_2^{t-1} \right)$\footnote{We have also added $Y_1^{t-1}$ in the condition for the scenario in which output feedback links are available from each receiver to both transmitters.} and the fact that conditioning reduces entropy; $(c)$ follows from the fact that condition on $\hbox{W}_2$, $X_1^{t-1}$, $X_2^t$, $X_1[t]$ is independent of the channel realization at future time instants, hence, we can replace $G^n$ by $G^t$; and $(d)$ follows from Lemma~\ref{lemma:portionnonhomo} below. Dividing both sides by $n$ and let $n \rightarrow \infty$, we get 
\begin{align}
p_c R_1 + \left( 1 - q_dq_c \right) R_2 \leq \left( 1 - q_dq_c \right)^2,
\end{align}
and the derivation of the other bound would be similar.

\begin{lemma}
\label{lemma:portionnonhomo}
{\bf [Non-Homogeneous Entropy Leakage with Output Feedback]} For the broadcast channel described in Fig.~\ref{fig:portionHalf} with parameters $p_d$ and $p_c$, and with Delayed-CSIT and output feedback, for \emph{any} input distribution, we have
\begin{align}
& \left( 1 - q_dq_c \right) \sum_{t=1}^n{H(G_{12}[t]X_1[t]|W_2,G_{12}^{t-1}X_1^{t-1},X_2^t,G^t)} \nonumber \\ 
& \quad \geq p_c \sum_{t=1}^n H\left(G_{11}[t]X_1[t],G_{12}[t]X_1[t]|W_2, \right. \nonumber \\
& \qquad \qquad \left. G_{11}^{t-1}X_1^{t-1},G_{12}^{t-1}X_1^{t-1},X_2^t,G^t \right).
\end{align}
\end{lemma}

The proof of Lemma~\ref{lemma:portionnonhomo} follows the same steps as the proof of Lemma~\ref{lemma:OFBHalf}. We note that this lemma, is the generalization of the Entropy Leakage Lemma to the case where $G_1[t] \overset{d}\sim \mathcal{B}(p_d)$ and $G_2[t] \overset{d}\sim \mathcal{B}(p_c)$ and where the output feedback is present.

We now describe the achievability proof. The achievability proof is also similar to that of the homogeneous setting described in Section~\ref{sec:AchDelayedFB}. Hence, we provide an outline of the achievability strategy here. The achievability strategy of corner points $\left( 1- q_dq_c, 0 \right)$ and $\left( 0, 1- q_dq_c \right)$, is based on utilizing the additional communication paths created by the means of the output feedback links, \emph{e.g.}, $${\sf Tx}_1 \rightarrow {\sf Rx}_2 \rightarrow {\sf Tx}_2 \rightarrow {\sf Rx}_1.$$

In the rest of the proof, we provide the outline for the achievability of corner point
\begin{align}
\label{eq:cornerFBnonhomo}
R_1 = R_2 = \frac{(1-q_dq_c)}{1+(1-q_dq_c)^{-1}p_c} \raisebox{2pt}{.}
\end{align}

The strategy is carried on over two phases similar to Phase~$1$ and Phase~$2$ of Section~\ref{sec:AchDelayedFB}. We assume that at the beginning of the communication block, there are $m$ bits in $Q_{i \rightarrow i}$, $i=1,2$. Phase~$1$ is the uncategorized transmission, and it goes on for 
\begin{align}
\left( 1 - q_dq_c \right)^{-1} m + m^{\frac{2}{3}}
\end{align}
time instants and if at the end of this phase, either of the queues $Q_{1 \rightarrow 1}$ or $Q_{2 \rightarrow 2}$ is not empty, we declare an error and halt the transmission. Upon completion of Phase~$1$, using the ideas described in Section~\ref{sec:opportunities} for output feedback, we further create bits of common interest. More precisely, we use the following ideas: we update the status of bits in $Q_{i,{\sf C}_1}$ to bits of common interest; as described in Example~6 of Section~\ref{sec:opportunities}, using output feedback, we combine bits in $Q_{1 \rightarrow 1|2}$ and $Q_{2 \rightarrow 2|1}$ to create bits of common interest; as described in Example~7 of Section~\ref{sec:opportunities}, using output feedback, we combine bits in $Q_{1 \rightarrow 2|1}$ and $Q_{2 \rightarrow 1|2}$ to create bits of common interest.  

In the second phase, we deliver the bits in $Q_{1 \rightarrow \{ 1,2 \}}$ and $Q_{2 \rightarrow \{ 1,2 \}}$ using the transmission strategy for the two-multicast problem. For 
\begin{align}
0 \leq p_c \leq \frac{p_d}{1+p_d},
\end{align}
the cross links become the bottleneck for the two-multicast network depicted in Fig.~\ref{fig:two-multicast}, and as a result, using the two-multicast problem as discussed in Lemma~\ref{lemma:multicast} is sub-optimal. However, using the output feedback link, ${\sf Tx}_i$ learns the interfering bit of ${\sf Tx}_{\bar{i}}$, $i=1,2$; and considering this side information available at the transmitters, we can show that a sum-rate of $\left( 1 - q_d q_c \right)$ for the two-multicast problem is in fact achievable. At the end of Phase~2, all bits are delivered. It takes  
\begin{align}
\left( 1 - q_dq_c \right)^{-2} p_c m + O\left( m^{\frac{2}{3}} \right)
\end{align}
time instants to complete Phase~2. Therefore, we achieve a symmetric sum-rate of   
\begin{align}
R_1 = R_2 = \frac{(1-q_dq_c)}{1+(1-q_dq_c)^{-1}p_c},
\end{align}
as $m \rightarrow \infty$.
\end{proof}

For the case of Delayed-CSIT (and no output feedback), we partially solve the problem as described below.

\begin{theorem}
\label{THM:IC-DelayedCSIT-Symmetric}
{\bf [Capacity Region with Delayed-CSIT and Non-Homogeneous Channel Gains]} The capacity region of the two-user Binary Fading IC with Delayed-CSIT (and no output feedback), $\mathcal{C}^{\mathrm{DCSIT}}\left( p_d,p_c \right)$ for 
\begin{align}
\frac{p_d}{1+p_d} \leq p_c \leq 1,
\end{align}
is the set of all rate tuples $\left( R_1, R_2 \right)$ satisfying
\begin{equation}
\label{eq:DelayedNSIregionSym}
\mathcal{C}^{\mathrm{DCSIT}}\left( p_d,p_c \right) =
\left\{ \begin{array}{ll}
\vspace{1mm} 0 \leq R_i \leq p_d, & \\
p_c R_i + \left( 1 - q_dq_c \right) R_{\bar{i}} \leq \left( 1 - q_dq_c \right)^2, & 
\end{array} \right.
\end{equation}
for $i = 1,2$.
\end{theorem}

\begin{proof}
The proof of converse follows from the previous theorem since the outer-bound of Theorem~\ref{THM:IC-DelayedCSIT-OFB-Symmetric} also serves as an outer-bound for Theorem~\ref{THM:IC-DelayedCSIT-Symmetric}. Here, we discuss the achievability strategy. The achievability proof is similar to that of the homogeneous setting as described in Section~\ref{sec:achallhalf} for Theorem~\ref{THM:IC-DelayedCSIT}. The corner points are as follows.
\begin{align}
& \left( R_1, R_2 \right) = \left( \min\left\{ p_d, \frac{(1-q_dq_c)}{1+(1-q_dq_c)^{-1}p_c} \right\}, \right. \nonumber \\
& \qquad \qquad \left. \min\left\{ p_d, \frac{(1-q_dq_c)}{1+(1-q_dq_c)^{-1}p_c} \right\} \right), \nonumber\\
& \left( R_1, R_2 \right) = \left( p_d, \min\left\{ p_d, \left( 1-q_dq_c \right) q_d \right\} \right), \nonumber\\
& \left( R_1, R_2 \right) = \left( \min\left\{ p_d, \left( 1-q_dq_c \right) q_d \right\}, p_d \right).
\end{align}

Here, we provide the outline for the achievability of the first corner point, \emph{i.e.}
\begin{align}
R_1 = R_2 = \min\left\{ p_d, \frac{(1-q_dq_c)}{1+(1-q_dq_c)^{-1}p_c} \right\} \raisebox{2pt}{.}
\end{align}

We assume that at the beginning of the communication block, there are $m$ bits in $Q_{i \rightarrow i}$, $i=1,2$. Phase~$1$ is the uncategorized transmission.
Upon completion of Phase~$1$, using the ideas described in Section~\ref{sec:oppideas1}, we upgrade the status of the bits to bits of common interest. We use the following coding opportunities.
\begin{itemize}

\item {\bf Type I} Combining bits in $Q_{i \rightarrow \bar{i}|i}$ and $Q_{i \rightarrow i|\bar{i}}$ to create bits of common interest, $i=1,2$;

\item {\bf Type II} Combining the bits in $Q_{i,{\sf C}_1}$ and $Q_{i \rightarrow i|\bar{i}}$ to create bits of common interest, $i=1,2$;

\item {\bf Type III} Combining the bits in $Q_{i,{\sf C}_1}$ and $Q_{i \rightarrow \bar{i}|i}$ to create bits of common interest, $i=1,2$.

\end{itemize}
Then, in the final phase, we deliver the bits in $Q_{1 \rightarrow \{ 1,2 \}}$ and $Q_{2 \rightarrow \{ 1,2 \}}$ using the transmission strategy for the two-multicast problem.
\end{proof}

\begin{remark}
We note that in Theorem~\ref{THM:IC-DelayedCSIT-Symmetric}, we partially characterized the capacity region. In fact, for 
\begin{align}
0 \leq p_c \leq \frac{p_d}{1+p_d},
\end{align}
our achievability region does not match the outer-bounds. Closing the gap in this regime could be an interesting future direction. The reason one might think the achievability could be improved is that in this regime the cross links become the bottleneck for the two-multicast network depicted in Fig.~\ref{fig:two-multicast}, and as a result, using the two-multicast problem might be sub-optimal. On the other hand, as we demonstrated in~\cite{AlirezaNoCSIT,vahid2014binary}, even under No-CSIT assumption, this regime requires a different outer-bound compared to other regimes. 
\end{remark}




\section{Conclusion and Future Directions}
\label{sec:conclusion}


We studied the effect of delayed knowledge of the channel state information at the transmitters, on the capacity region of the two-user binary fading interference channels. We introduced various coding opportunities, created by Delayed-CSIT, and presented an achievability strategy that systematically exploits the coding opportunities. We derived an achievable rate region that matches the outer-bounds for this problem, hence, characterizing the capacity region. We have also derived the capacity region of this problem with Delayed-CSIT and output feedback.

A future direction would be to extend our results to the case of two-user Gaussian fading interference channel with Delayed-CSIT. As discussed in the introduction, one can view our binary fading model as a fading interpretation of the linear deterministic model where the non-negative integer associated to each link is at most $1$. Therefore, one approach is to extend the current results to the case of fading linear deterministic interference channel and then, further extend that result to the case of Gaussian fading interference channel, in order to obtain approximate capacity characterization. This approach has been taken for the No-CSIT assumption in~\cite{tse2012fading}, in the context of fading broadcast channels, and in~\cite{Guo} in the context of one-sided fading interference channels. In fact, one can view our Binary Fading model as the model introduced in~\cite{tse2012fading,Guo} with only one layer.

Another future direction is to consider the $k$-user setting of the problem. In~\cite{Jafar_ergodic}, authors have shown that for the $k$-user fading interference channel with instantaneous knowledge of the channel state information, sum degrees of freedom (DoF) of $k/2$ is achievable. However, in the absence of the CSIT, the achievable sum DoF collapses to $1$. As a result a large degradation in network capacity, due to lack of the CSIT, is observed. It has been recently shown that, with Delayed-CSIT, it is possible to achieve more than one sum DoF \cite{Jafar_Retrospective,Abdoli2011IC-X-Arxiv}, however, the achievable sum DoFs are less than $1.5$ for any number of users. This together with lack of nontrivial DoF upper bounds leaves the problem of sum DoF characterization of interference channels with Delayed-CSIT still open and challenging, to the extent that it is even unknown whether the sum DoF of such networks \emph{scales} with the number of users or not. A promising direction may be to study this problem in the context of our simpler binary fading model, to understand whether the sum capacity of such network with Delayed-CSIT scales with the number of users or it will saturate.

Finally, motivated by recent results that demonstrate that, with Instantaneous-CSIT, multi-hopping can significantly increase the capacity of interference networks (\emph{e.g.},~\cite{gou2011aligned,shomorony2011two-unicast} for two-unicast networks and~\cite{shomorony2012KKK} for multi-unicast networks), an interesting future direction would be explore the effect of Delayed-CSIT on the capacity of muti-hop binary interference networks.

\appendices


\section{Achievability Proof of Theorem~\ref{THM:IC-InstCSIT}~[Instantaneous-CSIT]}
\label{sec:AchInst}


In this appendix, we provide the achievability proof of Theorem~\ref{THM:IC-InstCSIT}. Below, we have stated the capacity region of the two-user BFIC with Instantaneous-CSIT (and no OFB).
\begin{equation}
\label{}
\mathcal{C}^{\mathrm{ICSIT}} =
\left\{ \begin{array}{ll}
\vspace{1mm} 0 \leq R_i \leq p, & i = 1,2, \\
R_1 + R_2 \leq 1 - q^2 + p q.  &  \\
\end{array} \right.
\end{equation}

\begin{figure}
\centering
\subfigure[]{\includegraphics[height = 4cm]{FiguresPDF/region-plessthanhalf.pdf}}
\hspace{1in}
\subfigure[]{\includegraphics[height = 4cm]{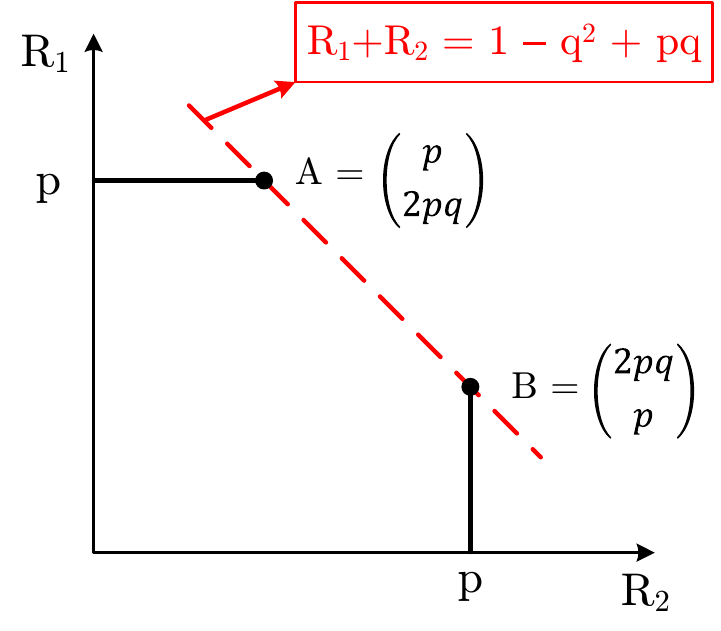}}
\caption{\it Capacity region of the two-user BFIC with Instantaneous-CSIT and for (a) $0 \leq p \leq 0.5$, and (b) $0.5 < p \leq 1$.\label{fig:plessmore}}
\end{figure}

\begin{remark}
For $0 \leq p \leq 0.5$, the capacity region is given by
\begin{equation}
\label{}
\mathcal{C}^{\mathrm{ICSIT}} = \left\{ R_1, R_2 \in \mathbb{R}^+~s.t.~R_i \leq p,~i=1,2 \right\}.
\end{equation}
while for $0.5 < p \leq 1$, the outer-bound on $R_1+R_2$ is also active, see Fig.~\ref{fig:plessmore}.
\end{remark}

With Instantaneous-CSIT, each transmitter knows what channel realization occurs at the time of transmission. Transmitters can take advantage of such knowledge and by pairing different realizations, the optimal rate region as given in Theorem~\ref{THM:IC-InstCSIT} can be achieved. We will first describe the achievability strategy for $0 \leq p \leq 0.5$, since it is easier to follow. We then complete the proof by describing the achievability strategy for $0.5 <  p \leq 1$.

\subsection{Achievabiliy Strategy for $0 \leq p \leq 0.5$}  

Note that the result for $p = 0$ is trivial, so we assume $0 < p \leq 0.5$. Below, we describe the possible pairing opportunities that are useful in this regime and then, we describe the achievability scheme. The possible pairing opportunities are as follows.
\begin{itemize}
\item {Type A} [Cases $1$ and $15$]: In Case $15$, only the corss links are equal to $1$, therefore, by pairing bits in Case $1$ with bits in Case $15$, we can cancel out interference in Case $1$, see Fig.~\ref{fig:TypeA}. In other words by pairing the two cases, we can communicate $2$ bits interference free.

\begin{figure}[h]
\centering
\includegraphics[height = 3cm]{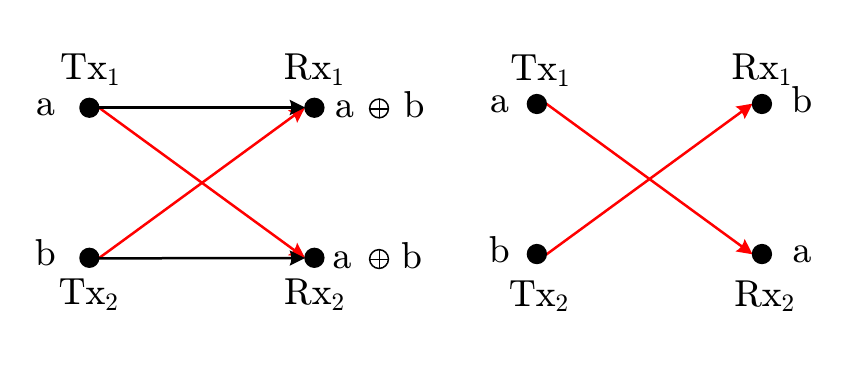}
\caption{Pairing opportunity Type A: By pairing Cases $1$ and $15$, we can communicate two bits interference-free. For instance, receiver one has access to bits $a \oplus b$ and $b$ and as a result, it can decode its desired bit.}\label{fig:TypeA}
\end{figure}

\item {Type B} [Cases $2$ and $14$]: We can pair up Cases $2$ and $14$ to cancel out interference in Case $2$ as depicted in Fig.~\ref{fig:TypeC}.

\begin{figure}[h]
\centering
\includegraphics[height = 3cm]{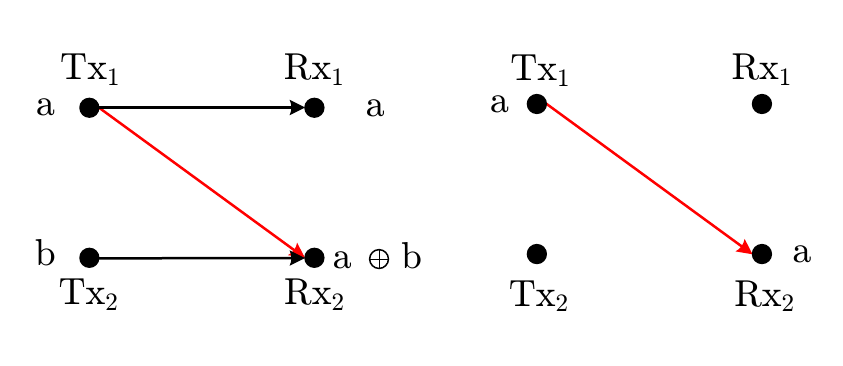}
\caption{Pairing opportunity Type B: By pairing Cases $2$ and $14$, we can communicate two bits interference-free. For instance, receiver two has access to bits $a \oplus b$ and $a$ and as a result, it can decode its desired bit.}\label{fig:TypeC}
\end{figure}

\item {Type C} [Cases $3$ and $13$]: Similar to {Type B} with swapping user IDs.
\end{itemize}

We are now ready to provide the achievability scheme for the Instantaneous-CSIT model and for $0 \leq p \leq 0.5$. We first provide an overview of our scheme.

\subsubsection{Overview} Our achievability strategy is carried on over $b+1$ communication blocks, each block with $n$ time instants. We describe the achievability strategy for rate tuple 
\begin{align}
\left( R_1, R_2 \right) = \left( p, p \right).
\end{align}

Transmitters communicate fresh data bits in the first $b$ blocks and the final block is to help the receivers decode their corresponding bits. At the end, using our scheme, we achieve rate tuple $\frac{b}{b+1} \left( p, p \right)$ as $n \rightarrow \infty$. Finally, letting $b \rightarrow \infty$, we achieve the desired rate tuple. In our scheme the messages transmitted in block $j$, $j=1,2,\ldots,b$, will be decoded at the end of block $j+1$.

\vspace{1mm}

\subsubsection{Achievability strategy} Let $\hbox{W}^j_i$ be the message of transmitter $i$ in block $j$, $i=1,2$, $j=1,2,\ldots,b$. Moreover, let $\hbox{W}^j_1 = a^j_1, a^j_2, \ldots, a^j_{m}$, and $\hbox{W}^j_2 = b^j_1, b^j_2, \ldots, b^j_{m}$, where $a^j_i$'s and $b^j_i$'s are picked uniformly and independently from $\{ 0,1 \}$, $i=1,2,\ldots,m$, $j=1,2,\ldots,b$, and some positive integer $m$. We set
\begin{align}
n = m/p + \left( 2/p^4 \right) m^{2/3}.
\end{align}


\vspace{1mm}

{\bf Achievability strategy for block $1$}: In the first communication block, at each time instant $t$, ${\sf Tx}_i$ sends a new data bit (from its initial $m$ bits) if $G_{ii}[t] = 1$, $i=1,2$. In other words, ${\sf Tx}_1$ sends a new data bit either of the following channel realizations occurs (see Table~\ref{table:16cases}): Cases $1,2,3,4,5,6,7,$ and $8$; while ${\sf Tx}_2$ sends a new data bit if either of the following channel realizations occurs: Cases $1,2,3,4,9,10,11,$ and $12$. 

If not specified, the transmitters remain silent. ${\sf Tx}_1$ transfers its transmitted bits in Cases $1$ and $2$ to queues $Q^1_{1,{\sf C1}}$ and $Q^1_{1,{\sf C2}}$ respectively; and ${\sf Tx}_2$ transfers its transmitted bits in Cases $1$ and $3$ to queues $Q^1_{2,{\sf C1}}$ and $Q^1_{2,{\sf C3}}$ respectively.

If at the end of block $1$, there exists a bit at either of the transmitters that has not yet been transmitted, we consider it as error type-I and halt the transmission. 

\begin{remark}
Note that the transmitted bits in Cases $4,5,6,7,8,9,10,11,$ and $12$ are available at their corresponding receivers without any interference. In other words, they are communicated successfully and no retransmission is required.
\end{remark}

Assuming that the transmission is not halted, let random variable $N^1_{i,{\sf C}_{\ell}}$ denote the number of bits in $Q^1_{i,{\sf C}_{\ell}}$, $\left( i, \ell \right) = (1,1), (1,2), (2,1), (2,3)$. Since transition of a bit to this queue is distributed as independent Bernoulli RV, upon completion of block $1$, we have 
\begin{align}
&\mathbb{E}[N^1_{i,{\sf C}_{\ell}}] = \frac{\Pr\left( \mathrm{Case~\ell} \right)}{1 - \sum_{i=9,10,\ldots,16}{\Pr\left( \mathrm{Case~i} \right)}} m  \nonumber \\
&~= \frac{1}{p} \Pr\left( \mathrm{Case~\ell} \right) m.
\end{align}

If the event $\left[ N^1_{i,{\sf C}_{\ell}} \geq \mathbb{E}[N^1_{i,{\sf C}_{\ell}}] + m^{\frac{2}{3}} \right]$ occurs, we consider it as error type-II and we halt the transmission. At the end of block $1$, we add $0$'s (if necessary) to $Q^1_{i,{\sf C}_{\ell}}$ so that the total number of bits is equal to $\mathbb{E}[N^1_{i,{\sf C}_{\ell}}] + m^{\frac{2}{3}}$. Furthermore, using Chernoff-Hoeffding bound, we can show that the probability of errors of types I and II decreases exponentially with $m$.

\vspace{1mm}

{\bf Achievability strategy for block $j,$ $j=2,3,\ldots,b$}: In communication block $j$, $j=2,3,\ldots,b$, at each time instant $t$, ${\sf Tx}_i$ sends a new data bit (from its initial $m$ bits) if $G_{ii}[t] = 1$, $i=1,2$. Transmitter one transfers its transmitted bit in Cases $1$ and $2$ to queues $Q^j_{1,{\sf C1}}$ and $Q^j_{1,{\sf C2}}$ respectively; and ${\sf Tx}_2$ transfers its transmitted bit in Cases $1$ and $3$ to queues $Q^j_{2,{\sf C1}}$ and $Q^j_{2,{\sf C3}}$ respectively. Note that so far the transmission scheme is similar to the first communication block. 

\begin{figure*}[t]
\centering
\includegraphics[height = 3.5cm]{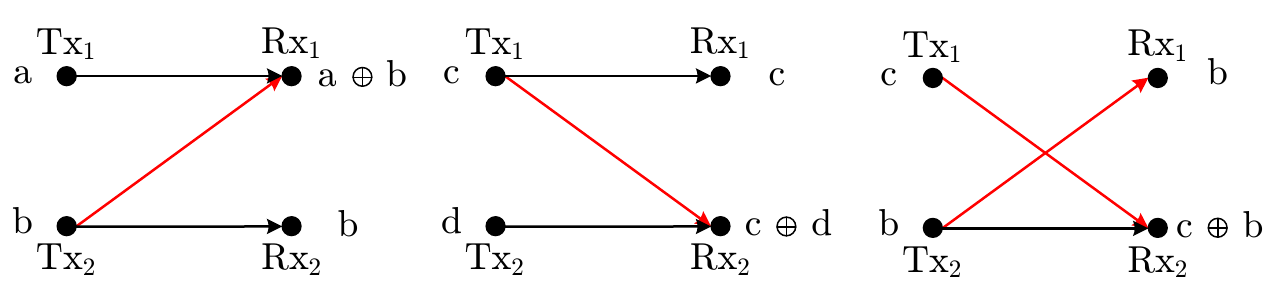}
\caption{Pairing opportunity Type D: Cases $2, 3,$ and $12$. ${\sf Tx}_1$ uses $c$ to recover $b$ and then it decodes $a$, similar argument holds for ${\sf Tx}_2$. All three cases have capacity $1$, and by pairing them, we can communicate $4$ bits.}\label{fig:TypeG}
\end{figure*}

Now if at a given time instant Case $15$ occurs, ${\sf Tx}_i$ sends a bit from queue $Q^{j-1}_{i,{\sf C1}}$ and removes it from the this queue. If at time instant $t$ Case $15$ occurs and $Q^{j-1}_{i,{\sf C1}}$ is empty, then ${\sf Tx}_i$ remains silent. This way, similar to pairing Type A described previously, the transmitted bits in Case $1$ of the previous block can be decoded at the corresponding receiver. 

Furthermore, if at a given time instant Case $14$ ($13$) occurs, ${\sf Tx}_1$ (${\sf Tx}_2$) sends a bit from queue $Q^{j-1}_{1,{\sf C2}}$ ($Q^{j-1}_{2,{\sf C3}}$) and removes it from the this queue. This is motivated by pairing Type B (C) described previously.

If at the end of block $j$, there exists a bit at either of the transmitters that has not yet been transmitted, or any of the queues $Q^{j-1}_{1,{\sf C}_1}$, $Q^{j-1}_{1,{\sf C}_2}$, $Q^{j-1}_{2,{\sf C}_1}$, or $Q^{j-1}_{2,{\sf C}_3}$ is not empty, we consider this event as error type-I and we halt the transmission.

Assuming that the transmission is not halted, let random variable $N^j_{i,{\sf C}_{\ell}}$ denote the number of bits in $Q^j_{i,{\sf C}_{\ell}}$, $\left( i, \ell \right) = (1,1), (1,2), (2,1), (2,3)$. Since transition of a bit to this state is distributed as independent Bernoulli RV, upon completion of block $j$, we have 
\begin{align}
&\mathbb{E}[N^j_{i,{\sf C}_{\ell}}] = \frac{\Pr\left( \mathrm{Case~\ell} \right)}{1 - \sum_{i=9,10,\ldots,16}{\Pr\left( \mathrm{Case~i} \right)}} m \nonumber \\
&~= \frac{1}{p} \Pr\left( \mathrm{Case~\ell} \right) m.
\end{align}

If the event $\left[ N^j_{i,{\sf C}_{\ell}} \geq \mathbb{E}[N^j_{i,{\sf C}_{\ell}}] + m^{\frac{2}{3}} \right]$ occurs, we consider it as error type-II and we halt the transmission. At the end of block $1$, we add $0$'s (if necessary) to $Q^j_{i,{\sf C}_{\ell}}$ so that the total number of bits is equal to $\mathbb{E}[N^j_{i,{\sf C}_{\ell}}] + m^{\frac{2}{3}}$. Using Chernoff-Hoeffding bound, we can show that the probability of errors of types I and II  decreases exponentially with $m$.

\vspace{1mm}

{\bf Achievability strategy for block $b+1$}: In the final communication block, transmitters do not communicate any new data bit. 

If at time instant $t$ Case $15$ occurs, ${\sf Tx}_i$ sends a bit from queue $Q^{b}_{i,{\sf C1}}$ and removes it from the this queue. If at time instant $t$ Case $15$ occurs and $Q^{b}_{i,{\sf C1}}$ is empty, then ${\sf Tx}_i$ remains silent. If at time instant $t$ Case $14$ ($13$) occurs, ${\sf Tx}_1$ (${\sf Tx}_2$) sends a bit from queue $Q^{b}_{1,{\sf C2}}$ ($Q^{b}_{2,{\sf C3}}$) and removes it from the this queue.

If at the end of block $b+1$, any of the states $Q^{b}_{1,{\sf C}_1}$, $Q^{b}_{1,{\sf C}_2}$, $Q^{b}_{2,{\sf C}_1}$, or $Q^{b}_{2,{\sf C}_3}$ is not empty, we consider this event as error type-I and we halt the transmission.

Note that if the transmission is not halted, any bit is either available at its intended receiver interference-free, or the interfering bit is provided to the receiver in the following block. The probability that the transmission strategy halts at the end of each block can be bounded by the summation of error probabilities of types I and II. Using Chernoff-Hoeffding bound, we can show that the probability that the transmission strategy halts at any point approaches zero as $m \rightarrow \infty$.

Now, since each block has $n = m/p + \left( 2/p^4 \right) m^{2/3}$ time instants and the probability that the transmission strategy halts at any point approaches zero as $m \rightarrow \infty$, we achieve a rate tuple 
\begin{align}
\frac{b}{b+1} \left( p, p \right),
\end{align}
as $m \rightarrow \infty$. Finally letting $b \rightarrow \infty$, we achieve the desired rate tuple.


\subsection{Achievabiliy Strategy for $0.5 < p \leq 1$} 

For $p = 1$, the capacity region is the same with no, delayed, or instantaneous CSIT. So in this section, we assume $0.5 < p < 1$. By symmetry, it suffices to describe the strategy for point $A = \left( p, 2pq \right)$. In this regime, we will take advantage of another pairing opportunity as described below.

$\bullet$ {Type D} [Cases $2, 3,$ and $12$]: This type of pairing is different from what we have described so far. In all previous types, we paired up cases that had zero capacity to cancel out interference in other cases. However, here all three cases have capacity $1$. By pairing all three cases together, we can communicate $4$ bits as depicted in Fig.~\ref{fig:TypeG}.

\begin{remark}
This coding opportunity can be applicable to DoF analysis of wireless networks with linear schemes in the context of $2 \times 2 \times 2$ layered networks (Section~III.A of~\cite{issa2013two}).
\end{remark}

\subsubsection{Overview} The achievability is again carried on over $b+1$ communication blocks, each block with $n$ time instants. We describe the achievability strategy for rate tuple 
\begin{align}
\left( R_1, R_2 \right) = \left( p, 2pq \right),
\end{align}
see Fig.~\ref{fig:plessmore}(b). 

Transmitters communicate fresh data bits in the first $b$ blocks and the final block is to help receivers decode their corresponding bits. At the end, using our scheme, we achieve rate tuple $\frac{b}{b+1} \left( p, 2pq \right)$ as $n \rightarrow \infty$. Finally, letting $b \rightarrow \infty$, we achieve the desired corner point. In our scheme, the transmitted bits in block $j$, $j=1,2,\ldots,b$, will be decoded by the end of block $j+1$.

\vspace{1mm}

\subsubsection{Achievability strategy} Let $\hbox{W}^j_i$ be the message of transmitter $i$ in block $j$. We assume $\hbox{W}^j_1 = a^j_1, a^j_2, \ldots, a^j_{m}$, and $\hbox{W}^j_2 = b^j_1, b^j_2, \ldots, b^j_{m_2}$ for $j=1,2,\ldots,b$, where $a^j_i$'s and $b^j_i$'s are picked uniformly and independently from $\{ 0,1 \}$, for some positive value of $m$ and $m_2 = 2qm$ (note that $2q < 1$). We set
\begin{align}
n = m/p + \left( 2/q^4 \right) m^{2/3}.
\end{align}


\vspace{1mm}

{\bf Achievability strategy for block $1$}: In the first communication block, at each time instant $t$, transmitter one sends a new data bit if $G_{11}[t] = 1$ except Case $1$. In other words, ${\sf Tx}_1$ sends a new data bit if either of the following channel realizations occurs (see Table~\ref{table:16cases}): Cases $2,3,4,5,6,7,$ and $8$. Transmitter two sends a new data bit if $G_{22}[t] = 1$ except Cases $1$ and $12$. In other words, ${\sf Tx}_2$ sends a new data bit if either of the following channel realizations occurs: Cases $2,3,4,9,10,$ and $11$. 

If at time instant $t$ where $t \leq \frac{q^2}{p^2} n$, Case $1$ occurs, then each transmitter sends out a new data bit. Then, ${\sf Tx}_i$ transfers its transmitted bit in Case $1$ to queue $Q^1_{i,{\sf C1}}$ for $t \leq \frac{q^2}{p^2} n$. If $t > \frac{q^2}{p^2} n$ and Case $1$ occurs, then ${\sf Tx}_1$ sends out a new data bit while ${\sf Tx}_2$ remains silent, see Fig.~\ref{fig:commblock1}. Note that these bits are delivered to ${\sf Rx}_1$ interference-free.

\begin{figure}[h]
\centering
\includegraphics[height = 3.75cm]{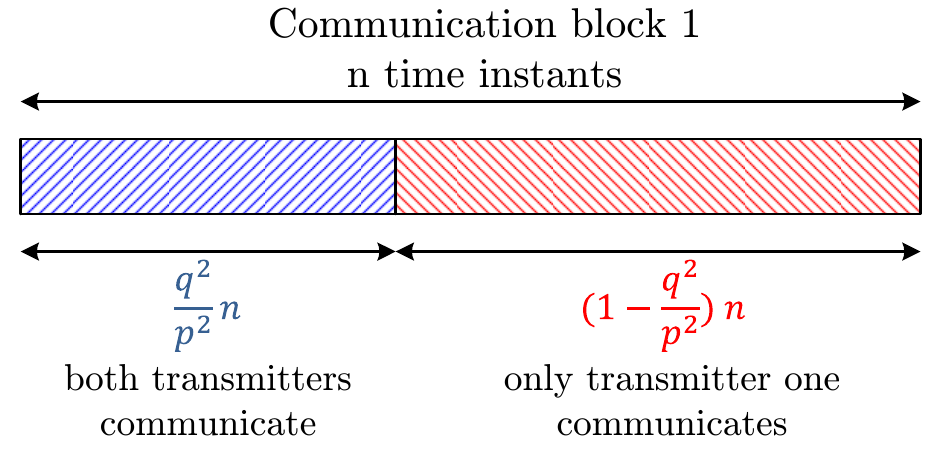}
\caption{If Case 1 occurs during communication block $1$, then if $t \leq \frac{q^2}{p^2} n$, each transmitter sends out a new data bit. However, if $t > \frac{q^2}{p^2} n$, then ${\sf Tx}_1$ sends out a new data bit while ${\sf Tx}_2$ remains silent.}\label{fig:commblock1}
\end{figure}

If $t \leq \frac{q^2}{p^2} n$, and Case $12$ occurs, then ${\sf Tx}_2$ sends out a new data bit while ${\sf Tx}_1$ remains silent. Note that these bits are delivered to ${\sf Rx}_2$ interference-free. 

If not specified, the transmitters remain silent. Note that ${\sf Tx}_1$ sends a bit if $G_{11}[t] = 1$ (\emph{i.e.} with probability $p$). On the other hand, ${\sf Tx}_2$ sends a bit with probability
\begin{align}
\sum_{j = 2,3,4,9,10,11}{\Pr\left( \mathrm{Case~j} \right)} + \frac{q^2}{p^2} \sum_{j = 1,12}{\Pr\left( \mathrm{Case~j} \right)} = 2pq.
\end{align}

Transmitter one transfers its transmitted bit in Case $2$ to queue $Q^1_{1,{\sf C2}}$; and ${\sf Tx}_2$ transfers its transmitted bit in Case $3$ to queue $Q^1_{2,{\sf C3}}$. If at the end of block $1$, there exists a bit at either of the transmitters that has not yet been transmitted, we consider it as error type-I and halt the transmission. 

\begin{remark}
Note that the transmitted bits in Cases $4,5,6,7,8,9,10,$ and $11$ are available at their corresponding receivers without any interference.
\end{remark}

Assuming that the transmission is not halted, let random variable $N^1_{i,{\sf C}_{\ell}}$ denote the number of bits in $Q^1_{i,{\sf C}_{\ell}}$, $\left( i, \ell \right) = (1,1), (1,2), (2,1), (2,3)$. Since transition of a bit to this state is distributed as independent Bernoulli RV, upon completion of block $1$, we have 
\begin{align}
& \mathbb{E}[N^1_{1,{\sf C}_1}] = \frac{\left( q^2/p^2 \right) \Pr\left( \mathrm{Case~1} \right)}{1 - \sum_{j = 9,10,\ldots,16}{\Pr\left( \mathrm{Case~j} \right)}} m  = pq^2 m, \nonumber \\
& \mathbb{E}[N^1_{1,{\sf C}_2}] = \frac{\Pr\left( \mathrm{Case~2} \right)}{1 - \sum_{j = 9,10,\ldots,16}{\Pr\left( \mathrm{Case~j} \right)}} m  = p^2q m, \nonumber \\ 
& \mathbb{E}[N^1_{2,{\sf C}_1}] \nonumber \\
&~= \frac{\left( q^2/p^2 \right) \Pr\left( \mathrm{Case~1} \right) \times 2qm}{\sum_{j = 2,3,4,9,10,11}{\Pr\left( \mathrm{Case~j} \right)} + \frac{q^2}{p^2} \sum_{j = 1,12}{\Pr\left( \mathrm{Case~j} \right)}}  \nonumber \\
&~= pq^2 m, \nonumber \\
& \mathbb{E}[N^1_{2,{\sf C}_3}] \nonumber \\
&~= \frac{\Pr\left( \mathrm{Case~3} \right) \times  2qm}{\sum_{j = 2,3,4,9,10,11}{\Pr\left( \mathrm{Case~j} \right)} + \frac{q^2}{p^2} \sum_{j = 1,12}{\Pr\left( \mathrm{Case~j} \right)}}  \nonumber \\
&~= p^2q m.
\end{align}

If the event $\left[ N^1_{i,{\sf C}_{\ell}} \geq \mathbb{E}[N^1_{i,{\sf C}_{\ell}}] + m^{\frac{2}{3}} \right]$ occurs, we consider it as error type-II and we halt the transmission. At the end of block $1$, we add $0$'s (if necessary) to $Q^1_{i,{\sf C}_{\ell}}$ so that the total number of bits is equal to $\mathbb{E}[N^1_{i,{\sf C}_{\ell}}] + m^{\frac{2}{3}}$. Using Chernoff-Hoeffding bound, we can show that the probability of errors of types I and II decreases exponentially with $m$.

\vspace{1mm}

{\bf Achievability strategy for block $j,$ $j=2,3,\ldots,b$}: In communication block $j$, $j=2,3,\ldots,b$, at each time instant $t$, transmitter one sends a new data bit if $G_{11}[t] = 1$ except Case $1$, while transmitter two sends a new data bit if $G_{22}[t] = 1$ except Cases $1$ and $12$.

If $t \leq \frac{q^2}{p^2} n$ and Case $1$ occurs, then each transmitter sends out a new data bit. Then ${\sf Tx}_i$ transfers its transmitted bit in Case $1$ to queue $Q^j_{i,{\sf C1}}$ for $t \leq \frac{q^2}{p^2} n$. If $t > \frac{q^2}{p^2} n$ and Case $1$ occurs, then ${\sf Tx}_1$ sends out a new data bit while ${\sf Tx}_2$ remains silent. Note that these bits are delivered to ${\sf Rx}_1$ interference-free.

If $t \leq \frac{q^2}{p^2} n$ and Case $12$ occurs, then ${\sf Tx}_2$ sends out a new data bit while ${\sf Tx}_1$ remains silent. We will exploit channel realization $12$ for $t > \frac{q^2}{p^2} n$, to perform pairing Type D.

Transmitter one transfers its transmitted bit in Case $2$ to queue $Q^j_{1,{\sf C2}}$; and transmitter two transfers its transmitted bit in Case $3$ to queue $Q^j_{2,{\sf C3}}$. Note that so far the transmission scheme is similar to the first communication block. 

Now, if at time instant $t$ Case $15$ occurs, ${\sf Tx}_i$ sends a bit from queue $Q^{j-1}_{i,{\sf C1}}$ and removes it from the this queue. If at time instant $t$ Case $15$ occurs and $Q^{j-1}_{i,{\sf C1}}$ is empty, then ${\sf Tx}_i$ remains silent. This way, similar to pairing Type A described previously, the transmitted bits in Case $1$ of the previous block can be decoded at the corresponding receiver. 

Furthermore, if at time instant $t$ Case $14$ ($13$) occurs, ${\sf Tx}_1$ (${\sf Tx}_2$) sends a bit from queue $Q^{j-1}_{1,{\sf C2}}$ ($Q^{j-1}_{2,{\sf C3}}$) and removes it from the this queue. This is motivated by pairing Type B (C) described previously.

Finally, if $t > \frac{q^2}{p^2} n$ and Case $12$ occurs, ${\sf Tx}_1$ sends a bit from queue $Q^{j-1}_{1,{\sf C2}}$ and ${\sf Tx}_2$ sends a bit from queue $Q^{j-1}_{2,{\sf C3}}$. Each transmitter removes the transmitted bit from the corresponding queue. This is motivated by pairing Type D described above.

If at the end of block $j$, there exists a bit at either of the transmitters that has not yet been transmitted, or any of the states $Q^{j-1}_{1,{\sf C1}}$, $Q^{j-1}_{1,{\sf C2}}$, $Q^{j-1}_{2,{\sf C1}}$, or $Q^{j-1}_{2,{\sf C3}}$ is not empty, we consider it as error type-I and halt the transmission.

Assuming that the transmission is not halted, let random variable $N^j_{i,{\sf C}_{\ell}}$ denote the number of bits in $Q^j_{i,{\sf C}_{\ell}}$, $\left( i, \ell \right) = (1,1), (1,2), (2,1), (2,3)$. Since transition of a bit to this state is distributed as independent Bernoulli RV, upon completion of block $j$, we have 
\begin{align}
& \mathbb{E}[N^j_{1,{\sf C}_1}] = \frac{\left( q^2/p^2 \right) \Pr\left( \mathrm{Case~1} \right)}{1 - \sum_{j = 9,10,\ldots,16}{\Pr\left( \mathrm{Case~j} \right)}} m  = pq^2 m, \nonumber \\
& \mathbb{E}[N^j_{1,{\sf C}_2}] = \frac{\Pr\left( \mathrm{Case~2} \right)}{1 - \sum_{j = 9,10,\ldots,16}{\Pr\left( \mathrm{Case~j} \right)}} m  = p^2q m, \nonumber \\
& \mathbb{E}[N^j_{2,{\sf C}_1}] \nonumber \\
&~= \frac{\left( q^2/p^2 \right) \Pr\left( \mathrm{Case~1} \right) \times  2qm}{\sum_{j = 2,3,4,9,10,11}{\Pr\left( \mathrm{Case~j} \right)} + \frac{q^2}{p^2} \sum_{j = 1,12}{\Pr\left( \mathrm{Case~j} \right)}} \nonumber \\
&~= pq^2 m, \nonumber \\
& \mathbb{E}[N^j_{2,{\sf C}_3}] \nonumber \\
&~= \frac{\Pr\left( \mathrm{Case~3} \right) \times 2qm}{\sum_{j = 2,3,4,9,10,11}{\Pr\left( \mathrm{Case~j} \right)} + \frac{q^2}{p^2} \sum_{j = 1,12}{\Pr\left( \mathrm{Case~j} \right)}} \nonumber \\
&~= p^2q m.
\end{align}

If the event $\left[ N^j_{i,{\sf C}_{\ell}} \geq \mathbb{E}[N^j_{i,{\sf C}_{\ell}}] + m^{\frac{2}{3}} \right]$ occurs, we consider it as error type-II and we halt the transmission. At the end of block $1$, we add $0$'s (if necessary) to $Q^j_{i,{\sf C}_{\ell}}$ so that the total number of bits is equal to $\mathbb{E}[N^j_{i,{\sf C}_{\ell}}] + m^{\frac{2}{3}}$. Using Chernoff-Hoeffding bound, we can show that the probability of errors of types I and II decreases exponentially with $m$.

\vspace{1mm}

{\bf Achievability strategy for block $b+1$}: In the final communication block, transmitters do not communicate any new data bit. 

If at time instant $t$ Case $15$ occurs, ${\sf Tx}_i$ sends a bit from queue $Q^{b}_{i,{\sf C1}}$ and removes it from the this queue. If at time instant $t$ Case $15$ occurs and $Q^{b}_{i,{\sf C1}}$ is empty, then ${\sf Tx}_i$ remains silent. If at time instant $t$ Case $14$ ($13$) or $12$ occurs, ${\sf Tx}_1$ (${\sf Tx}_2$) sends a bit from queue $Q^{b}_{1,{\sf C2}}$ ($Q^{b}_{2,{\sf C3}}$) and removes it from the this queue.

If at the end of block $j$ any of the states $Q^{b}_{1,{\sf C1}}$, $Q^{b}_{1,{\sf C2}}$, $Q^{b}_{2,{\sf C1}}$, or $Q^{b}_{2,{\sf C3}}$ is not empty, we consider it as error type-I and halt the transmission.

Note that if the transmission is not halted, any bit is either available at its intended receiver interference-free, or the interfering bits is provided to the receiver in the following block. The probability that the transmission strategy halts at the end of each block can be bounded by the summation of error probabilities of types I and II. Using Chernoff-Hoeffding bound, we can show that the probability that the transmission strategy halts at any point approaches zero for $m \rightarrow \infty$.

Now, since each block has $n = m/p + \left( 2/q^4 \right) m^{2/3}$ time instants and the probability that the transmission strategy halts at any point approaches zero for $m \rightarrow \infty$, we achieve a rate tuple 
\begin{align}
\frac{b}{b+1} \left( p, 2pq \right),
\end{align}
as $m \rightarrow \infty$. Finally letting $b \rightarrow \infty$, we achieve the desired rate tuple.



\section{Converse Proof of Theorem~\ref{THM:IC-InstCSIT}~[Instantaneous-CSIT]}
\label{sec:ConvInst}


The derivation of the outer-bound on individual rates is simple, however for the completeness of the results, we include the proof here. This outer-bound can be used for other theorems as needed. To derive the outer-bound on $R_1$, we have
\begin{align}
n R_1 & = H(W_1) \overset{(a)}= H(W_1|G^n) \nonumber \\
& \overset{(b)}= H(W_1|X_2^n,G^n) \nonumber \\
& \overset{(\mathrm{Fano})}\leq I(W_1;Y_1^n|X_2^n,G^n) + n \epsilon_n \nonumber \\
& \overset{(\mathrm{data~proc.})}\leq I(X_1^n;Y_1^n|X_2^n,G^n) + n \epsilon_n \nonumber \\
& = H(Y_1^n|X_2^n,G^n) - H(Y_1^n|X_1^n,X_2^n,G^n) + n \epsilon_n \nonumber \\
& = H(G_{11}^n X_1^n|X_2^n,G^n) + n \epsilon_n \nonumber \\
& \leq p n + n \epsilon_n,
\end{align}
where $\epsilon_n \rightarrow 0$ as $n \rightarrow \infty$; $(a)$ holds since message $\hbox{W}_1$ is independent of $G^n$; and $(b)$ holds since given $G^n$, $\hbox{W}_1$ is independent of $X_2^n$, see (\ref{eq:addX2}). Similarly, we have
\begin{align}
n R_2 \leq p n + n \epsilon_n.
\end{align}
dividing both sides by $n$ and let $n \rightarrow \infty$, we have 
\begin{equation}
\label{}
\left\{ \begin{array}{ll}
\vspace{1mm} R_1 \leq p & \\
R_2 \leq p & 
\end{array} \right.
\end{equation}

The outer-bound on $R_1 + R_2$ follows from the proof of Theorem~\ref{THM:IC-ICSITFB} in Section~\ref{sec:ConvInstFB}.


\section{Achievability Proof of Theorem~\ref{THM:IC-DelayedCSIT}: Sum-rate for $0 \leq p < 0.5$}
\label{Appendix:lessthanHalf}


In this appendix, we provide the achievability proof of Theorem~\ref{THM:IC-DelayedCSIT} with Delayed-CSIT and for $0 \leq p < 0.5$. We provide the an achievability strategy for rate tuple
\begin{align}
\label{eq:cornerlessthanhalf}
R_1 = R_2 = \min\left\{ p, \frac{(1-q^2)}{1+(1-q^2)^{-1}p} \right\} \raisebox{2pt}{.}
\end{align}

Let the messages of transmitters one and two be denoted by $\hbox{W}_1 = a_1,a_2,\ldots,a_m$, and $\hbox{W}_2 = b_1,b_2,\ldots,b_m$, respectively, where $a_i$'s and $b_i$'s are picked uniformly and independently from $\{ 0,1 \}$, $i=1,\ldots,m$. We show that it is possible to communicate these bits in 
\begin{align}
n & = \\
&\max \left\{ m/p, \left( 1 - q^2 \right)^{-1}m + \left( 1 - q^2 \right)^{-2} pm \right\} + O\left( m^{2/3}\right) \nonumber 
\end{align}
time instants with vanishing error probability (as $m \rightarrow \infty$). Therefore achieving the rates given in (\ref{eq:cornerlessthanhalf}) as $m \rightarrow \infty$.

\vspace{1mm}

\noindent {\bf Phase 1} [uncategorized transmission]: At the beginning of the communication block, we assume that the bits at ${\sf Tx}_i$ are in queue $Q_{i \rightarrow i}$, $i=1,2$. At each time instant, ${\sf Tx}_i$ sends out a bit from $Q_{i \rightarrow i}$ and this bit will either stay in the initial queue or a transition to a new queue will take place. Table~\ref{table:lessthanhalf} summarizes the transitions for each channel realization. The arguments are very similar to our discussion in Section~\ref{sec:achallhalf}, and the only difference is the way we handle Cases $7,8,11,$ and $12$. We provide some details about these cases. 

For Cases~7~$\left( \caseseven \right)$ and~8~$\left( \caseeight \right)$, in Section~\ref{sec:achallhalf}, we updated the status of the transmitted bit of ${\sf Tx}_2$ to $Q_{2 \rightarrow \{ 1,2 \}}$. However, this scheme is suboptimal for $0 \leq p < 0.5$, and instead we update the status of  the transmitted bit of ${\sf Tx}_2$ to an intermediate queue $Q_{2,INT}$. Then in Phase~2, we retransmit these bits and upgrade their status once more. Similar story holds for Cases~$11$ and $12$. The main reason for doing this is as follows. As we discussed in Section~\ref{sec:opportunities}, there are many opportunities to combine bits in order to improve the achievable rates. However, we could never combine the bits that were transmitted in Cases $7,8,11,$ or $12$ with other bits. This was not an issue for $0.5 \leq p \leq 1$, however for $0 \leq p < 0.5$, we need to find a way to combine these bits with other bits in future time instants. To do so, the only way is to keep them in an intermediate queue and retransmit them again in Phase~2.


\begin{table*}[t]
\caption{Summary of Phase $1$ for the Achievability Scheme of Corner Point $B$. Bit ``$a$'' represents a bit in $Q_{1 \rightarrow 1}$ while bit ``$b$'' represents a bit in $Q_{2 \rightarrow 2}$.}
\centering
\begin{tabular}{| c | c | c | c | c | c |}
\hline
case ID		 & channel realization    & state transition  & case ID		 & channel realization    & state transition \\
					 & at time instant $n$    &                   & 					 & at time instant $n$    &                  \\ [0.5ex]

\hline

\raisebox{18pt}{$1$}    &    \includegraphics[height = 1.4cm]{FiguresPDF/IC-c1.pdf}     &  \raisebox{18pt}{$ \left\{ \begin{array}{ll}  \vspace{1mm} a \rightarrow Q_{1,{\sf C}_1}  & \\ b \rightarrow Q_{2,{\sf C}_1}  &  \end{array} \right. $}  &  \raisebox{18pt}{$9$}    &    \includegraphics[height = 1.4cm]{FiguresPDF/IC-c9.pdf}     &   \raisebox{18pt}{$ \left\{ \begin{array}{ll}  \vspace{1mm} a \rightarrow Q_{1 \rightarrow 1}  & \\ b \rightarrow Q_{2,F}  &  \end{array} \right. $} \\

\hline

\raisebox{18pt}{$2$}    &    \includegraphics[height = 1.4cm]{FiguresPDF/IC-c2.pdf}     &  \raisebox{18pt}{$ \left\{ \begin{array}{ll}  \vspace{1mm} a \rightarrow Q_{1 \rightarrow 2|1}  & \\ b \rightarrow Q_{2,F}  &  \end{array} \right. $} &  \raisebox{18pt}{$10$}    &    \includegraphics[height = 1.4cm]{FiguresPDF/IC-c10.pdf}     &   \raisebox{18pt}{$ \left\{ \begin{array}{ll} \vspace{1mm} a \rightarrow Q_{1 \rightarrow 1}  & \\ b \rightarrow Q_{2,F}  &  \end{array} \right. $} \\

\hline

\raisebox{18pt}{$3$}    &    \includegraphics[height = 1.4cm]{FiguresPDF/IC-c3.pdf}     &  \raisebox{18pt}{$ \left\{ \begin{array}{ll}  \vspace{1mm} a \rightarrow Q_{1,F}  & \\ b \rightarrow Q_{2 \rightarrow 1|2}   &  \end{array} \right. $} &  \raisebox{18pt}{$11$}    &    \includegraphics[height = 1.4cm]{FiguresPDF/IC-c11.pdf}     &   \raisebox{18pt}{$ \left\{ \begin{array}{ll}  \vspace{1mm} a \rightarrow Q_{1,INT}  & \\ b \rightarrow Q_{2,F} &  \end{array} \right. $} \\

\hline

\raisebox{18pt}{$4$}    &    \includegraphics[height = 1.4cm]{FiguresPDF/IC-c4.pdf}     &   \raisebox{18pt}{$ \left\{ \begin{array}{ll} \vspace{1mm} a \rightarrow Q_{1,F}  & \\ b \rightarrow Q_{2,F}  &  \end{array} \right. $} &  \raisebox{18pt}{$12$}    &    \includegraphics[height = 1.4cm]{FiguresPDF/IC-c12.pdf}     &   \raisebox{18pt}{$ \left\{ \begin{array}{ll}  \vspace{1mm} a \rightarrow Q_{1,INT}  & \\ b \rightarrow Q_{2,F} &  \end{array} \right. $} \\

\hline

\raisebox{18pt}{$5$}    &    \includegraphics[height = 1.4cm]{FiguresPDF/IC-c5.pdf}     &   \raisebox{18pt}{$ \left\{ \begin{array}{ll} \vspace{1mm} a \rightarrow Q_{1,F}  & \\ b \rightarrow Q_{2 \rightarrow 2}  &  \end{array} \right. $} &  \raisebox{18pt}{$13$}    &    \includegraphics[height = 1.4cm]{FiguresPDF/IC-c13.pdf}     &   \raisebox{18pt}{$ \left\{ \begin{array}{ll}  \vspace{1mm} a \rightarrow Q_{1 \rightarrow 1}  & \\ b \rightarrow Q_{2 \rightarrow 2|1}  &  \end{array} \right. $} \\

\hline

\raisebox{18pt}{$6$}    &    \includegraphics[height = 1.4cm]{FiguresPDF/IC-c6.pdf}     &   \raisebox{18pt}{$ \left\{ \begin{array}{ll} \vspace{1mm} a \rightarrow Q_{1,F}  & \\ b \rightarrow Q_{2 \rightarrow 2}  &  \end{array} \right. $} &  \raisebox{18pt}{$14$}    &    \includegraphics[height = 1.4cm]{FiguresPDF/IC-c14.pdf}     &   \raisebox{18pt}{$ \left\{ \begin{array}{ll}  \vspace{1mm} a \rightarrow Q_{1 \rightarrow 1|2}  & \\ b \rightarrow Q_{2 \rightarrow 2}  &  \end{array} \right. $} \\

\hline

\raisebox{18pt}{$7$}    &    \includegraphics[height = 1.4cm]{FiguresPDF/IC-c7.pdf}     &   \raisebox{18pt}{$ \left\{ \begin{array}{ll}  \vspace{1mm} a \rightarrow Q_{1,F}  & \\ b \rightarrow Q_{2,INT} &  \end{array} \right. $} &  \raisebox{18pt}{$15$}    &    \includegraphics[height = 1.4cm]{FiguresPDF/IC-c15.pdf}     &   \raisebox{18pt}{$ \left\{ \begin{array}{ll} \vspace{1mm} a \rightarrow Q_{1 \rightarrow 1|2}  & \\ b \rightarrow Q_{2 \rightarrow 2|1}  &  \end{array} \right. $} \\

\hline

\raisebox{18pt}{$8$}    &    \includegraphics[height = 1.4cm]{FiguresPDF/IC-c8.pdf}     &   \raisebox{18pt}{$ \left\{ \begin{array}{ll}  \vspace{1mm} a \rightarrow Q_{1,F}  & \\ b \rightarrow Q_{2,INT} &  \end{array} \right. $} &  \raisebox{18pt}{$16$}    &    \includegraphics[height = 1.4cm]{FiguresPDF/IC-c16.pdf}     &   \raisebox{18pt}{$ \left\{ \begin{array}{ll} \vspace{1mm} a \rightarrow Q_{1 \rightarrow 1}  & \\ b \rightarrow Q_{2 \rightarrow 2}  &  \end{array} \right. $} \\

\hline

\end{tabular}
\label{table:lessthanhalf}
\end{table*}


Phase $1$ goes on for 
\begin{align}
\left( 1 - q^2 \right)^{-1} m + m^{\frac{2}{3}}
\end{align}
time instants and if at the end of this phase either of the queues $Q_{i \rightarrow i}$ is not empty, we declare error type-I and halt the transmission.

Assuming that the transmission is not halted, upon completion of Phase $1$, we have 
{\small \begin{align}
& \mathbb{E}[N_{1,{\sf C}_1}] = \frac{\Pr\left( \mathrm{Case~1} \right) m}{1 - \sum_{i=9,10,13,16}{\Pr\left( \mathrm{Case~i} \right)}} = (1-q^2)^{-1} p^4 m, \nonumber \\
& \mathbb{E}[N_{1 \rightarrow 2|1}] = \frac{\Pr\left( \mathrm{Case~2} \right) m}{1 - \sum_{i=9,10,13,16}{\Pr\left( \mathrm{Case~i} \right)}}  = (1-q^2)^{-1} p^3q m, \nonumber \\
& \mathbb{E}[N_{1 \rightarrow 1|2}] = \frac{\sum_{j=14,15}{\Pr\left( \mathrm{Case~j} \right)}m}{1 - \sum_{i=9,10,13,16}{\Pr\left( \mathrm{Case~i} \right)}}  = (1-q^2)^{-1} pq^2 m, \nonumber \\
& \mathbb{E}[N_{1,INT}] = \frac{\sum_{j=11,12}{\Pr\left( \mathrm{Case~j} \right)}m}{1 - \sum_{i=9,10,13,16}{\Pr\left( \mathrm{Case~i} \right)}}  = (1-q^2)^{-1} p^2q m,
\end{align}}
similarly, we have
{\small \begin{align}
& \mathbb{E}[N_{2,{\sf C}_1}] = \frac{\Pr\left( \mathrm{Case~1} \right)m}{1 - \sum_{i=9,10,13,16}{\Pr\left( \mathrm{Case~i} \right)}} = (1-q^2)^{-1} p^4 m, \nonumber \\
& \mathbb{E}[N_{2 \rightarrow 1|2}] = \frac{\Pr\left( \mathrm{Case~3} \right)m}{1 - \sum_{i=9,10,13,16}{\Pr\left( \mathrm{Case~i} \right)}}  = (1-q^2)^{-1} p^3q m, \nonumber \\
& \mathbb{E}[N_{2 \rightarrow 2|1}] = \frac{\sum_{j=13,15}{\Pr\left( \mathrm{Case~j} \right)}m}{1 - \sum_{i=9,10,13,16}{\Pr\left( \mathrm{Case~i} \right)}}  = (1-q^2)^{-1} pq^2 m, \nonumber \\
& \mathbb{E}[N_{2,INT}] = \frac{\sum_{j=7,8}{\Pr\left( \mathrm{Case~j} \right)}m}{1 - \sum_{i=9,10,13,16}{\Pr\left( \mathrm{Case~i} \right)}}  = (1-q^2)^{-1} p^2q m,
\end{align}}

If the event $\left[ N \geq \mathbb{E}[N] + m^{\frac{2}{3}} \right]$ for $N = N_{i,{\sf C}_1}, N_{i \rightarrow i|\bar{i}}, N_{i \rightarrow \bar{i}|i}, N_{i,INT}$, $i=1,2$, occurs, we consider it as error type-II and we halt the transmission strategy. At the end of Phase $1$, we add $0$'s (if necessary) in order to make queues $Q_{i,{\sf C}_1}$, $Q_{i \rightarrow j|\bar{j}}$, and $Q_{i,INT}$ of size equal to $\mathbb{E}[N_{i,{\sf C}_1}] + m^{\frac{2}{3}}$, $\mathbb{E}[N_{i \rightarrow j|\bar{j}}] + m^{\frac{2}{3}}$, and $\mathbb{E}[N_{i,INT}] + m^{\frac{2}{3}}$ respectively, $i=1,2$, and $j = i,\bar{i}$. Furthermore, using Chernoff-Hoeffding bound, we can show that the probability of errors of types I and II decreases exponentially with $m$.

\vspace{1mm}

\noindent{\bf Phase 2} [upgrading status of interfering bits in $Q_{1,{\sf C}_j}$]: In this phase, we focus on the bits in $Q_{1,INT}$ and $Q_{2,INT}$. At each time instant, ${\sf Tx}_i$ picks a bit from $Q_{i,INT}$ and sends it. This bit will either stay in $Q_{i,INT}$ or a transition to a new queue will take place. Table~\ref{table:upgradesum} describes what happens to the status of the bits if either of the $16$ possible cases occurs. Due to symmetry, we only describe the transitions for bits in $Q_{1,INT}$. Consider bit ``$a$'' in $Q_{1,INT}$.

\begin{itemize}

\item Cases $1,2,3,4,$ and $5$: The transitions for these cases are consistent with our previous discussions.

\item Cases $9,10,11,12,13,$ and $16$: In these cases, it is easy to see that no change occurs in the status of bit $a$.

\item Case $6$: In this case, bit $a$ is delivered to both receivers and hence, no further transmission is required. Therefore, it joins $Q_{1,F}$.

\item Case $7$: Here with slight abuse of, $Q_{1,{\sf C}_1}$ represents the bits of ${\sf Tx}_1$ that are received at both receivers with interference but not necessarily in Case~$1$, $i=1,2$. For instance if at a given time, ${\sf Tx}_1$ sends a bit from $Q_{1,INT}$ and Case~$7$ occurs, then this bit joins $Q_{1,{\sf C}_1}$ since now both receivers have received this bit with interference. 

\item Case $8$: In this case, bit $a$ is available at ${\sf Rx}_2$ but it is interfered at ${\sf Rx}_1$ by bit $b$. However, in Case $8$ no change occurs for the bits in $Q_{2,INT}$. Therefore, since bit $b$ will be retransmitted until it is provided to ${\sf Rx}_1$, no retransmission is required for bit $a$ and it joins $Q_{1,F}$.

\item Cases $14$ and $15$: If either of these cases occur, bit $a$ becomes available at ${\sf Rx}_2$ and is needed at ${\sf Rx}_1$. Thus, we update the status of such bits to $Q_{1 \rightarrow 1|2}$.

\end{itemize}

\begin{table*}[t]
\caption{Summary of Phase $2$ for the Achievability Scheme of Corner Point $B$. Bit ``$a$'' represents a bit in $Q_{1,INT}$ while bit ``$b$'' represents a bit in $Q_{2,INT}$.}
\centering
\begin{tabular}{| c | c | c | c | c | c |}
\hline
case ID		 & channel realization    & state transition  & case ID		 & channel realization    & state transition \\
					 & at time instant $n$    &                   & 					 & at time instant $n$    &                  \\ [0.5ex]

\hline

\raisebox{18pt}{$1$}    &    \includegraphics[height = 1.4cm]{FiguresPDF/IC-c1.pdf}     &  \raisebox{18pt}{$ \left\{ \begin{array}{ll}  \vspace{1mm} a \rightarrow Q_{1,{\sf C}_1}  & \\ b \rightarrow Q_{2,{\sf C}_1}  &  \end{array} \right. $}  &  \raisebox{18pt}{$9$}    &    \includegraphics[height = 1.4cm]{FiguresPDF/IC-c9.pdf}     &   \raisebox{18pt}{$ \left\{ \begin{array}{ll}  \vspace{1mm} a \rightarrow Q_{1,INT}  & \\ b \rightarrow Q_{2 \rightarrow 1|2}  &  \end{array} \right. $} \\

\hline

\raisebox{18pt}{$2$}    &    \includegraphics[height = 1.4cm]{FiguresPDF/IC-c2.pdf}     &  \raisebox{18pt}{$ \left\{ \begin{array}{ll}  \vspace{1mm} a \rightarrow Q_{1 \rightarrow 2|1}  & \\ b \rightarrow Q_{2,F}  &  \end{array} \right. $} &  \raisebox{18pt}{$10$}    &    \includegraphics[height = 1.4cm]{FiguresPDF/IC-c10.pdf}     &   \raisebox{18pt}{$ \left\{ \begin{array}{ll} \vspace{1mm} a \rightarrow Q_{1,INT}  & \\ b \rightarrow Q_{2,F}  &  \end{array} \right. $} \\

\hline

\raisebox{18pt}{$3$}    &    \includegraphics[height = 1.4cm]{FiguresPDF/IC-c3.pdf}     &  \raisebox{18pt}{$ \left\{ \begin{array}{ll}  \vspace{1mm} a \rightarrow Q_{1,F}  & \\ b \rightarrow Q_{2 \rightarrow 1|2}   &  \end{array} \right. $} &  \raisebox{18pt}{$11$}    &    \includegraphics[height = 1.4cm]{FiguresPDF/IC-c11.pdf}     &   \raisebox{18pt}{$ \left\{ \begin{array}{ll}  \vspace{1mm} a \rightarrow Q_{1,INT}  & \\ b \rightarrow Q_{2,{\sf C}_1} &  \end{array} \right. $} \\

\hline

\raisebox{18pt}{$4$}    &    \includegraphics[height = 1.4cm]{FiguresPDF/IC-c4.pdf}     &   \raisebox{18pt}{$ \left\{ \begin{array}{ll} \vspace{1mm} a \rightarrow Q_{1 \rightarrow 2|1}  & \\ b \rightarrow Q_{2 \rightarrow 1|2}  &  \end{array} \right. $} &  \raisebox{18pt}{$12$}    &    \includegraphics[height = 1.4cm]{FiguresPDF/IC-c12.pdf}     &   \raisebox{18pt}{$ \left\{ \begin{array}{ll}  \vspace{1mm} a \rightarrow Q_{1,INT}  & \\ b \rightarrow Q_{2,F} &  \end{array} \right. $} \\

\hline

\raisebox{18pt}{$5$}    &    \includegraphics[height = 1.4cm]{FiguresPDF/IC-c5.pdf}     &   \raisebox{18pt}{$ \left\{ \begin{array}{ll} \vspace{1mm} a \rightarrow Q_{1 \rightarrow 2|1}  & \\ b \rightarrow Q_{2,INT}  &  \end{array} \right. $} &  \raisebox{18pt}{$13$}    &    \includegraphics[height = 1.4cm]{FiguresPDF/IC-c13.pdf}     &   \raisebox{18pt}{$ \left\{ \begin{array}{ll}  \vspace{1mm} a \rightarrow Q_{1,INT}  & \\ b \rightarrow Q_{2 \rightarrow 2|1}  &  \end{array} \right. $} \\

\hline

\raisebox{18pt}{$6$}    &    \includegraphics[height = 1.4cm]{FiguresPDF/IC-c6.pdf}     &   \raisebox{18pt}{$ \left\{ \begin{array}{ll} \vspace{1mm} a \rightarrow Q_{1,F}  & \\ b \rightarrow Q_{2,INT}  &  \end{array} \right. $} &  \raisebox{18pt}{$14$}    &    \includegraphics[height = 1.4cm]{FiguresPDF/IC-c14.pdf}     &   \raisebox{18pt}{$ \left\{ \begin{array}{ll}  \vspace{1mm} a \rightarrow Q_{1 \rightarrow 1|2}  & \\ b \rightarrow Q_{2,INT}  &  \end{array} \right. $} \\

\hline

\raisebox{18pt}{$7$}    &    \includegraphics[height = 1.4cm]{FiguresPDF/IC-c7.pdf}     &   \raisebox{18pt}{$ \left\{ \begin{array}{ll}  \vspace{1mm} a \rightarrow Q_{1,{\sf C}_1}  & \\ b \rightarrow Q_{2,INT} &  \end{array} \right. $} &  \raisebox{18pt}{$15$}    &    \includegraphics[height = 1.4cm]{FiguresPDF/IC-c15.pdf}     &   \raisebox{18pt}{$ \left\{ \begin{array}{ll} \vspace{1mm} a \rightarrow Q_{1 \rightarrow 1|2}  & \\ b \rightarrow Q_{2 \rightarrow 2|1}  &  \end{array} \right. $} \\

\hline

\raisebox{18pt}{$8$}    &    \includegraphics[height = 1.4cm]{FiguresPDF/IC-c8.pdf}     &   \raisebox{18pt}{$ \left\{ \begin{array}{ll}  \vspace{1mm} a \rightarrow Q_{1,F}  & \\ b \rightarrow Q_{2,INT} &  \end{array} \right. $} &  \raisebox{18pt}{$16$}    &    \includegraphics[height = 1.4cm]{FiguresPDF/IC-c16.pdf}     &   \raisebox{18pt}{$ \left\{ \begin{array}{ll} \vspace{1mm} a \rightarrow Q_{1,INT}  & \\ b \rightarrow Q_{2,INT}  &  \end{array} \right. $} \\

\hline

\end{tabular}
\label{table:upgradesum}
\end{table*}

Phase $2$ goes on for 
{\small \begin{align}
& \left( 1 - \sum_{i = 9,10,11,12,13,16}{\Pr\left( \mathrm{Case}~i \right)} \right)^{-1} \left( 1 - q^2 \right)^{-1} p^2q m + 2 m^{\frac{2}{3}} \nonumber \\
& \quad = \left( 1 - \left[ p^2q + q^2 \right] \right)^{-1} \left( 1 - q^2 \right)^{-1} p^2q m + 2 m^{\frac{2}{3}}
\end{align}}
time instants and if at the end of this phase either of the states $Q_{i,INT}$ is not empty, we declare error type-I and halt the transmission. 

Assuming that the transmission is not halted, upon completion of Phase $2$, the states $Q_{1,INT}$ and $Q_{2,INT}$ are empty and we have
\begin{align}
\mathbb{E}& [N_{i,{\sf C}_1}] = (1-q^2)^{-1} p^4 m \\
&~+ \frac{\sum_{j=1,7}{\Pr\left( \mathrm{Case~j} \right)}}{1 - \sum_{i=9,10,11,12,13,16}{\Pr\left( \mathrm{Case~i} \right)}} ( p^2q m + m^{2/3} ) \nonumber \\ 
& = (1-q^2)^{-1} p^4 m \nonumber \\
&~+ \left( 1 - \left[ p^2q + q^2 \right] \right)^{-1} \left( p^4 + p^2q^2 \right) \left( p^2q m + m^{2/3} \right), \nonumber 
\end{align}
similarly, we have
\begin{align}
\mathbb{E}& [N_{i \rightarrow \bar{i}|i}] = (1-q^2)^{-1} p^3q m 
\end{align}
\begin{align}
&~+ \frac{\sum_{j=2,4,5}{\Pr\left( \mathrm{Case~j} \right)}}{1 - \sum_{i=9,10,11,12,13,16}{\Pr\left( \mathrm{Case~i} \right)}} ( p^2q m + m^{2/3} ) \nonumber \\ 
& = (1-q^2)^{-1} p^3q m \nonumber \\
&~+ \left( 1 - \left[ p^2q + q^2 \right] \right)^{-1} \left( p^2q + p^2q^2 \right) \left( p^2q m + m^{2/3} \right), \nonumber
\end{align}
and
\begin{align}
\mathbb{E}& [N_{i \rightarrow i|\bar{i}}] = (1-q^2)^{-1} pq^2 m \\
&~+ \frac{\sum_{j=14,15}{\Pr\left( \mathrm{Case~j} \right)}}{1 - \sum_{i=9,10,11,12,13,16}{\Pr\left( \mathrm{Case~i} \right)}} ( p^2q m + m^{2/3} ) \nonumber \\ 
& = (1-q^2)^{-1} pq^2 m \nonumber \\
&~+ \left( 1 - \left[ p^2q + q^2 \right] \right)^{-1} pq^2 \left( p^2q m + m^{2/3} \right). \nonumber
\end{align}

If the event $\left[ N \geq \mathbb{E}[N] + m^{\frac{2}{3}} \right]$ for $N = N_{i,{\sf C}_1}, N_{i \rightarrow i|\bar{i}}, N_{i \rightarrow \bar{i}|i}$, $i=1,2$, occurs, we consider it as error type-II and we halt the transmission strategy. At the end of Phase $2$, we add $0$'s (if necessary) in order to make queues $Q_{i,{\sf C}_1}$ and $Q_{i \rightarrow j|\bar{j}}$ of size equal to $\mathbb{E}[N_{i,{\sf C}_1}] + m^{\frac{2}{3}}$, and $\mathbb{E}[N_{i \rightarrow j|\bar{j}}] + m^{\frac{2}{3}}$ respectively, $i=1,2$, and $j = i,\bar{i}$. Using Chernoff-Hoeffding bound, we can show that the probability of errors of types I and II decreases exponentially with $m$.

%

\vspace{1mm}

\noindent{\bf Phase 3} [encoding and retransmission]: In this phase, ${\sf Tx}_i$ communicates the bits in $Q_{i \rightarrow i|\bar{i}}$ to ${\sf Rx}_i$, $i=1,2$. However, it is possible to create XOR of these bits with the bits in $Q_{i \rightarrow \bar{i}|i}$ and the bits in $Q_{i,{\sf C}_1}$ to create bits of common interest. To do so, we first encode the bits in these states using the results of~\cite{Elias}, and then we create the XOR of the encoded bits. 

In other words, given $\epsilon, \delta > 0$, ${\sf Tx}_i$ encodes all the bits in $Q_{i \rightarrow i|\bar{i}}$ at rate $p - \delta$ using random coding scheme of~\cite{Elias}. Similarly, ${\sf Tx}_i$ encodes $q \left( \mathbb{E}[N_{i \rightarrow i|\bar{i}}]  + m^{\frac{2}{3}} \right)$ bits from $Q_{i \rightarrow \bar{i}|i}$ and $Q_{i,{\sf C}_1}$ at rate $pq - \delta$ (if there are less bits in $Q_{i \rightarrow \bar{i}|i}$ and $Q_{i,{\sf C}_1}$, then encode all of the bits in these queues). More precisely, first ${\sf Tx}_i$ encodes bits from $Q_{i \rightarrow \bar{i}|i}$, and if the number of bits in $Q_{i \rightarrow \bar{i}|i}$ is less than $q \left( \mathbb{E}[N_{i \rightarrow i|\bar{i}}]  + m^{\frac{2}{3}} \right)$, then ${\sf Tx}_i$ uses bits in $Q_{i,{\sf C}_1}$. ${\sf Tx}_i$ will then communicate the XOR of these encoded bits. 

Note that since ${\sf Rx}_i$ already knows the bits in $Q_{i \rightarrow \bar{i}|i}$, it can remove the corresponding part of the received signal. Then since the channel from ${\sf Tx}_{\bar{i}}$ to ${\sf Rx}_i$ can be viewed as a binary erasure channel with success probability of $pq$, from~\cite{Elias}, we know that ${\sf Rx}_i$ can decode $Q_{\bar{i},{\sf C}_1}$ with decoding error probability less than or equal to $\epsilon$. Thus, ${\sf Rx}_i$ can decode the transmitted bits from $Q_{\bar{i},{\sf C}_1}$ and use them to decode the bits in $Q_{i,{\sf C}_1}$. Then, ${\sf Rx}_i$ removes the contribution of the bits in $Q_{i,{\sf C}_1}$ the received signal. Finally, since the channel from ${\sf Tx}_i$ to ${\sf Rx}_i$ can be viewed as a binary erasure channel with success probability of $p$, from~\cite{Elias}, we know that ${\sf Rx}_i$ can decode $Q_{i \rightarrow i|\bar{i}}$ with decoding error probability less than or equal to $\epsilon$.

If an error occurs in decoding of the encoded bits, we halt the transmission. Assuming that the transmission is not halted, at the end of Phase~3, $Q_{i \rightarrow i|\bar{i}}$ becomes empty and there are 
\begin{align}
\left( \mathbb{E}[N_{i \rightarrow \bar{i}|i}] + \mathbb{E}[N_{i,{\sf C}_1}] + 2 m^{\frac{2}{3}} - q \left( \mathbb{E}[N_{i \rightarrow i|\bar{i}}]  + m^{\frac{2}{3}} \right) \right)^+
\end{align}
bits left in $Q_{i \rightarrow \bar{i}|i}$ and $Q_{i,{\sf C}_1}$.

If $Q_{i \rightarrow \bar{i}|i}$ and $Q_{i,{\sf C}_1}$ are also empty, the transmission strategy ends here. Otherwise, we merge the remaining bits in $Q_{i \rightarrow \bar{i}|i}$ (if any) with the bits in $Q_{i,{\sf C}_1}$ as {\bf Type-III} (see Section~\ref{sec:opportunities}) and put the XOR of them in $Q_{i \rightarrow \{ 1,2\}}$, $i=1,2$. Finally, we need to describe what happens to the remaining bits in $Q_{i,{\sf C}_1}$. As mentioned before, a bit in $Q_{i,{\sf C}_1}$ can be viewed as a bit of common interest by itself. For the remaining bits in $Q_{1,{\sf C}_1}$, we put the first half in $Q_{1 \rightarrow \{ 1,2 \}}$ (suppose $m$ is picked such that the remaining number of bits is even). Note that if these bits are delivered to ${\sf Rx}_2$, then ${\sf Rx}_2$ can decode the first half of the bits in $Q_{2,{\sf C}_1}$. Therefore, the first half of the bits in $Q_{2,{\sf C}_1}$ join $Q_{2,F}$. 

\vspace{1mm}

\noindent{\bf Phase 4} [communicating bits of common interest]:  During Phase~4, we deliver the bits in $Q_{1 \rightarrow \{ 1,2 \}}$ and $Q_{2 \rightarrow \{ 1,2 \}}$ using the transmission strategy for the two-source multicast problem. More precisely, the bits in $Q_{i \rightarrow \{ 1,2 \}}$ will be considered as the message of transmitter ${\sf Tx}_i$ and they will be encoded as in the achievability scheme of Lemma~\ref{lemma:multicast}, $i=1,2$. Fix $\epsilon, \delta > 0$, from Lemma~\ref{lemma:multicast} we know that the rate tuple $$\left( R_1, R_2 \right) = \frac{1}{2}\left( (1-q^2) - \delta, (1-q^2) - \delta \right)$$ is achievable with decoding error probability less than or equal to $\epsilon$. Thus, using Lemma~\ref{lemma:multicast}, we can communicate the remaining bits at rate $(1-q^2) - \delta$ with decoding error probability less than or equal to $\epsilon$. If an error occurs in decoding of the encoded bits, we halt the transmission.

Using Chernoff-Hoeffding bound and the results of~\cite{Elias}, we can show that the probability that the transmission strategy halts at any point approaches zero for $\epsilon, \delta \rightarrow 0$ and $m \rightarrow \infty$. Moreover, it is easy to verify that for $0 \leq p \leq \left( 3 - \sqrt{5} \right)/2$, at the end of Phase~3, $Q_{i \rightarrow \bar{i}|i}$ and $Q_{i,{\sf C}_1}$ are empty and the transmission strategy ends there. However, for $\left( 3 - \sqrt{5} \right)/2 < p < 0.5$, the transmission strategy continues to Phase~4. Therefore, we can show that if no error occurs, the transmission strategy end in 
\begin{align}
n & = \\
& \max \left\{ m/p, \left( 1 - q^2 \right)^{-1}m + \left( 1 - q^2 \right)^{-2} pm \right\} + O\left( m^{2/3}\right) \nonumber 
\end{align}
time instants. Therefore achieving the rates given in (\ref{eq:cornerlessthanhalf}).


\section{Achievability Proof of Theorem~\ref{THM:IC-DelayedCSIT}: Corner Point $C$}
\label{Appendix:cornerdelayed}


In this appendix, we describe the achievability strategy for corner point $C$ depicted in Fig.~\ref{fig:regionHalf}(b), \emph{i.e.}
\begin{align}
\left( R_1, R_2 \right) = \left( pq(1+q), p \right).
\end{align}
  
%

Let the messages of transmitters one and two be denoted by $\hbox{W}_1 = a_1,a_2,\ldots,a_{m_1}$, and $\hbox{W}_2 = b_1,b_2,\ldots,b_m$, respectively, where data bits $a_i$'s and $b_i$'s are picked uniformly and independently from $\{ 0,1 \}$, and $m_1 = q(1+q)m$ (suppose the parameters are such that $m, m_1 \in \mathbb{Z}$). Note that for $\left( 3 - \sqrt{5} \right)/2 < p \leq 1$, we have $q(1+q) < 1$. We show that it is possible to communicate these bits in
\begin{align}
n = \frac{1}{p} m + O\left( m^{2/3}\right)
\end{align}
time instants with vanishing error probability (as $m \rightarrow \infty$). Therefore, achieving corner point $C$ as $m \rightarrow \infty$. Our transmission strategy consists of five phases as described before.


\begin{table*}[t]
\caption{Summary of Phase $1$ for the Achievability Scheme of Corner Point $C$. Bit ``$a$'' represents a bit in $Q_{1 \rightarrow 1}$ while bit ``$b$'' represents a bit in $Q_{2 \rightarrow 2}$.}
\centering
\begin{tabular}{| c | c | c | c | c | c |}
\hline
case ID		 & channel realization    & state transition  & case ID		 & channel realization    & state transition \\
					 & at time instant $n$    &                   & 					 & at time instant $n$    &                  \\ [0.5ex]

\hline

\raisebox{18pt}{$1$}    &    \includegraphics[height = 1.4cm]{FiguresPDF/IC-c1.pdf}     &  \raisebox{18pt}{$ \left\{ \begin{array}{ll}  \vspace{1mm} a \rightarrow Q_{1,{\sf C}_1}  & \\ b \rightarrow Q_{2,F}  &  \end{array} \right. $}  &  \raisebox{18pt}{$9$}    &    \includegraphics[height = 1.4cm]{FiguresPDF/IC-c9.pdf}     &   \raisebox{18pt}{$ \left\{ \begin{array}{ll}  \vspace{1mm} a \rightarrow Q_{1 \rightarrow 1}  & \\ b \rightarrow Q_{2,F}  &  \end{array} \right. $} \\

\hline

\raisebox{18pt}{$2$}    &    \includegraphics[height = 1.4cm]{FiguresPDF/IC-c2.pdf}     &  \raisebox{18pt}{$ \left\{ \begin{array}{ll}  \vspace{1mm} a \rightarrow Q_{1,OP}  & \\ b \rightarrow Q_{2,F}  &  \end{array} \right. $} &  \raisebox{18pt}{$10$}    &    \includegraphics[height = 1.4cm]{FiguresPDF/IC-c10.pdf}     &   \raisebox{18pt}{$ \left\{ \begin{array}{ll} \vspace{1mm} a \rightarrow Q_{1 \rightarrow 1}  & \\ b \rightarrow Q_{2,F}  &  \end{array} \right. $} \\

\hline

\raisebox{18pt}{$3$}    &    \includegraphics[height = 1.4cm]{FiguresPDF/IC-c3.pdf}     &  \raisebox{18pt}{$ \left\{ \begin{array}{ll}  \vspace{1mm} a \rightarrow Q_{1,F}  & \\ b \rightarrow Q_{2,OP}   &  \end{array} \right. $} &  \raisebox{18pt}{$11$}    &    \includegraphics[height = 1.4cm]{FiguresPDF/IC-c11.pdf}     &   \raisebox{18pt}{$ \left\{ \begin{array}{ll}  \vspace{1mm} a \rightarrow Q_{1,INT}  & \\ b \rightarrow Q_{2,F} &  \end{array} \right. $} \\

\hline

\raisebox{18pt}{$4$}    &    \includegraphics[height = 1.4cm]{FiguresPDF/IC-c4.pdf}     &   \raisebox{18pt}{$ \left\{ \begin{array}{ll} \vspace{1mm} a \rightarrow Q_{1,F}  & \\ b \rightarrow Q_{2,F}  &  \end{array} \right. $} &  \raisebox{18pt}{$12$}    &    \includegraphics[height = 1.4cm]{FiguresPDF/IC-c12.pdf}     &   \raisebox{18pt}{$ \left\{ \begin{array}{ll}  \vspace{1mm} a \rightarrow Q_{1,INT}  & \\ b \rightarrow Q_{2,F} &  \end{array} \right. $} \\

\hline

\raisebox{18pt}{$5$}    &    \includegraphics[height = 1.4cm]{FiguresPDF/IC-c5.pdf}     &   \raisebox{18pt}{$ \left\{ \begin{array}{ll} \vspace{1mm} a \rightarrow Q_{1,F}  & \\ b \rightarrow Q_{2 \rightarrow 2}  &  \end{array} \right. $} &  \raisebox{18pt}{$13$}    &    \includegraphics[height = 1.4cm]{FiguresPDF/IC-c13.pdf}     &   \raisebox{18pt}{$ \left\{ \begin{array}{ll}  \vspace{1mm} a \rightarrow Q_{1 \rightarrow 1}  & \\ b \rightarrow Q_{2 \rightarrow 2|1}  &  \end{array} \right. $} \\

\hline

\raisebox{18pt}{$6$}    &    \includegraphics[height = 1.4cm]{FiguresPDF/IC-c6.pdf}     &   \raisebox{18pt}{$ \left\{ \begin{array}{ll} \vspace{1mm} a \rightarrow Q_{1,F}  & \\ b \rightarrow Q_{2 \rightarrow 2}  &  \end{array} \right. $} &  \raisebox{18pt}{$14$}    &    \includegraphics[height = 1.4cm]{FiguresPDF/IC-c14.pdf}     &   \raisebox{18pt}{$ \left\{ \begin{array}{ll}  \vspace{1mm} a \rightarrow Q_{1 \rightarrow 1|2}  & \\ b \rightarrow Q_{2 \rightarrow 2}  &  \end{array} \right. $} \\

\hline

\raisebox{18pt}{$7$}    &    \includegraphics[height = 1.4cm]{FiguresPDF/IC-c7.pdf}     &   \raisebox{18pt}{$ \left\{ \begin{array}{ll}  \vspace{1mm} a \rightarrow Q_{1,F}  & \\ b \rightarrow Q_{2,INT} &  \end{array} \right. $} &  \raisebox{18pt}{$15$}    &    \includegraphics[height = 1.4cm]{FiguresPDF/IC-c15.pdf}     &   \raisebox{18pt}{$ \left\{ \begin{array}{ll} \vspace{1mm} a \rightarrow Q_{1 \rightarrow 1|2}  & \\ b \rightarrow Q_{2 \rightarrow 2|1}  &  \end{array} \right. $} \\

\hline

\raisebox{18pt}{$8$}    &    \includegraphics[height = 1.4cm]{FiguresPDF/IC-c8.pdf}     &   \raisebox{18pt}{$ \left\{ \begin{array}{ll}  \vspace{1mm} a \rightarrow Q_{1,F}  & \\ b \rightarrow Q_{2,INT} &  \end{array} \right. $} &  \raisebox{18pt}{$16$}    &    \includegraphics[height = 1.4cm]{FiguresPDF/IC-c16.pdf}     &   \raisebox{18pt}{$ \left\{ \begin{array}{ll} \vspace{1mm} a \rightarrow Q_{1 \rightarrow 1}  & \\ b \rightarrow Q_{2 \rightarrow 2}  &  \end{array} \right. $} \\

\hline

\end{tabular}
\label{table:corner}
\end{table*}


\noindent {\bf Phase 1} [uncategorized transmission]: This phase is similar to Phase 1 of the achievability strategy for the optimal sum-rate point $A$. The main difference is due to the fact that the transmitters start with unequal number of bits. At the beginning of the communication block, we assume that the bits $a_1, a_2, \ldots, a_{m_1}$ at ${\sf Tx}_1$ and the bits $b_1, b_2, \ldots, b_{m_1}$ at ${\sf Tx}_2$ are in queues $Q_{1 \rightarrow 1}$ and $Q_{2 \rightarrow 2}$ respectively. 

\begin{remark}
Note that ${\sf Tx}_2$ has $m$ initial bits, however, only $m_1$ of them are in $Q_{2 \rightarrow 2}$ at the beginning of the communication block.
\end{remark}

At each time instant $t$, ${\sf Tx}_i$ sends out a bit from $Q_{i \rightarrow i}$, and this bit will either stay in the initial queues or a transition will take place. Based on the channel realizations, a total of $16$ possible configurations may occur at any time instant. Table~\ref{table:corner} summarizes the transition from the initial queue for each channel realization.

In comparison to the achievability strategy of the sum-rate point $A$, we have new queues for the bits: 
\begin{enumerate}
\item $Q_{i,OP}$ denotes the bits that have caused interference at the unintended receiver and this interference has to get resolved.

\item $Q_{1,INT}$ denotes an intermediate queue of the bits at ${\sf Tx}_1$ that were transmitted when channel realizations $11$ or $12$ occurred. 

\item $Q_{i,INT}$ denotes an intermediate queue of the bits at ${\sf Tx}_2$ that were transmitted when channel realizations $7$ or $8$ occurred. 
\end{enumerate}

Phase $1$ goes on for 
\begin{align}
\left( 1/p - 1 \right) m + m^{\frac{2}{3}}
\end{align} 
time instants and if at the end of this phase, either of queues $Q_{i \rightarrow i}$ is not empty, we declare error type-I and halt the transmission.

Assuming that the transmission is not halted, let random variable $N_{1,{\sf C}_1}$, $N_{i,OP}$, $N_{i \rightarrow i|\bar{i}}$, and $N_{i,INT}$ denote the number of bits in $Q_{1,{\sf C}_1}$, $Q_{i,OP}$, $Q_{i \rightarrow i|\bar{i}}$, and $Q_{i,INT}$ respectively $i=1,2$. The transmission strategy will be halted and an error (that we refer to as error type-II) will occur, if any of the following events happens.
\begin{align}
\label{eq:errortypeIcornerC}
& N_{1,{\sf C}_1} > \mathbb{E}[N_{1,{\sf C}_1}] + m^{\frac{2}{3}} \overset{\triangle}= n_{1,{\sf C}_1}; \nonumber \\
& N_{i,OP} > \mathbb{E}[N_{i,OP}] + m^{\frac{2}{3}} \overset{\triangle}= n_{i,OP}, \quad i=1,2; \nonumber \\
& N_{i \rightarrow i|\bar{i}} > \mathbb{E}[N_{i \rightarrow i|\bar{i}}] + m^{\frac{2}{3}} \overset{\triangle}= n_{i \rightarrow i|\bar{i}}, \quad i=1,2; \nonumber \\
& N_{i,INT} > \mathbb{E}[N_{i,INT}] + m^{\frac{2}{3}} \overset{\triangle}= n_{i,INT}, \quad i=1,2.
\end{align}

%
%
%
%
%

From basic probability, and we have 
\begin{align}
\label{eq:expectedvalues1}
\mathbb{E}[N_{1,{\sf C}_1}] & = \frac{\Pr\left( \mathrm{Case~1} \right)}{1 - \sum_{i=9,10,13,16}{\Pr\left( \mathrm{Case~i} \right)}} m_1 \nonumber \\
& =  (1-q^2)^{-1} p^4 m_1 = p^3q m, \nonumber \\
\mathbb{E}[N_{1,OP}] & = \frac{\Pr\left( \mathrm{Case~2} \right)}{1 - \sum_{i=9,10,13,16}{\Pr\left( \mathrm{Case~i} \right)}} m_1 \nonumber \\
& = (1-q^2)^{-1} p^3q m_1 = p^2q^2 m, \nonumber \\
\mathbb{E}[N_{2,OP}] & = \frac{\Pr\left( \mathrm{Case~3} \right)}{1 - \sum_{i=5,6,14,16}{\Pr\left( \mathrm{Case~i} \right)}} m_1 \nonumber \\ 
& = (1-q^2)^{-1} p^3q m_1 = p^2q^2 m, \nonumber \\
\mathbb{E}[N_{i \rightarrow i|\bar{i}}] & = \frac{\sum_{j=14,15}{\Pr\left( \mathrm{Case~j} \right)}}{1 - \sum_{i=9,10,13,16}{\Pr\left( \mathrm{Case~i} \right)}} m_1 \nonumber \\
& = (1-q^2)^{-1} \left( pq^3 + p^2q^2 \right) m_1 = q^3 m,~ i = 1,2, \nonumber \\
\mathbb{E}[N_{i,INT}] & = \frac{\sum_{j=11,12}{\Pr\left( \mathrm{Case~j} \right)}}{1 - \sum_{i=9,10,13,16}{\Pr\left( \mathrm{Case~i} \right)}} m_1 \nonumber \\
& = (1-q^2)^{-1} \left( p^3q + p^2q^2 \right) m_1 = pq^2 m,~ i = 1,2.
\end{align}

Furthermore, using Chernoff-Hoeffding bound, we can show that the probability of errors of types I and II decreases exponentially with $m$.

At the end of Phase $1$, we add $0$'s (if necessary) in order to make queues $Q_{1,{\sf C}_1}$, $Q_{i,OP}$, $Q_{i \rightarrow i|\bar{i}}$, and $Q_{i,INT}$ of size equal to $n_{1,{\sf C}_1}$, $n_{i,OP}$, $n_{i \rightarrow i|\bar{i}}$, and $n_{i,INT}$ respectively as defined in (\ref{eq:errortypeIcornerC}), $i=1,2$. For the rest of this appendix, we assume that Phase 1 is completed and no error has occurred.


\noindent{\bf Phase 2} [updating status of the bits in $Q_{i,INT}$]: In this phase, we focus on the bits in $Q_{i,INT}$, $i=1,2$. The ultimate goal is to deliver the bits in $Q_{i,INT}$ to both receivers. At each time instant, ${\sf Tx}_i$ picks a bit from $Q_{i,INT}$ and sends it. This bit will either stay in $Q_{i,INT}$ or a transition to a new queue will take place. Table~\ref{table:upgrade} describes what happens to the status of the bits if either of the $16$ cases occurs. 

\begin{table*}[t]
\caption{Summary of Phase $2$ for the Achievability Scheme of Corner Point $B$. Bit ``$a$'' represents a bit in $Q_{1,INT}$ while bit ``$b$'' represents a bit in $Q_{2,INT}$.}
\centering
\begin{tabular}{| c | c | c | c | c | c |}
\hline
case ID		 & channel realization    & state transition  & case ID		 & channel realization    & state transition \\
					 & at time instant $n$    &                   & 					 & at time instant $n$    &                  \\ [0.5ex]

\hline

\raisebox{18pt}{$1$}    &    \includegraphics[height = 1.4cm]{FiguresPDF/IC-c1.pdf}     &  \raisebox{18pt}{$ \left\{ \begin{array}{ll}  \vspace{1mm} a \rightarrow Q_{1,{\sf C}_1}  & \\ b \rightarrow Q_{2,F}  &  \end{array} \right. $}  &  \raisebox{18pt}{$9$}    &    \includegraphics[height = 1.4cm]{FiguresPDF/IC-c9.pdf}     &   \raisebox{18pt}{$ \left\{ \begin{array}{ll}  \vspace{1mm} a \rightarrow Q_{1,INT}  & \\ b \rightarrow Q_{2,OP}  &  \end{array} \right. $} \\

\hline

\raisebox{18pt}{$2$}    &    \includegraphics[height = 1.4cm]{FiguresPDF/IC-c2.pdf}     &  \raisebox{18pt}{$ \left\{ \begin{array}{ll}  \vspace{1mm} a \rightarrow Q_{1,OP}  & \\ b \rightarrow Q_{2,OP}  &  \end{array} \right. $} &  \raisebox{18pt}{$10$}    &    \includegraphics[height = 1.4cm]{FiguresPDF/IC-c10.pdf}     &   \raisebox{18pt}{$ \left\{ \begin{array}{ll} \vspace{1mm} a \rightarrow Q_{1,INT}  & \\ b \rightarrow Q_{2,F}  &  \end{array} \right. $} \\

\hline

\raisebox{18pt}{$3$}    &    \includegraphics[height = 1.4cm]{FiguresPDF/IC-c3.pdf}     &  \raisebox{18pt}{$ \left\{ \begin{array}{ll}  \vspace{1mm} a \rightarrow Q_{1,{\sf C}_1}  & \\ b \rightarrow Q_{2,OP}   &  \end{array} \right. $} &  \raisebox{18pt}{$11$}    &    \includegraphics[height = 1.4cm]{FiguresPDF/IC-c11.pdf}     &   \raisebox{18pt}{$ \left\{ \begin{array}{ll}  \vspace{1mm} a \rightarrow Q_{1,INT}  & \\ b \rightarrow Q_{2,OP} &  \end{array} \right. $} \\

\hline

\raisebox{18pt}{$4$}    &    \includegraphics[height = 1.4cm]{FiguresPDF/IC-c4.pdf}     &   \raisebox{18pt}{$ \left\{ \begin{array}{ll} \vspace{1mm} a \rightarrow Q_{1,OP}  & \\ b \rightarrow Q_{2,OP}  &  \end{array} \right. $} &  \raisebox{18pt}{$12$}    &    \includegraphics[height = 1.4cm]{FiguresPDF/IC-c12.pdf}     &   \raisebox{18pt}{$ \left\{ \begin{array}{ll}  \vspace{1mm} a \rightarrow Q_{1,INT}  & \\ b \rightarrow Q_{2,F} &  \end{array} \right. $} \\

\hline

\raisebox{18pt}{$5$}    &    \includegraphics[height = 1.4cm]{FiguresPDF/IC-c5.pdf}     &   \raisebox{18pt}{$ \left\{ \begin{array}{ll} \vspace{1mm} a \rightarrow Q_{1,OP}  & \\ b \rightarrow Q_{2,INT}  &  \end{array} \right. $} &  \raisebox{18pt}{$13$}    &    \includegraphics[height = 1.4cm]{FiguresPDF/IC-c13.pdf}     &   \raisebox{18pt}{$ \left\{ \begin{array}{ll}  \vspace{1mm} a \rightarrow Q_{1,INT}  & \\ b \rightarrow Q_{2 \rightarrow 2|1}  &  \end{array} \right. $} \\

\hline

\raisebox{18pt}{$6$}    &    \includegraphics[height = 1.4cm]{FiguresPDF/IC-c6.pdf}     &   \raisebox{18pt}{$ \left\{ \begin{array}{ll} \vspace{1mm} a \rightarrow Q_{1,F}  & \\ b \rightarrow Q_{2,INT}  &  \end{array} \right. $} &  \raisebox{18pt}{$14$}    &    \includegraphics[height = 1.4cm]{FiguresPDF/IC-c14.pdf}     &   \raisebox{18pt}{$ \left\{ \begin{array}{ll}  \vspace{1mm} a \rightarrow Q_{1 \rightarrow 1|2}  & \\ b \rightarrow Q_{2,INT}  &  \end{array} \right. $} \\

\hline

\raisebox{18pt}{$7$}    &    \includegraphics[height = 1.4cm]{FiguresPDF/IC-c7.pdf}     &   \raisebox{18pt}{$ \left\{ \begin{array}{ll}  \vspace{1mm} a \rightarrow Q_{1,OP}  & \\ b \rightarrow Q_{2,INT} &  \end{array} \right. $} &  \raisebox{18pt}{$15$}    &    \includegraphics[height = 1.4cm]{FiguresPDF/IC-c15.pdf}     &   \raisebox{18pt}{$ \left\{ \begin{array}{ll} \vspace{1mm} a \rightarrow Q_{1 \rightarrow 1|2}  & \\ b \rightarrow Q_{2 \rightarrow 2|1}  &  \end{array} \right. $} \\

\hline

\raisebox{18pt}{$8$}    &    \includegraphics[height = 1.4cm]{FiguresPDF/IC-c8.pdf}     &   \raisebox{18pt}{$ \left\{ \begin{array}{ll}  \vspace{1mm} a \rightarrow Q_{1,F}  & \\ b \rightarrow Q_{2,INT} &  \end{array} \right. $} &  \raisebox{18pt}{$16$}    &    \includegraphics[height = 1.4cm]{FiguresPDF/IC-c16.pdf}     &   \raisebox{18pt}{$ \left\{ \begin{array}{ll} \vspace{1mm} a \rightarrow Q_{1,INT}  & \\ b \rightarrow Q_{2,INT}  &  \end{array} \right. $} \\

\hline

\end{tabular}
\label{table:upgrade}
\end{table*}

Here, we describe what happens to the status of a bit in $Q_{1,INT}$ if either of the $16$ channel realizations occur. The description for a bit in $Q_{2,INT}$ is very similar and is summarized in Table~\ref{table:upgrade}. Consider a bit ``$a$'' in $Q_{1,INT}$. At each time instant, $16$ possible cases may occur:
\begin{itemize}

\item Cases 9,10,11,12,13, and 16: In these cases, it is easy to see that no change occurs in the status of bit $a$.

\item Case 6: In this case, bit $a$ is delivered to both receivers and hence, no further transmission is required. Therefore, it joins $Q_{1,F}$.

\item Case 8: In this case, bit $a$ is available at ${\sf Rx}_2$ but it is interfered at ${\sf Rx}_1$ by bit $b$. However, in Case $8$ no change occurs for the bits in $Q_{2,INT}$. Therefore, since bit $b$ will be retransmitted until it is provided to ${\sf Rx}_1$, no retransmission is required for bit $a$ and it joins $Q_{1,F}$.

\item Cases 14 and 15: If either of these cases occur, bit $a$ becomes available at ${\sf Rx}_2$ and is needed at ${\sf Rx}_1$. Thus, we update the status of such bits to $Q_{1 \rightarrow 1|2}$.

\item Cases 1,2,3,4,5, and 7: If either of these cases occur, we upgrade the status of bit $a$ to the opportunistic state $Q_{1,OP}$, meaning that from now on bit $a$ has to be provided to either ${\sf Rx}_2$ or both receivers such that it causes no further interference. For instance, if Case $2$ occurs, providing bit $a$ to both receivers is suffiecient to decode the simultaneously transmitted bits.

\end{itemize}

%
%
%
%
%
%
%

Phase $2$ goes on for 
\begin{align}
\left( 1 - \left[ p^3q + 2 pq^2 + q^4 \right] \right)^{-1} p q^2 m + 2 m^{\frac{2}{3}}
\end{align}
time instants, and if at the end of this phase either of the states $Q_{i,INT}$ is not empty, we declare error type-I and halt the transmission. 

Assuming that the transmission is not halted, since transition of a bit to this state is distributed as independent Bernoulli RV, upon completion of Phase $2$, we have 
\begin{align}
& \mathbb{E}[N_{1,{\sf C}_1}] = p^3q m \nonumber \\
&~+ \frac{\sum_{j=1,3}{\Pr\left( \mathrm{Case~j} \right)}}{1 - \sum_{i=9,10,11,12,13,16}{\Pr\left( \mathrm{Case~i} \right)}} ( pq^2 m + m^{2/3} ) \nonumber \\ 
&~= p^3q m + \left( 1 - \left[ p^3q + 2 pq^2 + q^4 \right] \right)^{-1} p^3 ( pq^2 m + m^{2/3} ), \nonumber \\ 
& \mathbb{E}[N_{1,OP}] = p^2q^2 m \nonumber \\
&~+ \left( 1 - \left[ p^3q + 2 pq^2 + q^4 \right] \right)^{-1} pq ( pq^2 m + m^{2/3} ), \nonumber \\
&\mathbb{E}[N_{2,OP}] = p^2q^2 m \nonumber \\
&~+ \left( 1 - \left[ p^3q + 2 pq^2 + q^4 \right] \right)^{-1} pq (1+p^2) ( pq^2 m + m^{2/3} ), \nonumber 
\end{align}
\begin{align}
\label{eq:expectedvalues2}
&\mathbb{E}[N_{i \rightarrow i|\bar{i}}] = q^3 m \\
&~+ \left( 1 - \left[ p^3q + 2 pq^2 + q^4 \right] \right)^{-1} pq^2 \left( pq^2 + m^{2/3} \right),~ i = 1,2. \nonumber
\end{align}

The transmission strategy will halt and an error (that we refer to as error type-II) will occur, if any of the following events happens.
\begin{align}
\label{eq:errortypeIcornerC2}
& N_{1,{\sf C}_1} > \mathbb{E}[N_{1,{\sf C}_1}] + m^{\frac{2}{3}} \overset{\triangle}= n_{1,{\sf C}_1}; \nonumber \\
& N_{i,OP} > \mathbb{E}[N_{i,OP}] + m^{\frac{2}{3}} \overset{\triangle}= n_{i,OP}, \quad i=1,2; \nonumber \\
& N_{i \rightarrow i|\bar{i}} > \mathbb{E}[N_{i \rightarrow i|\bar{i}}] + m^{\frac{2}{3}} \overset{\triangle}= n_{i \rightarrow i|\bar{i}}, \quad i=1,2.
\end{align}

%
%
%
%

Using Chernoff-Hoeffding bound, we can show that the probability of errors of types I and II decreases exponentially with $m$.

Again, at the end of Phase $2$, we add $0$'s (if necessary) in order to make queues $Q_{1,{\sf C}_1}$, $Q_{i,OP}$, and $Q_{i \rightarrow i|\bar{i}}$ of size equal to $n_{1,{\sf C}_1}$, $n_{i,OP}$, and $n_{i \rightarrow i|\bar{i}}$ respectively as defined in (\ref{eq:errortypeIcornerC2}), $i=1,2$. For the rest of this appendix, we assume that Phase 2 is completed and no error has occurred.

Note that ${\sf Tx}_2$ initially had $m$ fresh data bits but during Phase $1$ it only communicated $m_1$ of them. The rest of those bits will be transmitted during Phase $3$ as described below.


\noindent{\bf Phase 3} [uncategorized transmission vs interference management]: During Phase $3$, ${\sf Tx}_1$ (the secondary user) communicates $\frac{q}{1+q} (p-q^2)m$ bits from states $Q_{1,{\sf C}_1}$ and $Q_{1,OP}$ at a rate such that both receivers can decode them at the end of Phase $3$, regardless of the transmitted signal of ${\sf Tx}_2$. In fact, at ${\sf Rx}_1$, we have 
\begin{align}
\Pr \left[ G_{11}[t] = 1, G_{21}[t] = 0 \right] = pq,
\end{align}
and at ${\sf Rx}_2$, we have
\begin{align}
\Pr \left[ G_{22}[t] = 0, G_{12}[t] = 1 \right] = pq.
\end{align}

Hence, using the results of~\cite{Elias}, we know that given any $\epsilon, \delta > 0$, ${\sf Tx}_1$ can use a random code of rate $pq - \delta$ to encode $\frac{q}{1+q} (p-q^2)m$ bits from states $Q_{1,{\sf C}_1}$ and $Q_{1,OP}$, and transmits them such that both receivers can decode the transmitted message with error probability less than or equal to  $\epsilon$ for sufficiently large block length (${\sf Tx}_1$ picks bits from $Q_{1,{\sf C}_1}$ and if this state becomes empty it starts picking from the bits in $Q_{1,OP}$). Since ${\sf Rx}_2$ can decode the transmitted signal of ${\sf Tx}_1$ in this phase, we can assume that the encoded bits of ${\sf Tx}_1$, do not create any new interference during Phase 3.

We now describe what ${\sf Tx}_2$ does during Phase 3. At the beginning of Phase $3$, we assume that the bits $b_{m_1+1}, b_{m_1+2}, \ldots, b_m$ at ${\sf Tx}_2$ are in state $Q_{2 \rightarrow 2}$. At each time instant, ${\sf Tx}_2$ picks a bit from $Q_{2 \rightarrow 2}$ and sends it. This bit will either stay in $Q_{2 \rightarrow 2}$ or a transition occurs as described below.
\begin{itemize}

\item Cases $1,2,3,4,9,10,11,$ and $12$: In these cases the direct link from ${\sf Tx}_2$ to ${\sf Rx}_2$ is on. Therefore, since at the end of block (assuming large enough block length), we can decode and remove the transmitted signal of ${\sf Tx}_1$, the transmitted bit of ${\sf Tx}_2$ leaves $Q_{2 \rightarrow 2}$ and joins $Q_{2,F}$.

\item Cases $7,8,13,$ and $15$: In these cases (assuming the transmitted signal of ${\sf Tx}_1$ can be removed), the transmitted bit of ${\sf Tx}_2$ becomes available at ${\sf Rx}_1$ while it is required at ${\sf Rx}_2$. Thus, the transmitted bit of ${\sf Tx}_2$ leaves $Q_{2 \rightarrow 2}$ and joins $Q_{2 \rightarrow 2|1}$.

\item Cases $5,6,14,$ and $16$: In these cases, no change hanppens in the status of the transmitted bit from ${\sf Tx}_2$.

\end{itemize}

Phase $3$ goes on for 
\begin{align}
\frac{(p-q^2)}{(1-q^2)} m + m^{\frac{2}{3}}
\end{align}
time instants and if at the end of this phase there is a bit left in $Q_{2 \rightarrow 2}$ or an error occurs in decoding the transmitted signal of ${\sf Tx}_1$, we declare error type-I and halt the transmission. Note that during Phase $3$, the number of bits in $Q_{1 \rightarrow 1|2}$ and $Q_{2,OP}$ remain unchanged.

Assuming that the transmission is not halted, since transition of a bit to this state is distributed as independent Bernoulli RV, upon completion of Phase $2$, we have 
\begin{align}
\mathbb{E}[N_{1,{\sf C}_1}] & = \left[ p^3q m + \left( 1 - \left[ p^3q + 2 pq^2 + q^4 \right] \right)^{-1} \right. \nonumber \\
& \left. \times p^3 ( pq^2 m + m^{2/3} ) - \frac{q}{1+q} (p-q^2)m \right]^+, \nonumber \\
\mathbb{E}[N_{1,OP}] & = p^2q^2 m + \left( 1 - \left[ p^3q + 2 pq^2 + q^4 \right] \right)^{-1} \\
& \times pq ( pq^2 m + m^{2/3} ) - \left[ \frac{q}{1+q} (p-q^2)m - p^3q m \right. \nonumber \\
& \left. - \left( 1 - \left[ p^3q + 2 pq^2 + q^4 \right] \right)^{-1} p^3 ( pq^2 m + m^{2/3} ) \right]^+\hspace{-1mm}. \nonumber 
\end{align}


For $\left( 3 - \sqrt{5} \right)/2 \leq p \leq 1$, $\mathbb{E}[N_{1,OP}]$ is non-negative. The transmission strategy will halt and an error (that we refer to as error type-II) will occur, if any of the following events happens.
\begin{enumerate}

\item $N_{1,{\sf C}_1} > \mathbb{E}[N_{1,{\sf C}_1}] + m^{\frac{2}{3}}$;

\item $N_{1,OP} > \mathbb{E}[N_{1,OP}] + m^{\frac{2}{3}}$;

\item $N_{2 \rightarrow 2|1} > \mathbb{E}[N_{2 \rightarrow 2|1}] + m^{\frac{2}{3}}$.

\end{enumerate}

Using Chernoff-Hoeffding bound, we can show that the probability of errors of types I and II decreases exponentially with $m$.


\noindent{\bf Phase 4} [delivering interference-free bits and interference management]: In Phase $4$, ${\sf Tx}_1$ will communicate all the bits in $Q_{1 \rightarrow 1|2}$. However, it is possible to create XOR of these bits with bits in $Q_{1,OP}$ in order to create bits of common interest. To do so, we first encode the bits in these states using the results of~\cite{Elias}, and then we create the XOR of the encoded bits. On the other hand, ${\sf Tx}_2$ will do the same to part of the bits in $Q_{2 \rightarrow 2|1}$ and $Q_{2,OP}$.

More precisely, for any $\epsilon, \delta > 0$, ${\sf Tx}_1$ encodes all the bits in $Q_{1 \rightarrow 1|2}$ at rate $p - \delta$ using random coding scheme of~\cite{Elias}. Similarly, ${\sf Tx}_1$ encodes 
\begin{align}
\label{eq:betadefepdel}
q^4 +  \left( 1 - \left[ p^3q + 2 pq^2 + q^4 \right] \right)^{-1} p^2q^5
\end{align}
bits from $Q_{1,OP}$ at rate $pq - \delta$. Then ${\sf Tx}_1$ will communicate the XOR of these encoded bits. 

During Phase 3, ${\sf Tx}_2$ encodes same number of the bits as in $Q_{1 \rightarrow 1|2}$ from $Q_{2 \rightarrow 2|1}$ at rate $p - \delta$ and $$q^4 +  \left( 1 - \left[ p^3q + 2 pq^2 + q^4 \right] \right)^{-1} p^2q^5$$ bits from $Q_{2,OP}$ at rate $pq - \delta$ using random coding scheme of~\cite{Elias}. Then ${\sf Tx}_2$ will communicate the XOR of these encoded bits.

Since ${\sf Rx}_2$ already has access to the bits in $Q_{1 \rightarrow 1|2}$ and $Q_{2,OP}$, it can remove their contribution from the received signals. Then for sufficiently large block length, ${\sf Rx}_2$ can decode the transmitted bits from $Q_{1,OP}$ with decoding error probability less than or equal to $\epsilon$. After decoding and removing this part, ${\sf Rx}_2$ can decode the encoded bits from $Q_{2 \rightarrow 2|1}$ with decoding error probability less than or equal to $\epsilon$.

On the other hand, since ${\sf Rx}_1$ already has access to the bits in $Q_{2 \rightarrow 2|1}$ and $Q_{1,OP}$, it can remove their contribution from the received signals. Then for sufficiently large block length, ${\sf Rx}_1$ can decode the transmitted bits from $Q_{2,OP}$ with decoding error probability less than or equal to $\epsilon$. Finally, after decoding and removing this part, ${\sf Rx}_1$ can decode the encoded bits from $Q_{1 \rightarrow 1|2}$ with decoding error probability less than or equal to $\epsilon$.

Phase $4$ goes on for 
\begin{align}
\frac{\left[ q^3 m + \left( 1 - \left[ p^3q + 2 pq^2 + q^4 \right] \right)^{-1} pq^2 \left( pq^2 + m^{2/3} \right) \right]}{\left( p - \delta \right)}
\end{align}
time instants. If an error occurs in decoding any of the encoded signals in Phase $4$, we consider it as error and we halt the transmission strategy.

At the end of Phase $4$, $Q_{1 \rightarrow 1|2}$ becomes empty. Define
{\small \begin{align}  
& \beta = \\
& \frac{(pq - \delta) \left[ q^3 m + \left( 1 - \left[ p^3q + 2 pq^2 + q^4 \right] \right)^{-1} pq^2 \left( pq^2 + m^{2/3} \right) \right]}{\left( p - \delta \right)}, \nonumber 
\end{align}}
note that as $\epsilon, \delta \rightarrow 0$, $\beta$ agrees with the expression given in (\ref{eq:betadefepdel}).

Upon completion of Phase $4$, we have 
\begin{align}
& \mathbb{E}[N_{1,OP}] = p^2q^2 m \nonumber \\
&~+ \left( 1 - \left[ p^3q + 2 pq^2 + q^4 \right] \right)^{-1} pq ( pq^2 m + m^{2/3} ) \nonumber \\
&~- \left[ \frac{q}{1+q} (p-q^2)m - p^3q m - \left( 1 - \left[ p^3q + 2 pq^2 + q^4 \right] \right)^{-1} \right. \nonumber \\
&~\left. \times p^3 ( pq^2 m + m^{2/3} ) \right]^+ - \frac{q}{1+q} (p-q^2)m - \beta, \nonumber \\
& \mathbb{E}[N_{2,OP}] = p^2q^2 m + \left( 1 - \left[ p^3q + 2 pq^2 + q^4 \right] \right)^{-1} \nonumber \\
&~ \times pq (1+p^2) ( pq^2 m + m^{2/3} ) - \beta, \nonumber
\end{align}
\begin{align}
\hspace{-25mm} \mathbb{E}[N_{2 \rightarrow 2|1}] = \frac{pq}{(1-q^2)} \left( p - q^2 \right)m.
\end{align}


The transmission strategy will halt and an error of type-II will occur, if any of the following events happens.
\begin{enumerate}
\item $N_{i,OP} > \mathbb{E}[N_{i,OP}] + m^{\frac{2}{3}}$, $i=1,2$;

\item $N_{2 \rightarrow 2|1} > \mathbb{E}[N_{2 \rightarrow 2|1}] + m^{\frac{2}{3}}$.

\end{enumerate}

Using Chernoff-Hoeffding bound, we can show that the probability of errors of type II decreases exponentially with $m$. Furthermore, the probability that an error occurs in decoding any of the encoded signals in Phase $4$ can be made arbitrary small as $m \rightarrow \infty$.


\noindent{\bf Phase 5} [delivering interference-free bits and interference management]: In Phase 5, the transmitters will communicate the remaining bits in $Q_{1,OP}$, $Q_{1,{\sf C}_1}$, $Q_{2 \rightarrow 2|1}$, and $Q_{2,OP}$. ${\sf Tx}_1$ will communicate all the bits in $Q_{1,OP}$ and $Q_{1,{\sf C}_1}$ such that for sufficiently large block length, both receivers can decode them with arbitrary small error. On the other hand, ${\sf Tx}_2$ will communicate the bits in $Q_{2 \rightarrow 2|1}$ and $Q_{2,OP}$ similar to Phase 4 with one main difference. Since both receivers can completely remove the contribution of ${\sf Tx}_1$ at the end of the block, ${\sf Tx}_2$ can send the bits in $Q_{2,OP}$ at a higher rate of $p$ as opposed to $pq$ during Phase 4.

More precisely, for any $\epsilon, \delta > 0$, ${\sf Tx}_1$ using random coding scheme of~\cite{Elias}, encodes all the bits in $Q_{1,OP}$ and $Q_{1,{\sf C}_1}$ at rate $pq - \delta$ and communicates them. On the other hand, ${\sf Tx}_2$ using random coding, encodes all the bits in $Q_{2 \rightarrow 2|1}$ and all bits in $Q_{2,OP}$ at rate $p - \delta$. Then, ${\sf Tx}_2$ communicates the XOR of its encoded bits. 

Since ${\sf Rx}_1$ already has access to the bits in $Q_{2 \rightarrow 2|1}$ and $Q_{1,OP}$, it can remove the corresponding parts of the transmitted signals. Then for sufficiently large block length, ${\sf Rx}_1$ can decode the transmitted bits from $Q_{1,{\sf C}_1}$ and $Q_{2,OP}$ with decoding error probability less than or equal to $\epsilon$.

Finally, since ${\sf Rx}_2$ already has access to the bits in $Q_{2,OP}$, it can remove the corresponding part of the transmitted signal. Then for sufficiently large block length, ${\sf Rx}_2$ can decode the transmitted bits from $Q_{1,OP}$ and $Q_{1,{\sf C}_1}$ with decoding error probability less than or equal to $\epsilon$. After decoding and removing this part, ${\sf Rx}_2$ can decode the encoded bits from $Q_{2 \rightarrow 2|1}$ with decoding error probability less than or equal to $\epsilon$.

Phase $5$ goes on for 
\begin{align}
\left[  \frac{pq}{(1-q^2)} \left( p - q^2 \right)m + 2 m^{2/3} \right]/\left( p - \delta \right)
\end{align}
time instants. If an error occurs in decoding any of the encoded signals in Phase $5$, we consider it as error and halt the transmission strategy. It is straight forward to verify that at the end of Phase $5$, if the transmission is not halted, all states are empty and all bits are successfully delivered.

The probability that the transmission strategy halts at any point can be bounded by the summation of error probabilities of types I-II and the probability that an error occurs in decoding the encoded bits. Using Chernoff-Hoeffding bound and the results of~\cite{Elias}, we can show that the probability that the transmission strategy halts at any point approaches zero for $\epsilon, \delta \rightarrow 0$ and $m \rightarrow \infty$. Moreover, the total transmission requires
\begin{align}
\frac{1}{p} m + 6 m^{2/3}
\end{align}
time instants. Thus, ${\sf Tx}_1$ achieves a rate of $pq(1+q)$ while ${\sf Tx}_2$ achieves a rate of $p$. 

This completes the achievability proof of Theorem~\ref{THM:IC-DelayedCSIT}.


\section{Proof of Lemma~\ref{lemma:multicast}}
\label{Appendix:multicast}


In this appendix, we provide the proof of Lemma~\ref{lemma:multicast}. We first derive the outer-bound and then we describe the achievability. The outer-bound on $R_i$ is the same as in Section~\ref{sec:ConvInst}. 

Suppose there are encoders and decoders at the transmitters and receivers respectively, such that each receiver can decode both messages with arbitrary small decoding error probability as the block length goes to infinity. We have
\begin{align}
n & (R_1 + R_2 - \epsilon_n ) \nonumber \\
& \overset{(a)}\leq I(W_1,W_2;Y_1^n|G^n) \nonumber \\
& = H(Y_1^n|G^n) - H(Y_1^n|W_1,W_2,G^n) \nonumber \\
& \overset{(b)}= H(Y_1^n|G^n) \nonumber \\
& \overset{(c)}\leq \sum_{t=1}^n{H(Y_1[t]|G^n)} \leq \left( 1 - q^2 \right) n,
\end{align}
where $\epsilon_n \rightarrow 0$ as $n \rightarrow \infty$; and $(a)$ follows from the fact that the messages and $G^n$ are mutually independent, Fano's inequality, and the fact that ${\sf Rx}_1$ should be able to decode both messages; $(b)$ holds since the received signal $Y_1^n$ is a deterministic function of $\hbox{W}_1$, $\hbox{W}_2$, and $G^n$; and $(c)$ follows from the fact that conditioning reduces entropy. Dividing both sides by $n$ and let $n \rightarrow \infty$, we get 
\begin{align}
R_1 + R_2 \leq 1 - q^2.
\end{align}

Below, we provide the achievability proof of Lemma~\ref{lemma:multicast}. Let $\hbox{W}_i \in \{ 1,2,\ldots,2^{nR_i}\}$ denote the message of user $i$. 

In~\cite{Elias}, it has been shown that for a binary erasure channel with success probability $p$, and for any $\epsilon,\delta > 0$, as long as the communication  rate is less than or equal to $p - \delta$, we can have decoding error probability less than or equal to $\epsilon$.

Codebook generation is as follows. Transmitter $i$ creates $2^{nR_i}$ ($R_1 = p - \delta$ and $R_2 = pq - \delta$) independent codewords where each entry of the codewords is an i.i.d. Bernoulli $0.5$ RV. For message index $j$, transmitter $i$ will send the $j^{th}$ codeword. Note that we can view the channel from ${\sf Tx}_2$ to ${\sf Rx}_1$ as a binary erasure channel with success probability $pq$ (whenever $G_{11}[t] = 0$ and $G_{21}[t] = 1$, we get a clean observation of $X_2[t]$). Therefore, since $R_2 = pq - \delta$, ${\sf Rx}_1$ can decode $\hbox{W}_2$ with arbitrary small decoding error probability as $n \rightarrow \infty$ and remove $X_2^n$ from its received signal. After removing $X_2^n$,  we can view the channel from ${\sf Tx}_1$ to ${\sf Rx}_1$ as a binary erasure channel with success probability $p$ (whenever $G_{11}[t] = 1$, we get a clean observation of $X_1[t]$). Therefore, since $R_1 = p - \delta$, ${\sf Rx}_1$ can decode $\hbox{W}_1$ with arbitrary small decoding error probability as $n \rightarrow \infty$. Similar argument holds for ${\sf Rx}_2$. This completes the achievability proof of corner point
\begin{align}
\left( R_1, R_2 \right) = \left( p, pq \right).
\end{align}

Similarly, we can achieve corner point
\begin{align}
\left( R_1, R_2 \right) = \left( pq, p \right).
\end{align}

Therefore with time sharing, we can achieve the entire region as described in Lemma~\ref{lemma:multicast}.


\section{Achievability Proof of Theorem~\ref{THM:IC-DCSITFB}: Corner Point $\left( 1 - q^2, 0 \right)$}
\label{Appendix:cornerDelayedOFB}


By symmetry, it suffices to describe the achievability strategy for corner ponit 
\begin{align}
\left( R_1, R_2 \right) = \left( 1 - q^2, 0 \right),
\end{align}
when transmitters have delayed knowledge of channel state information and noiseless output feedback links are available from the receivers to the transmitters. 

Our achievability strategy is carried on over $b+1$ communication blocks each block with $n$ time instants. Transmitter one communicates fresh data bits in the first $b$ blocks and the final block is to help receiver one decode its corresponding bits. Transmitter and receiver two act as a relay to facilitate the communication between transmitter and receiver one. At the end, using our scheme we achieve rate tuple $\frac{b}{b+1} \left( 1 - q^2, 0 \right)$ as $n \rightarrow \infty$. Finally, letting $b \rightarrow \infty$, we achieve the desired corner point.

Let $\hbox{W}^j_1$ be the message of ${\sf Tx}_1$ in block $j$, $j=1,2,\ldots,b$. We assume $\hbox{W}^j_1 = a^j_1, a^j_2, \ldots, a^j_{m}$. We set
\begin{align}
n = \left( 1 - q^2 \right)^{-1}m + m^{2/3}.
\end{align}


{\bf Achievability strategy for block $1$}: At the beginning of the communication block, we assume that the bits at ${\sf Tx}_1$ are in queue (or state) $Q^1_{1 \rightarrow 1}$. At each time instant $t$, ${\sf Tx}_1$ sends out a bit from $Q^1_{1 \rightarrow 1}$, and this bit will leave this queue if at least one of the outgoing links from ${\sf Tx}_1$ was equal to $1$ at the time of transmission. On the other hand, ${\sf Tx}_2$ remains silent during the first communication block. If at the end of the communication block, queue $Q^1_{1 \rightarrow 1}$ is not empty, we declare error type-I and halt the transmission. 

At the end of first block, using output feedback links, transmitter two has access to the bits of ${\sf Tx}_1$ communicated in the first block. More precisely, ${\sf Tx}_2$ has access to the bits of ${\sf Tx}_1$ communicated in Cases $11,12,14,$ and $15$ during the first communication block. Note that the bits communicated in these cases are available at ${\sf Rx}_2$ and have to be provided to ${\sf Rx}_1$. Transmitter two transfers these bits to $Q^1_{2 \rightarrow 1|2}$. 

Assuming that the transmission is not halted, let $N^1_{2 \rightarrow 1|2}$ denote the number of bits in queue $Q^1_{2 \rightarrow 1|2}$. The transmission strategy will be halted and an error type-II will occur, if $N^1_{2 \rightarrow 1|2} > \mathbb{E}[N^1_{2 \rightarrow 1|2}] + pq m^{\frac{2}{3}}$. From basic probability, we know that
\begin{align}
\mathbb{E}[N^1_{2 \rightarrow 1|2}] & = \frac{\sum_{j=11,12,14,15}{\Pr\left( \mathrm{Case~j} \right)}}{1 - \sum_{i=9,10,13,16}{\Pr\left( \mathrm{Case~i} \right)}} m  \nonumber \\
& = (1-q^2)^{-1} pq m.
\end{align}

Using Chernoff-Hoeffding bound, we can show that the probability of errors of types I and II and decreases exponentially with $m$.

{\bf Achievability strategy for block $j,$ $j=2,3,\ldots,b$}: In the communication block $j$, ${\sf Tx}_2$ treats the bits in $Q^{j-1}_{2 \rightarrow 1|2}$ as its message and it uses a random code of rate $pq - \delta$ to transmit them. Note that the channel from ${\sf Tx}_2$ to ${\sf Rx}_1$ can be modeled as a point-to-point erasure channel (any time $G_{11}[t] = 1$ or $G_{21}[t] = 0$, we consider an erasure has taken place). Hence from~\cite{Elias}, we know that for any $\epsilon, \delta > 0$ and sufficiently large block length, a rate of $pq - \delta$ is achievable from ${\sf Tx}_2$ to ${\sf Rx}_1$ with decoding error probability less than or equal to $\epsilon$.  Note that at rate $pq - \delta$, both receivers will be able to decode and hence remove the transmitted signal of ${\sf Tx}_2$ at the end of communication block. If an error occurs in decoding the transmitted signal of ${\sf Tx}_2$, we consider it as error and halt the transmission strategy.

On the other hand, the transmission strategy for ${\sf Tx}_1$ is the same as block $1$ for the first $b$ blocks (all but the last block). At the end of communication block $j$, using output feedback links, transmitter two has access to the bits of ${\sf Tx}_1$ communicated in Cases $11,12,14,$ and $15$ during the communication block $j$. Transmitter two transfers these bits to $Q^j_{2 \rightarrow 1|2}$. If at the end of the communication block, queue $Q^j_{1 \rightarrow 1}$ is not empty, we declare error type-I and halt the transmission. 

Assuming that the transmission is not halted, let $N^j_{2 \rightarrow 1|2}$ denote the number of bits in queue $Q^j_{2 \rightarrow 1|2}$. The transmission strategy will be halted and an error type-II will occur, if $N^j_{2 \rightarrow 1|2} > \mathbb{E}[N^j_{2 \rightarrow 1|2}] + pq m^{\frac{2}{3}}$. From basic probability, we know that
\begin{align}
\mathbb{E}[N^j_{2 \rightarrow 1|2}] & = \frac{\sum_{j=11,12,14,15}{\Pr\left( \mathrm{Case~j} \right)}}{1 - \sum_{i=9,10,13,16}{\Pr\left( \mathrm{Case~i} \right)}} m  \nonumber \\
& = (1-q^2)^{-1} pq m.
\end{align}

Using Chernoff-Hoeffding bound, we can show that the probability of errors of types I and II and decreases exponentially with $m$.

{\bf Achievability strategy for block $b+1$}: Finally in block $b+1$, no new data bit is transmitted and ${\sf Tx}_2$ only communicates the bits of ${\sf Tx}_1$ communicated in the previous block in Cases $11,12,14,$ and $15$ as described before.

We can show that the probability that the transmission strategy halts at any point approaches zero as $m \rightarrow \infty$.

{\bf Decoding}: At the end of block $j+1$, ${\sf Rx}_1$ decodes the transmitted message of ${\sf Tx}_2$ in block $j+1$ and removes it from the received signal. Together with the bits it has obtained during block $j$, it can decode message $\hbox{W}^{j}_1$. Using similar idea, ${\sf Rx}_2$ uses backward decoding to cancel out interfernce in the previous blocks to decode all messages.

This completes the achievability proof for corner ponit 
\begin{align*}
\left( R_1, R_2 \right) = \left( 1 - q^2, 0 \right).
\end{align*}

\bibliographystyle{ieeetr}
\bibliography{bib_nsi}

\begin{thebibliography}{10}

\bibitem{BFICAllerton}
A.~Vahid, M.~Maddah-Ali, and A.~Avestimehr, ``Interference channel with binary
  fading: Effect of delayed network state information,'' in {\em 49th Annual
  Allerton Conference on Communication, Control, and Computing}, pp.~894--901,
  2011.

\bibitem{BFICISIT2012}
A.~Vahid, M.~Maddah-Ali, and A.~Avestimehr, ``Binary fading interference
  channel with delayed feedback,'' in {\em Proceedings of IEEE International
  Symposium on Information Theory (ISIT)}, pp.~1882--1886, 2012.

\bibitem{Shannon}
C.~E. Shannon, ``The zero error capacity of a noisy channel,'' {\em IRE
  Transactions on Information Theory}, Sept. 1956.

\bibitem{Maddah-Tse-Allerton}
M.~A. Maddah-Ali and D.~N.~T. Tse, ``Completely stale transmitter channel state
  information is still very useful,'' in {\em Forty-Eighth Annual Allerton
  Conference on Communication, Control, and Computing}, Sept. 2010.

\bibitem{MISOBCAllerton}
A.~Vahid, M.~A. Maddah-Ali, and A.~S. Avestimehr, ``Approximate capacity of the
  two-user {MISO} broadcast channel with delayed {CSIT},'' in {\em Fifty-First
  Annual Allerton Conference on Communication, Control, and Computing}, 2013.

\bibitem{MISOBCISIT}
A.~Vahid, M.~A. Maddah-Ali, and A.~S. Avestimehr, ``Approximate capacity region
  of the {MISO} broadcast channel with delayed {CSIT}.''
\newblock Available on Arxiv.

\bibitem{GhasemiX1}
A.~Ghasemi, A.~S. Motahari, and A.~K.~. Khandani, ``On the degrees of freedom
  of {X} channel with delayed {CSIT},'' in {\em 2011 IEEE International
  Symposium on Information Theory Proceedings}, (Saint-Petersburg, Russia),
  pp.~909--912, July 2011.

\bibitem{Vaze_DCSIT_MIMO_BC}
C.~S. Vaze and M.~K. Varanasi, ``The degrees of freedom region of the two-user
  {MIMO} broadcast channel with delayed {CSIT},'' in {\em IEEE International
  Symposium on Information Theory Proceedings (ISIT)}, pp.~199--203, IEEE,
  2011.

\bibitem{Jafar_Retrospective}
H.~Maleki, S.~A. Jafar, and S.~Shamai, ``Retrospective interference alignment
  over interference networks,'' {\em IEEE Journal of Selected Topics in Signal
  Processing}, vol.~6, no.~3, pp.~228--240, 2012.

\bibitem{abdoli2011degrees}
M.~Javad, A.~Ghasemi, and A.~K. Khandani, ``On the degrees of freedom of
  {K}-user {SISO} interference and {X} channels with delayed {CSIT},'' {\em
  IEEE transactions on Information Theory}, vol.~59, no.~10, pp.~6542--6561,
  2013.

\bibitem{tandon2012degrees}
R.~Tandon, S.~Mohajer, H.~V. Poor, and S.~Shamai, ``Degrees of freedom region
  of the {MIMO} interference channel with output feedback and delayed {CSIT},''
  {\em IEEE Transactions on Information Theory}, vol.~59, no.~3,
  pp.~1444--1457, 2013.

\bibitem{vaze2011degrees}
C.~Vaze and M.~Varanasi, ``The {D}egrees-of-{F}reedom region of the {MIMO}
  interference channel with {S}hannon feedback,'' {\em IEEE Transactions on
  Information Theory}, vol.~59, no.~8, pp.~4798--4810, 2013.

\bibitem{vaze2012degrees}
C.~Vaze and M.~Varanasi, ``The degrees of freedom region and interference
  alignment for the mimo interference channel with delayed csit,'' {\em IEEE
  Transactions on Information Theory}, vol.~58, no.~7, pp.~4396--4417, 2012.

\bibitem{Abdoli2011IC-X-Arxiv}
M.~J. Abdoli, A.~Ghasemi, and A.~K. Khandani, ``On the degrees of freedom of
  {$K$}-user {SISO} interference and {X} channels with delayed {CSIT},'' {\em
  Submitted to IEEE Transactions on Information Theory, Available online at:
  Arxiv preprint arXiv:1109.4314}, 2011.

\bibitem{ADT10}
A.~S. Avestimehr, S.~Diggavi, and D.~Tse, ``Wireless network information flow:
  A detreministic approach,'' {\em IEEE Transactions on Information Theory},
  vol.~57, Apr. 2011.

\bibitem{Guy}
G.~Bresler and D.~Tse, ``The two-user gaussian interference channel: a
  deterministic view,'' {\em European Transactions on Telecommunications},
  vol.~19, no.~4, pp.~333--354, 2008.

\bibitem{Guy2}
G.~Bresler, A.~Parekh, and D.~Tse, ``The approximate capacity of the
  many-to-one and one-to-many {G}aussian interference channels,'' {\em IEEE
  Transactions on Information Theory}, vol.~56, no.~9, pp.~4566--4592, 2010.

\bibitem{Suh}
C.~Suh and D.~Tse, ``Feedback capacity of the gaussian interference channel to
  within 2 bits,'' {\em IEEE Transactions on Information Theory}, vol.~57,
  no.~5, pp.~2667--2685, 2011.

\bibitem{vahid2010two}
A.~Vahid and A.~S. Avestimehr, ``The two-user deterministic interference
  channel with rate-limited feedback,'' in {\em IEEE International Symposium on
  Information Theory Proceedings (ISIT)}, pp.~460--464, 2010.

\bibitem{AlirezaFB}
A.~Vahid, C.~Suh, and A.~S. Avestimehr, ``Interference channels with
  rate-limited feedback,'' {\em IEEE Transactions on Information Theory},
  vol.~58, no.~5, pp.~2788--2812, 2012.

\bibitem{avestimehr2010capacity}
A.~Avestimehr, A.~Sezgin, and D.~Tse, ``Capacity of the two-way relay channel
  within a constant gap,'' {\em European Transactions on Telecommunications},
  vol.~21, no.~4, pp.~363--374, 2010.

\bibitem{Ave3}
A.~Sezgin, A.~Avestimehr, M.~Khajehnejad, and B.~Hassibi, ``Divide-and-conquer:
  Approaching the capacity of the two-pair bidirectional gaussian relay
  network,'' {\em IEEE Transactions on Information Theory}, vol.~58, no.~4,
  pp.~2434--2454, 2012.

\bibitem{vahid2013communication}
A.~Vahid, M.~A. Maddah-Ali, and A.~S. Avestimehr, ``Communication through
  collisions: Opportunistic utilization of past receptions,'' {\em accepted for
  publication in IEEE Infocom 2014. arXiv preprint arXiv:1312.0116}, 2013.

\bibitem{Gaar}
N.~T. Gaarder and J.~K. Wolf, ``The capacity region of a multiple-access
  discrete memoryless channenl can increase with feedback,'' {\em IEEE
  Transactions on Information Theory}, vol.~21, pp.~100--102, Jan. 1975.

\bibitem{Oza}
L.~Ozarow, ``The capacity of the white gaussian multiple access channel with
  feedback,'' {\em IEEE Transactions on Information Theory}, vol.~30, no.~4,
  pp.~623--629, 1984.

\bibitem{issa2013two}
I.~Issa, S.~L. Fong, and A.~S. Avestimehr, ``Two-hop interference channels:
  {I}mpact of linear schemes,'' {\em arXiv preprint arXiv:1309.0898}, 2013.

\bibitem{Hoeffding}
W.~Hoeffding, ``Probability inequalities for sums of bounded random
  variables,'' {\em Journal of the American Statistical Association}, vol.~58,
  no.~301, pp.~13--30, 1963.

\bibitem{Chernoff}
J.~Phillips, ``Chernoff-hoeffding inequality and applications,'' {\em arXiv
  preprint arXiv:1209.6396}, 2012.

\bibitem{AlirezaNoCSIT}
A.~Vahid, M.~A. Maddah-Ali, and A.~S. Avestimehr, ``{B}inary {F}ading
  {I}nterference {C}hannel with {N}o {CSIT},'' {\em submitted to International
  Symposium on Information Theory (ISIT)}, 2014.

\bibitem{vahid2014binary}
A.~Vahid, M.~A. Maddah-Ali, and A.~S. Avestimehr, ``{B}inary {F}ading
  {I}nterference {C}hannel with {N}o {CSIT},'' {\em submitted to IEEE
  Transactions on Information Theory. arXiv preprint arXiv:1405.0203}, 2014.

\bibitem{tse2012fading}
D.~N. Tse and R.~D. Yates, ``Fading broadcast channels with state information
  at the receivers,'' {\em IEEE Transactions on Information Theory}, vol.~58,
  no.~6, pp.~3453--3471, 2012.

\bibitem{Guo}
Y.~Zhu and D.~Guo, ``Ergodic fading {Z}-interference channels without state
  information at transmitters,'' {\em IEEE Transactions on Information Theory},
  vol.~57, no.~5, pp.~2627--2647, 2011.

\bibitem{Jafar_ergodic}
B.~Nazer, M.~Gastpar, S.~A. Jafar, and S.~Vishwanath, ``Ergodic interference
  alignment,'' {\em IEEE Transactions on Information Theory}, vol.~58, no.~10,
  pp.~6355--6371, 2012.

\bibitem{gou2011aligned}
T.~Gou, S.~A. Jafar, S.~W. Jeon, and S.~Y. Chung, ``Aligned interference
  neutralization and the degrees of freedom of the $2\times 2\times 2$
  interference channel,'' in {\em Proceedings of IEEE International Symposium
  on Information Theory (ISIT)}, pp.~2751--2755, 2011.

\bibitem{shomorony2011two-unicast}
I.~Shomorony and A.~S. Avestimehr, ``Two-unicast wireless networks:
  Characterizing the degrees-of-freedom,'' {\em IEEE Transactions on
  Information Theory}, vol.~59, no.~1, pp.~353--383, 2013.

\bibitem{shomorony2012KKK}
I.~Shomorony and A.~S. Avestimehr, ``Degrees of freedom of two-hop wireless
  networks: ``{E}veryone gets the entire cake'','' {\em to appear in IEEE
  transactions on Information Theory. arXiv preprint arXiv:1210.2143}.

\bibitem{Elias}
P.~Elias, ``The noisy channel coding theorem for erasure channels,'' {\em The
  American Mathematical Monthly}, pp.~853--862, 1974.

\end{thebibliography}

\begin{IEEEbiographynophoto} {Alireza Vahid} received the B.Sc. degree in electrical engineering from Sharif University of Technology, Tehran, Iran, in 2009, and the M.Sc. degree and Ph.D. degree in electrical and computer engineering both from Cornell University, Ithaca, NY, in 2012 and 2014 respectively. As of September 2014, he is a postdoctoral scholar at Information Initiative at Duke University, Durham, NC. His research interests include information theory and wireless communications, statistics and machine learning.

He has received the Director's Ph.D. Teaching Assistant Award in 2010 from the school of electrical and computer engineering, Cornell University, and Jacobs Scholar Fellowship in 2009. He has also received Qualcomm Innovation Fellowship in 2013 for his research on ``Collaborative Interference Management''.
\end{IEEEbiographynophoto}

\begin{IEEEbiographynophoto} {Mohammad Ali Maddah-Ali} received the B.Sc. degree from Isfahan University of Technology, Isfahan, Iran,  the M.A.Sc. degree from the University of Tehran, Tehran, Iran,  and PhD degree from University of Waterloo, Waterloo, ON, Canada,  all in electrical engineering. Then he joined the Wireless Technology Laboratories, Nortel Networks, Ottawa, ON, Canada, for one year.  From January 2008 to August 2010, he was a Postdoctoral Fellow at the Department of Electrical Engineering and Computer Sciences in the University of California at Berkeley. Since September 2010, he has been at Bell Laboratories, Alcatel-Lucent, Holmdel, NJ, as a communication network research scientist. 
\end{IEEEbiographynophoto}

\begin{IEEEbiographynophoto} {Amir Salman Avestimehr} received the B.S. degree in electrical engineering from Sharif University of Technology, Tehran, Iran, in 2003 and the M.S. degree and Ph.D. degree in electrical engineering and computer science, both from the University of California, Berkeley, in 2005 and 2008, respectively. He is currently an Associate Professor at the EE department of University of Southern California, Los Angeles, CA. He was also a postdoctoral scholar at the Center for the Mathematics of Information (CMI) at the California Institute of Technology, Pasadena, in 2008. His research interests include information theory, the theory of communications, and their applications.

Dr. Avestimehr has received a number of awards for research and teaching, including the Communications Society and Information Theory Society Joint Paper Award in 2013, the Presidential Early Career Award for Scientists and Engineers (PECASE) in 2011, the Michael Tien 72 Excellence in Teaching Award in 2012, the Young Investigator Program (YIP) award from the U. S. Air Force Office of Scientific Research in 2011, the National Science Foundation CAREER award in 2010, and the David J. Sakrison Memorial Prize in 2008. He is currently an Associate Editor for the IEEE Transactions on Information Theory. He has also been a Guest Associate Editor for the IEEE Transactions on Information Theory Special Issue on Interference Networks and General Co-Chair of the 2012 North America Information Theory Summer School and the 2012 Workshop on Interference Networks.
\end{IEEEbiographynophoto}

\end{document}